\def\notshow#1\notshowend{} %
\pgfplotsset{compat=1.14}
\def\bb#1\eb{\textcolor{blue}{#1}} 
\def\br#1\er{\textcolor{red}{#1}} %
\def\bm#1\em{\textcolor{magenta}{#1}} %
  \newcommand{\N}{\mathds{N}}     
  \newcommand{\R}{\mathds{R}}     
 \newcommand{\C}{\mathcal{C}}
    \newcommand{\Lo}{\mathds{L}}     
\newcommand{\LL}{{\mathbb{L}}}
\newcommand{\Ver}{\mathcal{V}}
\newcommand{\Hor}{\mathcal{H}}
\newcommand{\Ng}{\mathcal{N}} 
\newcommand{\torN}{\mathrm{Tor}}
  \newcommand{\RN}{\mathcal{R}}   
\newtheorem{thm}{Theorem}[section]
\newtheorem{prop}[thm]{Proposition}
\newtheorem{lemma}[thm]{Lemma}
\theoremstyle{definition}
\newtheorem{defi}[thm]{Definition}
\newtheorem{rem}[thm]{Remark}
\newtheorem{note}[thm]{Note}
\newcommand{\noi}{\noindent}
\newcommand{\be}{\begin{equation}}
\newcommand{\ee}{\end{equation}}
\newcommand{\ben}{\begin{enumerate}}
\newcommand{\een}{\end{enumerate}}
\newcommand{\bit}{\begin{itemize}}
\newcommand{\eit}{\end{itemize}}
\newcommand{\edoc}{\end{document}}
\newcommand{\bq}{\begin{quote}}
\newcommand{\eq}{\end{quote}}
\title[]{
On the foundations  and  applications   of \\ Lorentz-Finsler Geometry  }
\author[M. S\'anchez]{Miguel S\'anchez} \address{IMAG \&  Departamento de Geometr\'{\i}a y Topolog\'{\i}a, Facultad de Ciencias, \hfill\break\indent Universidad de Granada,\hfill\break\indent Campus Fuentenueva s/n, \hfill\break\indent 18071 Granada, Spain}\email{sanchezm@ugr.es}
\begin{document}

\begin{abstract} 
Finslerian extensions of Special and General Relativity---commonly referred to as Very Special and Very General Relativity--- necessitate the development of a unified Lorentz-Finsler geometry. However, the scope of this geometric framework extends well beyond relativistic physics. Indeed, it offers powerful tools for modeling wave propagation in classical mechanics, discretizing spacetimes in classical and relativistic settings, and supporting effective theories in fundamental physics. Moreover, Lorentz-Finsler geometry provides a versatile setting that facilitates the resolution of problems within Riemannian, Lorentzian, and Finslerian geometries individually. This work presents a plain introduction to the subject, reviewing  foundational concepts, key applications, and future prospects. 

The reviewed topics include (i) basics on the setting of cones, Finsler and  Lorentz-Finsler metrics and their (nonlinear, anisotropic and linear) connections, (ii) the global structure of Lorentz-Finsler manifolds and its space of null geodesics, (iii) links among Riemannian, Finsler and Lorentz geometries, (iv)  applications in classical settings as   wildfires and seisms propagation, and  discretization in classical and relativistic settings with quantum prospects, and (v)  Finslerian variational approach to Einstein equations. The new results include the splitting of globally hyperbolic Finsler spacetimes, in addition to the analysis  of several extensions of the Lorentz setting, as  the case of timelike boundaries.

\vspace{5mm}

\noindent 
{\em MSC:}  53C50, 53C60 (primary); 83D05, 83C05, 35L05, 58J45 (secondary) 
\\

\noindent {\em Keywords:} Lorentz, Riemann and Finsler Geometries; Very Special and General Relativity;  globally hyperbolic splittings; space of null geodesics; wildfires and seisms monitoring; discretization of spacetimes, Fermat, Zermelo and Snell problems;  Finsler gravity; Finslerian Hilbert Einstein and  Palatini variational approches. \\

\end{abstract}
\maketitle

\newpage

{\small
\tableofcontents
}

\newpage

\section{Introduction}\label{s_Intro}

Recently, Lorentz-Finsler Geometry has attracted the attention of many researchers due to the convergence of several active areas of development, including:
(a) Finslerian modifications of Special and General Relativity, which require the establishment of a standard Lorentz-Finsler geometric framework;
(b) rheonomic applications of Finsler Geometry, which naturally lead to cone structures akin to those found in Lorentz-Finsler manifolds; and
(c) the interplay among Riemannian, Lorentzian, and Finslerian geometries, from which Lorentz-Finsler metrics emerge as a compelling unifying concept.

This survey aims to provide an overview of these advances
---necessarily non-exhaustive. Readers are assumed to have a basic background in Riemannian Geometry and elementary Lorentz/relativistic concepts, but not necessarily prior training in Finsler Geometry.

The article is structured into five main sections, in addition to this Introduction.
Section \ref{s2} provides a summary of foundational Finsler concepts and their extension to the Lorentz-Finsler setting.
Sections \ref{s3}---\ref{s6} 
  develop four  broad topics, which can be read in an essentially  independent way.
Sections~\ref{s3}-\ref{s5} are more geometrically oriented, covering global Finsler structures and applications to classical geometries.
Sections \ref{s4}-\ref{s6} focus on applications, including  phenomena in applied classical Physics and Finslerian approaches to gravity.


In Section 2, we adopt a pedagogical, non-technical approach to introduce key concepts such as cone structures $\mathcal{C}$, Lorentz-Finsler metrics $L$, various types of connections and the viewpoints of Hamiltonian mechanics and contact geometry.
In particular, we stress the viewpoint  of 
anisotropic connections, developed by MA Javaloyes and coworkers \cite{J19, J20, JSV_Gelocor, JSo},  which is  intermediate between non-linear and Finsler connections and is used in some of the references. 

Several choices must be made among the many possibilities found in the literature. Our guiding principle is to establish a reasonable foundational framework that can be adapted for specific studies (e.g., low regularity, singularities, physical models). Some of our choices, further justified throughout the article, include:
(i) smooth elements, typically at least $C^2$ 
 thus avoiding  singular directions (but see for example footnote~\ref{f_smooth});
(ii)~connected manifolds with semi-Finsler metrics, that is, requiring the fundamental tensor $g$ is non-degenerate and of constant signature (in parallel with the semi-Riemannian framework of O'Neill \cite{O});
(iii)~strong convexity of Finsler indicatrices;
(iv)~in Lorentzian signature, restriction to causal vectors within a cone structure to focus on causality and distance-maximizing properties, avoiding the arbitrariness of metric extension beyond the cone;
(v)~retention of the timelike/spacelike terminology in the semi-Finsler case (as in \cite{O}), although this distinction is purely conventional for non-Lorentzian signatures.

Section \ref{s3} develops globally hyperbolic Finsler spacetimes. Notice that, as in the Lorentz case, Finsler global hyperbolicity draws some analogies with metric Riemannian completeness.  In \S \ref{s3_1}, we focus  on their global structure and prove the possibility to split it in an orthogonal way, extending the Lorentz case (Theorem.~\ref{t_splitting}); this  includes the case of  cone structures  -with-timelike-boundary (Theorem.~\ref{t_splitting2}). With this aim, we briefly  review the techniques used so far   and explain detailedly how those in \cite{BS03, BS05, AFS} permit to extend the results to the Lorentz-Finsler case, including  additional possibilities of these splittings. In \S \ref{s3_2}, we consider the space of cone geodesics $\Ng$, introduced   (in the Lorentz case)   by R. Low \cite{Low1989} following seminal ideas by R. Penrose \cite{Penrose0}.    Very recently, J. Hedicke    \cite{Hedickepreprint} (see also  \cite{HedickePhD}) has considered     $\Ng$ systematically in the Lorentz-Finsler case, and the case with timelike boundary has been studied in \cite{HS25}.      Finally,  Finslerian extensions of Lorentzian singularity theorems are also briefly reviewed.


Section \ref{s5} highlights a special interplay between Lorentzian and Finslerian viewpoints. In particular \S\ref{s5_2}, Finsler geometry aids in describing the causality of stationary spacetimes, and wind Finsler structures extend this to SSTK spacetimes. Conversely \S\ref{s5_1}, Lorentzian cone structures provide insights into Finsler and Riemannian problems, such as the classification of Randers metrics with constant flag curvature and the strong refocusing of Riemannian geodesics. Finally \S \ref{s5_3}, the causal boundary for spacetimes is then shown to encompass classical (Cauchy, Gromov) Riemannian boundaries  and their Finslerian counterparts.

 Section \ref{s4} begins with a review on  the Fermat principle, Zermelo navigation problem and Snell's law within the general framework of cone structures, \S \ref{s4_1}. This yields  multiple applications, among them wildfire and seismic monitoring,  \S \ref{s4_2}.  Furthermore, the discretization of these classical problems opens avenues for applications in Numerical Relativity and other foundational  theories in Physics.    
 
Section \ref{s6} briefly reviews Finslerian approaches to Relativity, emphasizing geometric aspects.  In \S \ref{s6_1} we start at the origins  of Very Special and Very General Relativity and compare them with classical frameworks. Then, we focus in the Finsler variational approach analog to Einstein Hilbert \S \ref{s6_20} and review briefly its vacuum solutions \S \ref{s6_3}. Finally,    a comparison with  the Einstein-Palatini approach is carried out, \S \ref{s6_4}.

Summing up, this survey aims to bridge the gap between classical  and emerging Finslerian approaches, highlighting both theoretical richness and practical relevance.  Ultimately, Lorentz-Finsler Geometry stands as a promising field for future exploration, offering new insights into the structure of spacetime and its applications in Geometry and Physics.

\newpage

\section{
Setting  on cones, Lorentz-Finsler metrics  and connections}\label{s2}

The main references along this section are \cite{JS20,   JSV_Gelocor,  Hedickepreprint,  Min15} and \cite{Dahl} (the latter considers only the Finsler case, but it is easily adaptable to the Lorentz-Finsler one).

\subsection{Minkowski,  wind Minkowski and  Lorentz-Minkowski norms} 

In this introductory section,  we primarily follow  \cite{JS20}, adopting its Euclidean background,  conventions and intuitions derived from \cite{JS14, CJS24}. It  is worth pointing out that some low regularity issues on norms extend to the more general framework of Banach spaces \cite{Day} but we will not deepen on this. 

\subsubsection{Emergence of non-symmetric norms from  anisotropic propagation}\label{sss_motivation}

Consider a moving object  or a propagating wave  respect to an observer in a classical (non-relativistic) setting, for example, a Zeppelin or sound wave in a (possibly windy) air respect to an observer on earth. The velocity depends on the oriented direction, thus,  at each point, the space of (maximum) velocities of propagation  yields a closed (compact without boundary) hypersurface $S$, thus, enclosing an open domain $D$. Let us assume that $D$ includes the  velocity 0 (in our example, this means that  the wind is not too strong, we will discuss this case later, see \S \ref{sss_wind}). Notice that $S$ may be non-symmetric respect to $0$, as travelling in the oriented direction of the wind may be  faster than on the reversed one.  
We will be interested in the  infimum time to arrive from a first point  $x$ to a second one $\bar x$. When the anisotropic velocities are independent on the point, a reasoning in affine geometry   (as around  Fig.~\ref{f_tineq} below),  shows that the original $S$ can be effectively replaced  by its convex hull (Fig.~\ref{fig_VelocidadConvexa}).
Consistently, we will assume that $S$ is convex in what follows.

	\begin{figure} 
		
		\includegraphics[width=0.6\textwidth]{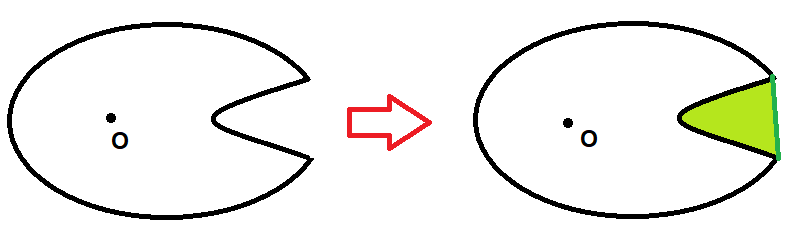}
		\caption{Hypersurface  $S$ enclosing 0 and its convex hull. }\label{fig_VelocidadConvexa}
	\end{figure}

Some remarks on   convexity for  a  {\em smooth} hypersurface $S= \partial D$   (in a real   finite $n$-dimensional vector space) $V$ are in order. Let $\xi$ be any vector field transverse to $S$ and  pointing to $D$, and
$\sigma^{\xi}$ the corresponding second fundamental form\footnote{
Notice that $\sigma^{\xi}$ is obtained at each 
$v\in S$ in a standard way by decomposing second derivatives in the direction 
of $\xi_v$ and $T_vS$. The property of being positive definite 
or semi-definitive for $\sigma^{\xi}$ is independent of the chosen $\xi$, in particular, one can  
use any auxiliary Euclidean scalar produt and take  $\xi$ 
as a unit vector field ortogonal to $S$.}.  
 The following notions on convexity for $S$ (Fig. \ref{f_3convexities}) are increasingly restrictive:
\bit\item
 {\em Infinitesimally convex} (or simply {\em convex}):  $\sigma^{\xi}$ is positive semidefinite at each    $v\in S$.   
\item {\em Strictly convex}:  each tangent hyperplane $T_vS$   intersects $S$ only at $v$. 
\item {\em Strongly convex}:  $\sigma^{\xi}$ is positive semidefinite  at each    $v\in S$. 
\eit  In terms of the interior domain $D$,
 $S$ is convex when, for each $v,w\in \bar D$, the segment joining them lies in $\bar D$ and  strictly convex if the segment lies in $D$ except at most the endpoints $v$, $w$ (see Fig. \ref{f_convexD}).

\begin{center}
	\begin{figure}
\includegraphics[width=1.0\textwidth]{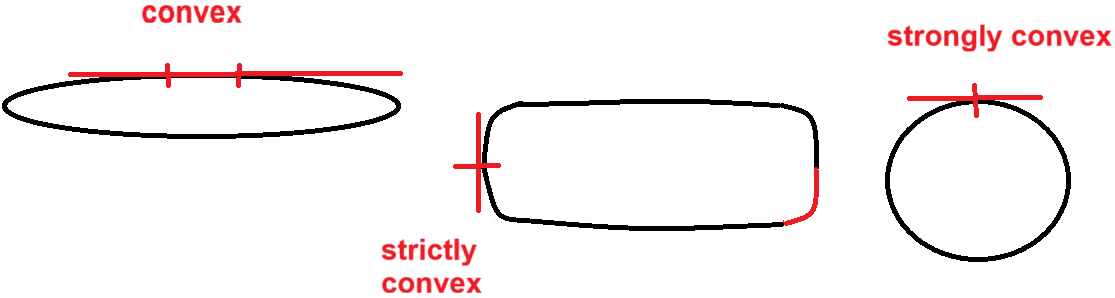}
	\caption{Types of convexity and implications: \newline
		  Infinitesimal: permits flat parts in $S$. \newline
		 Strict: norm with strict triangle ineq.,  bijective $\flat$. 	\newline
		 Strong: smooth $\sharp:= \flat^{-1}$. 
	} \label{f_3convexities}
\end{figure} 
\end{center}

\begin{center}

\begin{figure} 
	$$	\includegraphics[width=.5\textwidth]{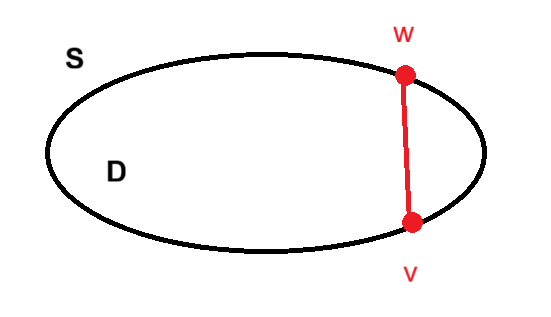}
	 $$
\caption{(Strict) conv. of $S$   equivalent to (strict) conv. of $D$.}\label{f_convexD}
\end{figure}
\end{center}
$S$ can be regarded as the indicatrix (unit sphere) for a non (necessarily) symmetric norm.

\begin{prop}\label{p_norm}
Let $S$  be a smooth convex closed hypersurface of $V$ such that its  open interior domain $D$   is precompact and $0\in D$ (in particular, $S=\partial D$). 
For  each $v\in V$, there exists a unique scalar  $\parallel v\parallel \geq 0$ satisfying that $v$ belongs to the hypersurface
$\lambda S$. Then, the    map $\parallel \cdot \parallel: V \rightarrow \R$ is a possibly non-symmetric norm, that is, it satisfies:

\bit \item  Positiveness: $\parallel v \parallel \geq 0$ with equality if and only if $v=0$
\item  Positive homogeneity: $\parallel \lambda v \parallel =  \lambda  \parallel v \parallel $,  for $\lambda \geq 0$   
\item Triangle inequality: $\parallel v+w\parallel \leq \parallel v \parallel + \parallel w\parallel$
\eit
\end{prop}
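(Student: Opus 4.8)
The plan is to recognise $\parallel\cdot\parallel$ as (essentially) the Minkowski gauge of the convex body $D$ and to read off the three axioms from elementary convexity. First I would settle existence and uniqueness of $\parallel v\parallel$. For $v=0$: since $0\in D$ and $D$ is open, $0\notin S$, so the only $\lambda$ with $0\in\lambda S$ is $\lambda=0$; set $\parallel 0\parallel=0$. For $v\ne 0$, I would look at the half-line $R_v=\{tv:t\ge 0\}$ and the set $I_v=\{t\ge 0:tv\in\bar D\}$. This set is closed (pre-image of the closed set $\bar D$), convex (intersection of $R_v$ with $\bar D$, which is convex by the stated characterisation of convexity of $S$ in terms of $\bar D$), bounded (because $\bar D$ is compact, $D$ being precompact), and it contains a neighbourhood of $0$ in $[0,\infty)$ (because $D$ is open around $0$); hence $I_v=[0,t_0]$ with $0<t_0<\infty$. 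The endpoint $t_0v$ lies in $\bar D\setminus D=S$ — if it were in the open set $D$, some $(t_0+\delta)v$ would still lie in $D$, contradicting maximality of $t_0$ — so $\parallel v\parallel:=1/t_0$ does the job. For uniqueness, if also $v\in\mu S$ with, say, $\mu>\parallel v\parallel$ (both positive since $0\notin S$), then $v/\mu$ is a proper convex combination of $0\in D$ and $v/\parallel v\parallel\in\bar D$; by the standard fact that such combinations lie in the interior of a convex body — exactly the picture in Fig.~\ref{f_convexD} — we get $v/\mu\in D$, contradicting $v/\mu\in S$. Positiveness falls out at the same time: $\parallel v\parallel=1/t_0>0$ for $v\ne 0$, while $\parallel v\parallel=0$ forces $v=0$.

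Next I would record a comparison principle: for $\lambda>0$ one has $\parallel u\parallel\le\lambda$ if and only if $u\in\lambda\bar D$ (and the strict inequality corresponds to $u\in\lambda D$), which is immediate from $\parallel u\parallel\le\lambda\Leftrightarrow 1/\lambda\le t_0(u)=\sup I_u\Leftrightarrow u/\lambda\in\bar D$. Positive homogeneity is then a one-liner via uniqueness: for $\lambda>0$, $v\in\parallel v\parallel S$ iff $\lambda v\in(\lambda\parallel v\parallel)S$, hence $\parallel\lambda v\parallel=\lambda\parallel v\parallel$; the case $\lambda=0$ is trivial.

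The real content is the triangle inequality, and this is where convexity of $\bar D$ is genuinely used. The degenerate cases $v=0$ or $w=0$ are trivial, so put $a=\parallel v\parallel>0$, $b=\parallel w\parallel>0$; then $v/a,w/b\in S\subset\bar D$, and convexity of $\bar D$ gives
\[
\frac{v+w}{a+b}=\frac{a}{a+b}\,\frac{v}{a}+\frac{b}{a+b}\,\frac{w}{b}\in\bar D ,
\]
whence, applying the comparison principle with $u=v+w$ and $\lambda=a+b$, we obtain $\parallel v+w\parallel\le a+b=\parallel v\parallel+\parallel w\parallel$.

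I expect the only mildly delicate point to be the interior-segment fact invoked in the uniqueness step (a proper convex combination of an interior point with a point of $\bar D$ is interior); I would either cite it as classical or prove it in two lines by rescaling a small ball around the interior point. Everything else is routine bookkeeping with the gauge, using only that $D$ is open, precompact, convex and contains the origin.
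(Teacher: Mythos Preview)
Your proof is correct and follows essentially the same route as the paper: the triangle inequality is obtained from the convexity of $\bar D$ by writing $(v+w)/(a+b)$ as a convex combination of the unit vectors $v/a$ and $w/b$, which is exactly the parallelogram argument sketched in Fig.~\ref{f_tineq} (there specialized to $a=b=1$ after normalising by homogeneity). Your treatment of existence, uniqueness and the comparison principle is more detailed than the paper's sketch, but the key step is the same.
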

 To check that  the (strict) triangle inequality is equivalent  to the (strict) convexity of $S$, a geometric proof for unit vectors $v,w$ is depicted in Fig. \ref{f_tineq} (see also, for example, \cite[Prop.2.3 ]{JS14} or \cite{BOZ} for infinite dimension). 

\begin{figure}

\begin{center} 		
		
		\includegraphics[width=1.05\textwidth]{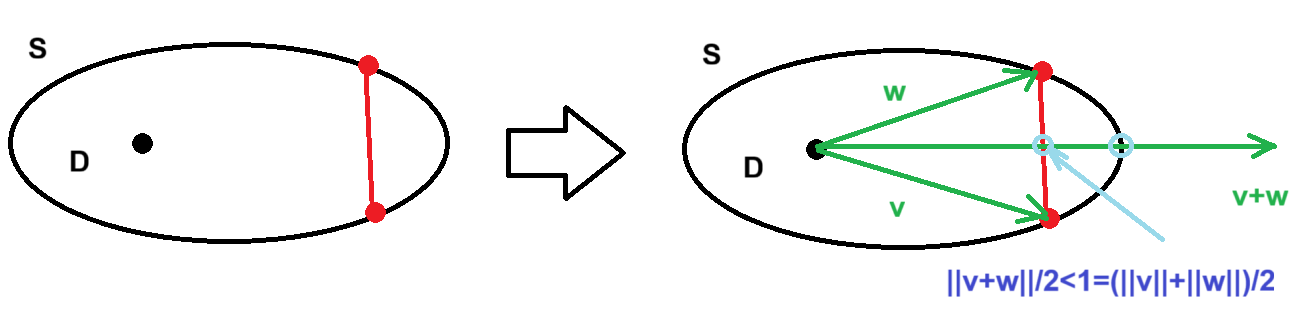}			
	\end{center} 
\caption{(Strict) convexity implies that the center of the paralelogram with sides $v,w$ lies in (the interior of) $\bar D$, thus yielding (strict) triangle inequality.  analogous picture}\label{f_tineq}
\end{figure}

\subsubsection{Minkowski norms}
Norms are never smooth at 0 and, noticeably:

\begin{lemma}\label{l_norms}
Let $\parallel \cdot \parallel$ be a non-symmetric norm. Its square $\parallel \cdot \parallel^2$ is smooth at 0 if and only if it comes from an Euclidean scalar product.
\end{lemma}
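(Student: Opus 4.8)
The plan is to play off the positive $2$-homogeneity of $\|\cdot\|^2$ against the second-order Taylor expansion that smoothness at the origin provides. The easy implication I would dispatch first: if $\|v\|^2=\langle v,v\rangle$ for a Euclidean scalar product $\langle\cdot,\cdot\rangle$, then $\|\cdot\|^2$ is a homogeneous quadratic polynomial in linear coordinates, hence $C^\infty$ everywhere and in particular at $0$. So the whole content lies in the converse.

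For the converse I would set $F:=\|\cdot\|^2$ and assume $F$ is smooth --- in fact $C^2$ suffices --- on a neighbourhood $U$ of $0$. First I would record the consequences of homogeneity: $F(\lambda v)=\lambda^2 F(v)$ for all $\lambda\ge 0$ and $v\in V$, which, combined with $F\ge 0=F(0)$, makes $0$ a global minimum, so $dF_0=0$. Then I would let $Q:=d^2F_0$ (symmetric, by $C^2$-ness) and invoke Taylor: $F(v)=\tfrac12 Q(v,v)+o(|v|^2)$ as $v\to 0$. The key trick is to substitute $v\mapsto tv$ with $t\to 0^+$: the left side becomes $t^2 F(v)$, the right side $\tfrac12 t^2 Q(v,v)+o(t^2)$, and dividing by $t^2$ forces $F(v)=\tfrac12 Q(v,v)$ for every $v\in V$. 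Finally, since $F(v)>0$ for $v\ne 0$, the symmetric bilinear form $\langle v,w\rangle:=\tfrac12 Q(v,w)$ is positive definite and $\|v\|=\sqrt{\langle v,v\rangle}$; note this automatically makes $\|\cdot\|$ symmetric.

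A variant I could use instead, avoiding Taylor, runs through Euler's identity: differentiating $F(\lambda v)=\lambda^2 F(v)$ in $\lambda$ at $\lambda=1$ gives $dF_v(v)=2F(v)$ on $U$, and one more differentiation shows each $\partial_{ij}F$ is positively $0$-homogeneous; being continuous at $0$ (here the $C^2$ hypothesis enters), it must then be constant on $U$. Hence $F$ has constant Hessian there, so it is a quadratic polynomial with vanishing linear and constant terms, and homogeneity extends the identity to all of $V$.

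I do not expect a genuine obstacle; the only delicate point is bookkeeping of regularity --- the argument must use ``smooth at $0$'' solely through a bona fide second-order Taylor expansion (equivalently, continuity of the second derivatives at the origin), and one should note explicitly that the asymmetry of the norm is irrelevant, the real mechanism being the incompatibility of positive $2$-homogeneity with $C^2$ regularity at a critical point unless the function is exactly quadratic.
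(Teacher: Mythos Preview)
Your proposal is correct and follows essentially the same approach as the paper: the paper's one-line proof simply states that the Hessian of $\|\cdot\|^2/2$ at $0$ is well defined, symmetric, and agrees with $\|\cdot\|^2$ (citing Warner), which is exactly your Taylor-plus-homogeneity argument spelled out in detail. Your write-up is more self-contained, and the Euler-identity variant you offer is a nice alternative that the paper does not mention.
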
 

\begin{proof} For the necessary condition, use that the Hessian of $\parallel \cdot \parallel^2/2$  at 0,  is well defined, symmetric and it agrees with $\parallel \cdot \parallel^2$, see \cite[Prop. 4.1]{Wa}.  
\end{proof}

Then a natural regularity for 
$\parallel \cdot \parallel$  is smoothness (at least $C^2$) away from $0$ and, then, so  will be its indicatrix $S=\parallel \cdot \parallel^{-1}(1)$. In this case, putting  $ L:= \parallel \cdot \parallel^2$ 
the {\em fundamental tensor} of the norm is:
\begin{equation}\label{e_tensor_fundamental}
   g_v:= \frac{1}{2} \, \hbox{Hess} \, _v  L    \qquad \forall v\neq 0 .
\end{equation}
It is easy to check that   $g_v$ restricted to $TS$ is the affine second fundamental form of $S$ respect to $-v$  and $g_{\lambda v}=g_v$ for $\lambda >0$. In particular, {\em the indicatrix of $\parallel \cdot \parallel$ is strongly convex if and only if $g_v$ is positive definite} (for all $v\neq 0$), which will be our standard choice for norms.

\begin{center} 
		\begin{figure}

		$$
		\includegraphics[width=.4\textwidth]{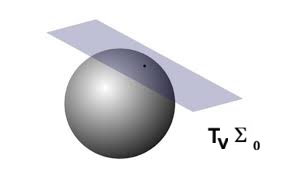}
		$$
		\caption{For   $v\in \Sigma_0$: $T_v\Sigma_0=(v^{\flat})^{-1}(\Sigma_0)$
		  }\label{f_flat}
	
		\end{figure}
		
	\end{center}

\begin{defi}
A {\em Minkowski norm}  $L_0= \parallel \cdot \parallel_0^2$ is a possibly non-symmetric norm (according to Prop.~\ref{p_norm}, now $\parallel \cdot \parallel_0$ is positive homogeneous and $L_0$ is 2-homogeneous) which is smooth outside 0  and has a strongly convex indicatrix $\Sigma_0=L_0^{-1}(0)$, that is, its {\em fundamental tensor} defined as in \eqref{e_tensor_fundamental}
is positive definite.  
\end{defi}

The following property of Minkowski norms will rely on our choice of convexity. For any non-symmetric norm $\parallel \cdot \parallel$ smooth away from 0,  the {\em Legendre} or {\em (non-linear) flat map} is
\begin{equation}\label{e_Legendre}
\begin{array}{ccc} \flat : V\rightarrow V^*, & &  v \mapsto v^{\flat}:=(d L)_v= g_v(v,\cdot ). 
\end{array}
\end{equation}
Here $v^\flat$ is the 1-form characterized by:  $v^\flat(v)= \parallel v \parallel^2 $, $0^\flat =0$ and, whenever $v\neq 0$, $\hbox{ker}(v^\flat)$ is the vector hyperplane parallel to $T_{v/\parallel v \parallel}S$ (see Fig. \ref{f_flat}).

This map is clearly continuous, positive homogeneous and onto. It also  extends the standard lowering index map of scalar products. However it may be non linear and (see \cite[Sect. 1.2]{Dahl}):

\bit\item $S$ is strictly convex 
 $\Longleftrightarrow $  the flat map $\flat$ is injective. 
 Then, the inverse, or {\em sharp} map  $\sharp : V^*\rightarrow V$, $\omega\mapsto \omega^\sharp$ exists, and it is continuous.
  
\item $D$ is strongly convex  
$\Longleftrightarrow$ $\flat$ (thus $\sharp$) is a diffeomorphism.
Indeed, Minkowski norms are characterized by this property among all the non-symmetric norms.  
\eit
The first item is geometrically obvious and, then, the second one comes from the non-degeneracy of $g$, which implies that \eqref{e_Legendre} is a local diffeomorphism (see \cite{Min17}, or  \cite{BBC} to include the infinite dimensional case). 

\begin{rem}\label{r_gradiente}
For Minkowski norms, the gradient of a (smooth) function $f:V\rightarrow \R$ is naturally defined as grad$f:=(df)^\flat$ and it becomes smooth when doesn't  vanish. 
In general, it is non-linear and only positive homogeneous. 
Whenever  $v:=$grad$_x f\neq 0$, it is orthogonal (for the scalar product $g_v$) to the tangent space to $S$ in the oriented direction of $v$. Indeed, this direction  selects the maximum of $df_x$ on $S$, which can be checked by using a Lagrange multiplier for the constraint $g_u(u,u)-1
=0$.  
\end{rem}

\subsubsection{Wind norms 
}\label{sss_wind} 
Coming back to our original problem on propagation \S \ref{sss_motivation}, we can wonder what happens if the 0 velocity is not included in the open domain $D$, that is, in the cases of either  {\em critical wind}, when $0\in S=\partial D$ or {\em strong wind}, when $0\not\in \bar D:=D\cup \partial D$. In the latter case, a natural cone with vertex at $0$ and tangent to $S$ emerges and $S$, up to the tangency points, is the union of two open connected subsets $S_{\hbox{\small{conv}}}$ and $S_{\hbox{\small{conc}}}$  (see Fig. \ref{f_wind}).

In a natural way, $S_{\hbox{\small{conv}}}$ can be regarded as the indicatrix of a {\em conic Minkowski norm} defined only in the interior of the cone. It also gives a strict triangle inequality for vectors in the interior domain of the cone (see Def. 2.14 and Section 2 in \cite{JS14}). Analogously, $S_{\hbox{\small{conc}}}$ gives a {\em Lorentz norm} and a similar reverse triangle inequality holds  (consider a similar picture as in Fig. \ref{f_tineq} with a piece of concave indicatrix), see  \cite[\S 3.1, Appendix B]{JS20}. In both cases, we can define the square of the norm $L_0$ and, thus, the fundamental tensor  $g_v$ (as in \eqref{e_tensor_fundamental}) for $v\neq 0$ in the open conic region,  by means of the  positive homogeneous extension:
\begin{equation}\label{e_L0}
L(\lambda  v)= \lambda^2, \quad \hbox{ $\forall \lambda>0$ and either all $v \in S_{\hbox{\small{conv}}}$ or all $v\in S_{\hbox{\small{conc}}}$} 
\end{equation}
In the case of the conic Minkowski norm, $g_v$ is positive definite but in the Lorentz one, $g_v$ has Lorentzian signature $(+,-, \dots , -)$.

In the Lorentz part, the reverse triangle inequality has a natural interpretation even in a classical setting. Namely, in our example,  the Lorentz norm correspond to the case when the zeppelin engine is trying to move in the direction opposite to the wind (describing a straight line in the conic region), however, as it is drifted by the wind, the arrival time is    maximized.

The case of critical wind corresponds to the limit situation  when the cone degenerates into a hyperplane tangent to $S$. Hence, the 0 velocity for the zeppelin corresponds to  the velocity of the wind compensated by the maximum velocity of the engine in the opposite direction.   This leads to the concept of wind Minkowski norm \cite{CJS24}.

\begin{figure}
\begin{center} 		
		\includegraphics[width=0.5\textwidth]{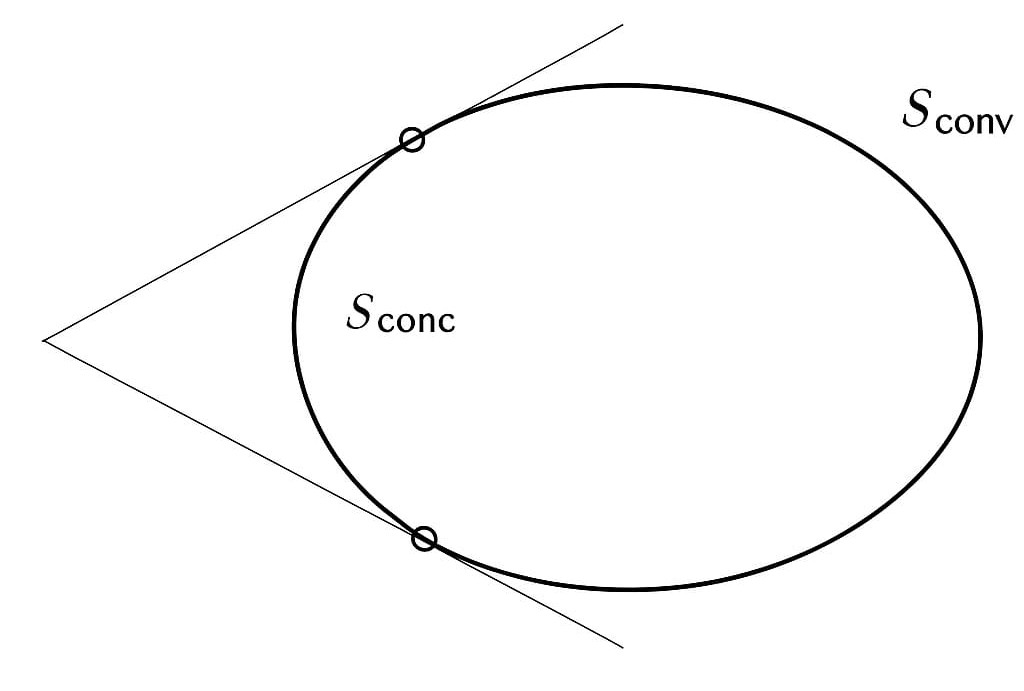}
	\caption{Case of strong wind: emergence of convex and concave conic indicatrices in the interior of a cone.}\label{f_wind}			
	\end{center} 
\end{figure}

\begin{defi}\label{d_wind_norm}
A {\em wind Minkowski norm} on $V$ 
is a smooth connected compact strongly convex hypersurface $\Sigma_0 =\partial D_0$, where $D_0$ is the interior domain of $\Sigma_0$. Depending on whether $0\not\in \bar D_0$, $0\in \Sigma_0$ or $0\in D_0$ the wind is called {\em strong, critical} or {\em weak}, resp. (the latter  being a Minkowski norm too).
\end{defi}

\subsubsection{Rheonomic emergence of cones}\label{sss_cones} 
The rheonomic viewpoint  considers the   
time evolution of the motion for a wind norm after a unit of time.  It naturally yields  cones, as in Fig. \ref{f_conewind}. 
This viewpoint:
\bit 
\item desingularizes  wind Finsler structures 
 and
\item is adapted to geometrically model time-dependent  velocities.
\eit

\begin{figure}
\begin{center}
		\includegraphics[width=0.45\textwidth]{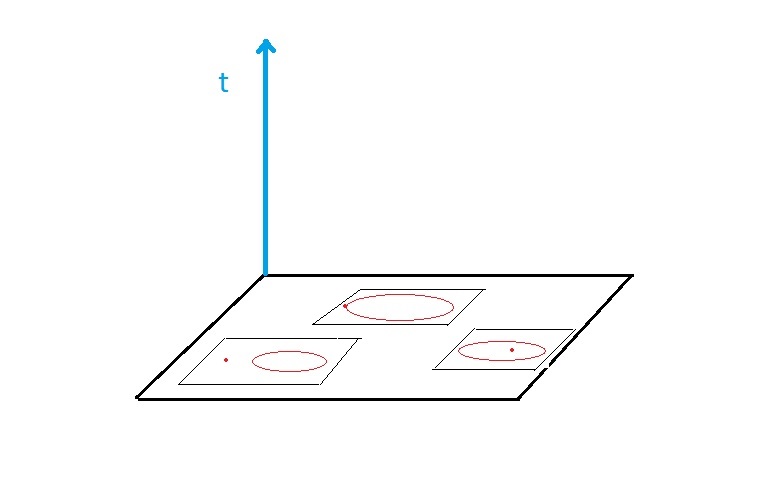}
		\includegraphics[width=.45\textwidth]{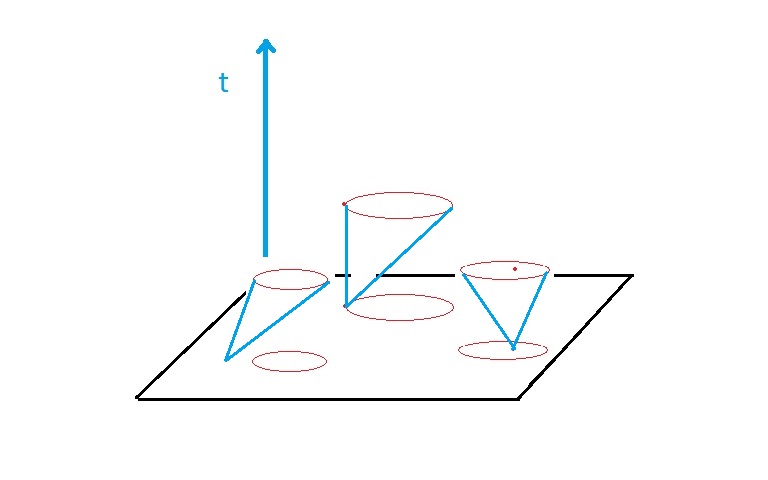}
\end{center}		
\caption{Cones from strong/critical/weak wind Minkowski  norms for three different cases  in the affine plane $t=0$.}\label{f_conewind}
\end{figure}

\begin{defi}\label{d_cone} A (strongly convex) {\em cone} in $V$ is  a smooth  hypersurface $\C_0$ embedded in   $V\setminus\{0\}$   satisfying the properties: 
 \begin{enumerate} 
 \item Conic:  $v\in \C_0 \Rightarrow \lambda v \in \C_0, \forall \lambda>0$.
 
 \item  Salient:  if $v\in \C_0$, then $-v\notin \C_0$.
 
\item Convex interior: $\C_0= \partial A_0$ in $V\setminus \{0\}$ 
with $A_0 \subset V\setminus \{0\}$  an open,  convex subset (as around Fig. \ref{f_convexD}), called the {\em interior}  (or cone domain)  of the cone. 

\item   (Non-radial) strong convexity: the 2nd fundamental form $\sigma^\xi$ of $\mathcal{C}_0$ (with $\xi$ pointing to $A$) is positive semi-definite with radical at each $v\in \C_0$ spanned by  $\{\lambda v: \lambda>0\}$.
\een
Consistently, 
\bit\item A vector $v\in V$  is called  {\em  lightlike} (resp. {\em timelike}, {\em causal} if $v$ lies in $\C_0$ (resp. $A_0$, $\bar A_0$). 
\item A vector hyperplane $H\subset V$ is
called  {\em  lightlike} (resp. {\em timelike}, {\em causal} or {\em spacelike}) if $H$  is tangent to $\C_0$ (resp. contains a timelike vector, lies in one of the previous cases, lies in none of them).
\item  A one-form $\omega: V\rightarrow \R$ is called {\em timelike} (resp.  causal) if $\omega(w)>0$ for all $w\in \bar A_0$ (resp. $\omega(w)>0$ for all $w\in  A_0$). When $\omega $ is causal but non-timelike then it is called  {\em lightlike}  (and ker $\omega$ must be a lightlike   hyperplane). $\bar A^*_0,A^*_0$ and $\C^*_0$ will denote the set of causal, timelike and lightlike forms, resp. 
\eit
\end{defi}
\begin{rem}
(1) The assumptions in the definition of cone can be widely optimized, see \cite[\S 2.1]{JS20}. 

(2) Under our convention, the 0 vector is excluded from the cone. In a natural way $\C_0$ and $A$ generalize the set of    future-directed lightlike and timelike vectors,  resp., in Special Relativity.  However, as we have only one cone,  we have dropped the usual relativistic expression ``future-directed'' for  causal vectors  $v$. 

(3)  Later, causality relations for $\C_0$ will behave as if $-v$ were ``past directed''. As a consequence, the natural definition for a vector $v$ (in $V$ or $V\setminus\{0\}$) to be spacelike would be that neither $v$ nor $-v$ is causal. 

(4) One can check that a vector hyperplane $H$ is spacelike if and only if for one (and then any) causal vector $v$ the affine hyperplane $v+H$ intersects transversely all the directions of $\C_0$ (then, yielding a strongly convex hypersurface in $H$ whenever $n>2$); moreover, any  cone must admit  spacelike  hyperplanes \cite[Lemma 2.5, Prop. 2.6]{JS20}.

(5) Finally, when the kernel of a 1-form $\omega$ is spacelike (resp. lightlike) then either $\omega$ or $-\omega$ is timelike (resp. lightlike).  

\end{rem}

\subsubsection{Lorentz-Minkowski norms}

In \S \ref{sss_wind}, the possibility to define a Lorentz norm associated with a concave indicatrix and a reverse triangle inequality was explained. In the particular case considered therein,  the fundamental tensor $g_v$ had Lorentzian signature but it was defined only in the (timelike) directions in the interior of the cone (see \eqref{e_L0} and Fig. \ref{f_wind}). Indeed  the Lorentz part of the norm could not be extended even  continuously by 0 to the cone $\C_0$, because the indicatrix touched tangently  the cone.  Nevertheless, in the case of a Lorentzian scalar product the unit timelike vectors in one of the cones $\C_0$ do yield a Lorentz norm which can be extended by 0 to $\C_0$ and such that the corresponding fundamental tensor $g_v$ is well defined on $\C_0$ and non-degenrate (thus, with Lorentzian signature therein).  Such  properties are trivial in the case of a scalar product because this product coincides with $g_v$ whenever defined, but they serve as a guide for Lorentz norms with convenient regularity.   

\begin{defi}\label{d_LorentzNorm} A (proper) 
{\em  Lorentz-Minkowski  norm} 
$L_0: \bar A_0 \rightarrow \R$  is a smooth function defined on a cone and its interior domain (i.e.,   $\bar A_0=A_0 \cup \C_0$ $( \subset V\setminus\{0\}$) satisfying:  
\ben \item $L_0(v)\geq 0$ with equality iff $v\in \C_0$
\item Two-homogeneity, that is, $L_0(\lambda v)=\lambda^2 L_0(v)$   for $\lambda>0$.
	
\item The fundamental tensor 	$g_{v}:= $ Hess$_{v} L_0/2$ is non-degenerate  with Lorentz signature $(+,-,\dots . -)$ for all $v\in \bar A_0$.

\een  \end{defi}
   Again, the hypotheses are not optimized, namely, (3) implies that $\C_0$ is a (strongly convex) cone even if one only assumes that $\C$ is the boundary of a domain invariant by positive homotheties  (see \cite[\S 3.1]{JS20}). In any case, the indicatrix $\Sigma_0:= L^{-1}_0(1)$ will behave qualitatively as in Lorentz-Minkowski and it is asymptotic to the cone  $\C_0$ (Fig.~\ref{f_LorentNorm}).

   \begin{rem}  
  As explained in \S \ref{sss_wind}, any Lorentz-Minkowski norm $L_0$ satisfies the strict reversed triangle inequality
  $$ \sqrt{L_0(v+w)} \geq \sqrt{L_0(v)}+\sqrt{L_0(w)}, \qquad \forall v,w\in \bar A_0,  
  $$
 with equality if and only $v,w$ are colinear (then necessarily pointing in the same orientation). Indeed, this property relied in the strict concaveness of the indicatrix. It is worth pointing out that this reverse  triangle inequality permits to re-prove and extend some known inequalities (Acz\'el, Popoviciu, Bellman), see \cite{MPV}.
  \end{rem} 
  
  Beem's pioneering definition \cite{Beem}  changes the hypothesis (1) above by assuming that  $L_0$ is defined on the whole $V$ (with only continuity at 0). 
Our restriction to $\bar A$ is motivated by the following facts:

\bit\item  Physically, the only available directions are the causal (timelike or lightlike) ones. 

\item Mathematically, the causal directions present properties of maximization resembling those of minimization in Riemannian and Finslerian geometries. However, spacelike directions behave very differently.

\item  For Lorentzian scalar products, the behavior of the quadratic form on  any non-empty open set of causal vectors 
determines the behavior on all the vectors.  However, Lorentz-Minkowski norms cannot by  any means be determined in a similar way.

\item    Both definitions are naturally compatible:
\bit 
\item Any Lorentz-Minkowski norm as in our definition can be extended  to a norm as Beem's (in a highly non-unique way, see Remark \ref{r_cone_triples} below or \cite{Min16}). 
\item Conversely, Beem's definition implies the existence of at least one cone fulfilling our definition\footnote{The number of cones for dimension 2 can be arbitrarily big. However, for dimensions $\geq 3$, it is  bounded by 2 at least in the reversible case (that is, when the Lorentz-Minkowski norm is not only positive homogeneous but fully homogeneous), see \cite{Min15}.}. 
\eit
\item From a technical viewpoint, the non-degeneracy of $g_v$ when $v\in \C_0$ ensures that it maintains Lorentzian signature when $L_0$ is smoothly extended to a small neighborhood of $\bar A_0$. This property (generalizable to Lorentz-Finsler metrics below) permits to avoid the burden of considering boundary points for local computations.

\eit
   Summing up, Lorentz-Minkowski norms resembles the behavior of  vectors in a causal cone for a Lorentz scalar product (including the strict reverse  triangle inequality for timelike vectors) and other geometric satisfactory properties, as the following extension of  \eqref{e_Legendre}. 

\begin{defi}\label{d_LegendreLorentz} Let 
$L_0: \bar A_0 \rightarrow \R$ be a {\em  Lorentz-Minkowski  norm} and let $\bar A^*_0$ be the set of all the causal forms (see Def. \ref{d_cone}). Its {\em Legendre} (or {\em flat})  map is defined formally as in \eqref{e_Legendre} adapting the domain and codomain, that is:
$$
\flat : \bar A_0 \rightarrow \bar A^*_0 \qquad (d L_0)_v=v\rightarrow v^\flat:=g_v(v,\cdot).
$$
\end{defi}  
Reasoning as around \eqref{e_Legendre}, 
the indicatrix $\Sigma_0$ yields now a positive-homogeneous diffeomorphism between timelike vectors and forms which extend to lightlike ones, making the inverse $\sharp$ well defined. 
 Notice, however, that for $v\in \Sigma_0$, $v^\flat$ can be identified with the affine hyperplane $v+T_v \Sigma_0 $, while for $u\in\C_0$ the map $u\mapsto u+T_u\C_0$ does not yield and analogous identification (as all the radial affine hyperplanes $\lambda u+T_{\lambda u}\C_0, \lambda>0,$ are equal).

    \begin{rem}\label{r_gradiente_temporal}
  A smooth function $f: V\rightarrow \R$ will be called {\em temporal} when its differential $df$ is a timelike 1-form at each point. Then, its gradient grad$f:=(df)^\sharp$ is well defined and smooth (as in Remark \ref{r_gradiente}). A  natural geometric interpretation also emerges here for the direction of  grad$f$, because, at each point $x\in V$,  ker~$df_x$ is parallel to the tangent space to $\Sigma_0$ at  grad$f /\sqrt{L_0(\hbox{grad} f)}$.
  \end{rem}


	\begin{figure}
\begin{center}

		\includegraphics[width=.7\textwidth]{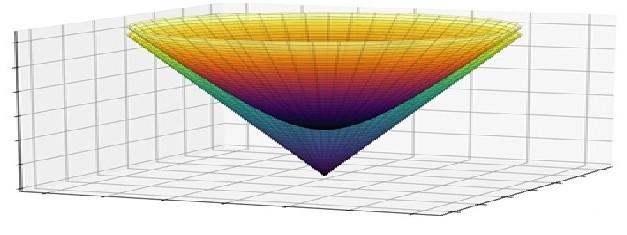}
	\caption{Cone and indicatrix of a Lorentz-Minkowshi norm }\label{f_LorentNorm}

\end{center}	

	\end{figure}

	

\subsection{Cone structures and Lorentz-Finsler metrics}
Here we follow \cite{JS20}, and  \cite{Min15} is also recommended.

\subsubsection{From vector spaces to  manifolds} 
Essentially, a Finsler metric, a wind Finsler structure, a cone structure and a Lorentz-Finsler metric on a (connected) manifold $M$ are smooth pointwise assignations (to each point $p\in M$) of, respectively, a Minkowski norm, a  wind Minkowski norm, a cone and a Lorentz-Minkowski norm (defined above).  This smooth dependence can be neatly expressed in the case of a Finsler metric $F$ with the following definition. Notice that the latter is also is adapted to our preferences of 2-homogeneity ($L=F^2$) and smooth domain, thus using the {\em slit tangent bundle}, obtained by removing the 0 section from $TM$. 

\begin{defi} 
	$L: TM\setminus  \mathbf{0}  \rightarrow \R$ is a {\em Finsler metric} if it satisfies:
	\ben\item Positive : $L(v)>0$, $\forall v\in TM\setminus  \mathbf{0}$. 
	\item  Positive 2-homogeneus : $L(\lambda v)= \lambda^2 L(v)$ for $\lambda>0$ and $v\in TM\setminus  \mathbf{0}$.
	\item  Smooth with positive definite {\em fundamental tensor} $g_v$  
	$$
	g_{v_p}(u_p,w_p)=\frac{1}{2}(\hbox{Hess}_{v_p}L)(u_p,w_p)=\frac{1}{2}\left.\frac{\partial^{2}}{\partial  r  \partial  s} L(v_p+ r  u_p+sw_p)
	\right|_{ r,s=0},$$
	for each $u_p, w_p \in T_pM \setminus{0}$ and $p\in M$.
	
	\een
	\end{defi}
In case of wind and cone structures, however, it is natural  to regard them as hypersurfaces $\Sigma$, $\C$ of $TM$ such that their  restrictions $\Sigma_p$, $\C_p$  to each  $T_pM$ are wind Minkowski norms and cones, resp. There is, however, a subtlety here, because the smoothness of $\Sigma$ or $\C$ would not be enough to ensure the smooth dependence of the structure with $p$ (see Prop. 2.12 and Fig. 2 in \cite{CJS24}) and an additional transversality condition is required.

\begin{defi}\label{d_wind_cone_structure} A {\em Wind Finsler} (resp. {\em cone}) structure  on $M$ is a  hypersurface  $\Sigma \subset TM$ 
	(resp. $\C \subset TM \setminus \mathbf{0}$) such that, for each $p \in M$: 

\bit\item $\Sigma_p$ (resp. $\C_p$)  is a wind norm (resp. cone) at $T_pM$
\item    The submanifolds $\Sigma$ (resp $\C$) and $T_pM$ of $TM$ intersect  transversely.  
\eit 
 Consistently, $\C$ contains the lightlike vectors, the {\em interior} (or cone domain) $A$ of $\C$ is the union of the interiors $A_p$ at all $p\in M$, that is, $A$ contains all the timelike vectors, $\bar A=A\cup \C$ contains the causal vectors, and $A^*,\bar A^*$ and $\C^*$ contain, resp., the timelike, causal and lightlike forms.
\end{defi}

Finally, Lorentz-Finsler metrics can be defined as follows (we refer to the comments to Def. \ref{d_LorentzNorm} as well as  \cite[\S 3.2]{JS20} for optimized hypotheses).

\begin{defi} $L: \bar A\rightarrow \R$ is a 
	{\em Lorentz-Finler} metric, and 
	then $(M,L)$ a {\em Finsler spacetime}, if 
  $\bar A$ is the union of a cone structure $\C$ and its  interior $A$   and the restriction $L_p$ of $L$ 
  to $\bar A\cap T_pM$ is a Lorentz-Minkowski norm for each $p\in M$. 
\end{defi}

\begin{rem}\label{r_gradiente_final}
From Remarks \ref{r_gradiente} and \ref{r_gradiente_temporal}, the definition of  gradient is straightforward for any smooth function $f:M\rightarrow \R$ in the Finsler case and,  in the Lorentz-Finsler one, when $f$ is a {\em temporal function} (i.e. $df_p$ is a timelike form for each $p \in M$)  as well as when $df_p$ is a causal form for all~$p$.
\end{rem}
\subsubsection{Causality and cone triples}
Cone structures $\C $ naturally extend the {\em future} cones of the relativistic spacetimes, that is, time-oriented Lorentzian manifolds. 
Accordingly,  {\em piecewise smooth} curves $\gamma$ are called timelike or causal when so is its velocity everywhere (including the two limit velocities at the breaks). For $p,q\in M$, we write $p\ll q$ (resp. $p < q$) and say that $p$ lies in the chronological (resp strict. causal) past of $q$ or $q$ lies in the chronological (resp strict. causal) future of $p$ if there exists a timelike (resp. causal) $\gamma$ from $p$ to $q$. This permits to define the chronological and causal futures of $p$ as well as its future horismos as, resp.: 
\begin{equation}
\label{e_IJE}
 I^+(p)=\{q\in M: p\ll q\}, \; J^+(p)=\{q\in M: p \leq q\}, \; E^+(p):=J^+(q)\setminus I^+(q)\end{equation}
(and analogously for the notions on {\em past}).
This opens the possibility to  formally extend all the notions of Causality for spacetimes, including the {\em ladder} or {\em hierarchy} of spacetimes (see for example,  \cite{MinSan}). 
Thus, ee will use typical notions of relativistic spacetimes such as {\em causal spacetime} --the one admitting no closed causal curve-- or time function $t$ --continuous function strictly increasing on (future-directed) causal curves-- for cone structures in the remainder, with no further mention.
Moreover:

\begin{defi}\label{d_cone_geodesic} Let $\C$ be a cone structure. A curve\footnote{In general, all the elements are assumed smooth. However, at this point  the continuity of $\alpha$ would suffice.} $\alpha: I\rightarrow M$ is a {\em cone geodesic} if it is {\em locally horismotic}, that is,  
  for each $t\in I$ there exists an open neighborhood $U$ of $\alpha(t)$ such that, whenever $t_1 < t_2$ and $\alpha([t_1,t_2]) \subset U$, one has $\alpha(t_2) \in E^+_U(\alpha(t_1))$ (where $E^+_U$ denotes the future horismos in $U$ computed by considering the restriction $\C_U$ of  $\C$ to $U$ as a cone structure in its own).
\end{defi}

In relativistic spacetimes, $\C$ determines the Lorentzian metric $g$ up to a conformal factor, and the cone geodesics coincide with the future-directed lightlike pregeodesics (that is, the $g$-lightlike geodesics  up to a future-directed reparametrization).  We will see that this last property can be extended to any Lorentz-Finsler metric $L$, even if $L$ is not by any means determined by its cone structure. 
About the latter, notice first the following  nice 
description of cone structures (see  Fig. \ref{f_cone_triple}).

\begin{defi}\label{d_cone_triple} Let $M$ be a manifold. A {\em cone triple $(\Omega, T , F)$ 
for  $M$} 
is composed by a non-vanishing one-form $\Omega$, a vector field $T$ such that $\Omega(T)\equiv 1$ and a Finsler metric $F$ on ker $(\Omega)$.
A cone triple $(\Omega, T , F)$ is  {\em associated with a cone structure  $\C$} on $M$ if: 
\ben  \item $\Omega$ is   $\C$-timelike, that is, $\Omega(v)>0$ for all $v$ causal (Def. \ref{d_cone}). 
\item  $T$ is  $\C$-timelike\footnote{\label{foot_wind_triple} This hypothesis can be dropped by suitably replacing the Finsler metric $F$ in the item (3) by  a wind Finsler structure, see around  \cite[Fig. 4]{CJS24}.}, and
\item $F$ is a Finsler metric $F$ on ker $\Omega$ such that 

$v_p\in \C  \Longleftrightarrow v_p=F(\pi_T(v_p))T_p+\pi_T(v_p)
\qquad \forall v_p\in TM\setminus
\mathbf{0}$,
  
\noindent where $\pi_T: TM\rightarrow \hbox{ker} \, \Omega$ is the natural projection in the decomposition $TM=$ ker $\Omega \oplus$ Span$(T)$.
\een
\end{defi}

For the following result, and discussion, see \cite[\S 2.4, \S 5]{JS20}

\begin{thm}\label{t_cone_triple} 
(A) Any cone structure $\C$ admits admits a  timelike 1-form $\Omega$ and a timelike vector field $T$ satisfying $\Omega(T)\equiv 1$ (both highly non-unique). Then, there exists a unique $F$  so that $(\Omega, T,F)$ is a cone triple for $\C$.

(B) Given a cone triple $(\Omega, T, F)$ for $M$: 

\ben\item There exists a unique cone structure $\C$ such that $(\Omega, T, F)$ is a cone triple  associated with  $\C$.
\item The function $\mathcal{G}: TM\setminus \mathbf{0}\rightarrow \R$, 
$ \mathcal{G}=\Omega^2-(F\circ \pi_T)^2$,
satisfies: 

(a) It is two-homogeneous  and smooth everywhere,  but in  Span$(T)$. 

(b)  $\mathcal{G}^{-1}(0)$ is composed by two cones  with cone triples $(\pm \Omega, T, F)$. 

(c) Whenever $\mathcal{G}$ is smooth, its fundamental tensor $g$ has Lorentzian signature. 

(d) $\mathcal{G}$ is smoothable in any arbitrarily small neighborhood of Span$(T)$ maintaining the Lorentz signature of $g$. Thus, the smoothening is a Lorentz-Finsler metric defined on the whole slit tangent bundle (in agreement with Beem's definition).
\een
\end{thm}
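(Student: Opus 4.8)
The statement bundles five claims of unequal difficulty: recovering $F$ from the data $(\C,\Omega,T)$ in (A), the converse construction of $\C$ and the elementary properties (a)--(b) of $\mathcal G$ in (B), a bilinear-algebra computation for (c), and finally the smoothing in (d). The plan is to treat them in this order; everything except (d) is routine once the adapted splitting $TM=\mathrm{Span}(T)\oplus\ker\Omega$ is used, and the real obstacle is (B)(2)(d).

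For (A), I would first produce $\Omega$ and $T$ by a partition-of-unity argument. Every cone admits spacelike hyperplanes, and a $1$-form with spacelike kernel is, up to sign, timelike (by the remark following Definition~\ref{d_cone}); so on a cover one has local timelike $1$-forms $\omega_\alpha$, and since the $\C$-timelike covectors at a point form a nonempty convex cone, $\Omega:=\sum\rho_\alpha\omega_\alpha$ is a global smooth timelike $1$-form. Rescaling local timelike vector fields $X_\alpha$ to $T_\alpha:=X_\alpha/\Omega(X_\alpha)$ and setting $T:=\sum\rho_\alpha T_\alpha$ yields a timelike $T$ with $\Omega(T)\equiv 1$, since both the affine constraint and timelikeness are stable under convex combinations (using convexity of the fibre domains $A_p$). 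To build $F$: for $w\in\ker\Omega_p\setminus\{0\}$ the set $\{s:\ w+sT_p\in\bar A_p\}$ is a half-line $[F(w),\infty)$ with $F(w)>0$ --- because $\bar A_p$ is a closed convex cone (hence closed under addition), the set is nonempty (as $w/s+T_p\to T_p\in A_p$) and bounded below away from $0$ (as $\ker\Omega_p$ is spacelike, so $w\notin\bar A_p$). Positivity and $1$-homogeneity of $F$ are then immediate from conicity, while smoothness and strong convexity follow because the indicatrix $\{F=1\}$ is the affine translate $-T_p+\bigl(\C_p\cap\{\Omega=1\}\bigr)$ of a smooth strongly convex cross-section of the cone (remark after Definition~\ref{d_cone}), with smooth dependence on $p$ guaranteed by the transversality built into the notion of cone structure. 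Uniqueness of $F$ is forced by condition~(3) of Definition~\ref{d_cone_triple}.

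For (B)(1), conversely, given $(\Omega,T,F)$ I would set $\C:=\{\,F(\pi_T v)\,T+\pi_T v:\ \pi_T v\neq 0\,\}$ and $A:=\{v:\ \Omega(v)>F(\pi_T v)\}$. Since $F$ is sublinear on the subbundle $\ker\Omega$, $A$ is the fibrewise strict epigraph of a convex function, hence open with convex fibres and $\partial A\setminus\mathbf 0=\C$; the conic, salient and non-radial strong-convexity axioms of Definition~\ref{d_cone} follow from $1$-homogeneity, positivity, and the strong convexity of the indicatrix of $F$, and transversality is clear from smoothness of $\Omega,T,F$. That $(\Omega,T,F)$ is associated with this $\C$, and that $\C$ is the unique such cone structure, are then mere re-readings of Definition~\ref{d_cone_triple}. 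For (2)(a)--(b): two-homogeneity of $\mathcal G=\Omega^2-(F\circ\pi_T)^2$ is clear; since $\pi_T$ is a smooth bundle map with kernel $\mathrm{Span}(T)$ and $F^2$ is smooth off the zero section, $\mathcal G$ is smooth on $TM\setminus\mathrm{Span}(T)$, and by Lemma~\ref{l_norms} it genuinely fails to be $C^2$ along $\mathrm{Span}(T)$ unless $F$ is Euclidean on the relevant fibre; item~(b) is the identity $\mathcal G(v)=0\iff\Omega(v)=\pm F(\pi_T v)$, the two signs giving $\C$ and the time-reversed cone (with the time orientation of $(\Omega,T,F)$ flipped). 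For (c), writing $u=\Omega(u)\,T+\pi_T u$ and differentiating twice along straight lines (using $\tfrac12\,\mathrm{Hess}\,\Omega^2=\Omega\otimes\Omega$ and $\tfrac12\,\mathrm{Hess}_w F^2=g^F_w$, the fundamental tensor of $F$ on $\ker\Omega$), one gets, wherever $\mathcal G$ is smooth,
\[
  g_v(u,u)=\Omega(u)^2-g^F_{\pi_T v}\bigl(\pi_T u,\pi_T u\bigr),
\]
which is block-diagonal in $\mathrm{Span}(T)\oplus\ker\Omega$, equal to $+1$ on $T$ and negative definite on $\ker\Omega$; hence $g_v$ is non-degenerate of signature $(+,-,\dots,-)$.

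The hard part is (B)(2)(d): one must modify $\mathcal G$ near $\mathrm{Span}(T)$ so that it becomes smooth there \emph{while staying two-homogeneous and keeping the fundamental tensor Lorentzian}. A naive cut-off between $\mathcal G$ and a splitting metric $\Omega^2-(\text{positive quadratic})$ fails --- the derivatives of the cut-off destroy the signature, and a $0$-homogeneous cut-off cannot distinguish ``near'' from ``far from'' the zero section. My plan is instead to work fibrewise on $\ker\Omega$, smoothly in $p$, with an auxiliary Euclidean norm $|\cdot|$: replace $F^2:\ker\Omega\to[0,\infty)$ by a smooth strongly convex $u$ that equals $F^2$ for $|x|\ge\delta$ and near $0$ equals a mollification of $F^2$ (convexity survives convolution, and the cross-terms of the gluing are made negligible by fixing $\delta$ first and then shrinking the mollification scale). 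Then I would define
\[
  \widehat{\mathcal G}(v):=\Omega(v)^2-\Omega(v)^2\,u\!\bigl(\pi_T v/\Omega(v)\bigr)\quad\text{on }\{|\pi_T v|<2\delta|\Omega(v)|\},\qquad \widehat{\mathcal G}:=\mathcal G\ \text{elsewhere,}
\]
the two formulas agreeing on the overlap by $1$-homogeneity of $F$. Thus $\widehat{\mathcal G}$ is two-homogeneous and smooth on all of $TM\setminus\mathbf 0$ (the first formula being smooth through $\mathrm{Span}(T)$ precisely because $u$ is smooth at $0$), and it coincides with $\mathcal G$ --- hence with the given cone data --- off an arbitrarily thin conic neighbourhood of $\mathrm{Span}(T)$, which for $\delta$ small is disjoint from $\C$; so $\widehat{\mathcal G}|_{\bar A}$ (and its Beem-type extension to $TM\setminus\mathbf 0$) is a genuine Lorentz--Finsler metric with cone structure $\C$. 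It then remains to check Lorentzian signature on the modified region: in coordinates $(w,\lambda)=(\pi_T v,\Omega(v))$ the $\ker\Omega$-block of $\mathrm{Hess}\,\widehat{\mathcal G}$ is $-\mathrm{Hess}\,u<0$, and a short computation identifies the Schur complement of that block as
\[
  S=2-2u(x)+\nabla u(x)^{\top}\bigl(\mathrm{Hess}\,u(x)\bigr)^{-1}\nabla u(x)\ \ge\ 2-2u(x)>0 ,
\]
the last inequality because $u<1$ on $\{|x|<2\delta\}$ by the maximum principle for the convex $u$ (its boundary values are $F^2(x)\le (2\delta)^2\max_{|\theta|=1}F^2(\theta)<1$ once $\delta$ is small). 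Hence $\widehat{\mathcal G}$ is everywhere non-degenerate with exactly one positive and $n-1$ negative eigenvalues, completing (d). I expect this last estimate --- reconciling smoothness at $\mathrm{Span}(T)$, exact two-homogeneity, and the sign of the Schur complement simultaneously --- to be the main technical point; the fully detailed construction, with optimal hypotheses, is carried out in \cite[\S 2.4, \S 5]{JS20}.
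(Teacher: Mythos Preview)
The paper does not give its own proof of Theorem~\ref{t_cone_triple}; it refers entirely to \cite[\S 2.4, \S 5]{JS20}, as you also do at the end. Your arguments for (A), (B)(1) and (B)(2)(a)--(c) are correct and standard, and the block decomposition $g_v(u,u)=\Omega(u)^2-g^F_{\pi_T v}(\pi_T u,\pi_T u)$ in~(c) is exactly the right computation.

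For (B)(2)(d), your Schur-complement route is a genuine alternative to the method of \cite[\S 5.2]{JS20} (alluded to later in the paper, in the proof of Theorem~\ref{t_splitting2}), which instead smooths the \emph{indicatrix} $\Sigma_{\mathcal G}=\mathcal G^{-1}(1)$ directly near its tips $\pm T$, keeping it strongly convex, and recovers the Lorentz--Finsler metric from the smoothed hypersurface. Your identity $S=2-2u(x)+\nabla u^\top(\mathrm{Hess}\,u)^{-1}\nabla u\ge 2-2u(x)$ and the convexity bound $u<1$ on the small ball are correct and make for an elegant closing estimate.

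There is, however, a genuine gap in your gluing. On the half-space $\Omega(v)<0$, the overlap formula (where $u=F^2$) gives
\[
\widehat{\mathcal G}(v)=\Omega(v)^2\bigl(1-F^2(\pi_T v/\Omega(v))\bigr)=\Omega(v)^2-F(-\pi_T v)^2,
\]
because $F$ is only \emph{positively} homogeneous; this differs from $\mathcal G(v)=\Omega(v)^2-F(\pi_T v)^2$ unless $F$ is reversible, so your $\widehat{\mathcal G}$ is not a modification of $\mathcal G$ supported only near $\mathrm{Span}(T)$. The fix is easy: the two open cones $\{|\pi_T v|<2\delta\,\Omega(v)\}$ and $\{|\pi_T v|<-2\delta\,\Omega(v)\}$ are disjoint and together cover $\mathrm{Span}(T)\setminus\{0\}$, so one may use your $u$ on the first and a separate smoothing $\tilde u$ of $w\mapsto F^2(-w)$ on the second; the Schur-complement estimate applies verbatim to each piece. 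The indicatrix-smoothing approach of \cite{JS20} sidesteps this asymmetry by working with the level set rather than the defining function.
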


\begin{figure}
\begin{center}
		\includegraphics[width=0.72\textwidth]{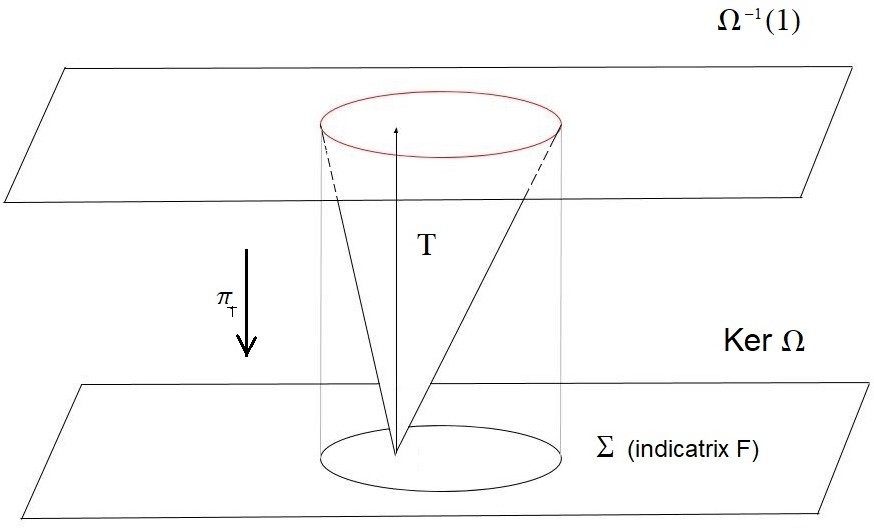}
\end{center}		
\caption{$\C$ determined by a cone triple $(\Omega,T,F)$. }\label{f_cone_triple}
\end{figure}

As we will see, cone triples are especially  useful  when $\Omega=d\tau$ for a temporal function 
(see definition  in Remark \ref{r_gradiente_final}) 
and $T$ is regarded as a physical field of  observers.

\begin{rem}\label{r_cone_triples}
It is worth pointing out
that,  given $\C$ and the timelike 1-form 
$\Omega$, there is a unique timelike direction 
which crosses the barycenter of 
$\bar A \cap \Omega^{-1}(t)$ for all $t>0$. 
However, the above generality for  $T$ (even amplified as suggested in footnote \ref{foot_wind_triple}) will be convenient. 

In particular, the smoothened metric $\mathcal{G}$ obtained from different choices of $T$   in Th. \ref{t_cone_triple} (last item) exemplifies the high-non uniqueness of the Lorentz-Finsler metrics compatible with a single cone structure.  
\end{rem}

\subsection{Finsler distance and   exponential,   Lorentz-Finsler causality}\label{s2_3_1}

\subsubsection{Finsler distance and geodesics}\label{ss_distance_geodesics}
Extending Riemannian notions, the {\em length} and {\em energy} functionals on curves $\alpha: [a,b]\rightarrow M$ for a Finsler metric $L=F^2$ are, respectively,  
$$\ell (\alpha)= \int_a^b F(\alpha'(s))ds, \qquad \mathcal{E} (\alpha)= \frac{1}{2}\int_a^b L(\alpha'(s))ds,$$ 
and the length is invariant under monotonically {\em increasing} parametrizations. This permits to define 
a generalized (possibly non-symmetric) distance, that is, a usual distance $d$ except for  $$d(p,q)\neq d(q,p) \qquad \qquad \hbox{in general, for } p,q\in M.$$ As in the Riemannian case, geodesics can be defined as the  critical points for the energy functional on curves connecting any two prescribed points.
Their Euler-Lagrange equation is then formally analogous to the Riemannian one, 
\begin{equation}\label{e_geodesicas}\ddot x^k +  \gamma_{ij}^k (x,  \dot x )\dot x^i \dot x^k=0. \end{equation}
This expression can be derived directly (see for example \cite[Section 3]{Dahl})  and will agree with the Lagrange equations for the Lagrangian $L/2$ in \S \ref{ss_contact}. 
Here, the {\em formal Christoffel symbols} $\gamma^k_{ij}$ are computed with the same  expression as in the (semi-) Riemannian case, but applied now to the fundamental tensor $g$. However, notice that  $g=g_{ij}(x,y)$,  depends not only on the point $p\in M$   we are considering (with coordinates $x=(x^1, \dots x^n)$), but also on the oriented direction of the tangent vector $v_p= \sum_i y^i\partial_{x^i}$ (with natural coordinates $(x,y)$ in $TM$).  
As in the Riemannian case, Finslerian geodesics locally minimize the distance. 

\subsubsection{The exponential map, local and global causality}
Using geodesics, one can define the exponential map at a point  $\exp_p$ for any semi-Finsler metric in a similar way as in the Riemannian case, and the existence of convex neighborhoods is ensured by a result of Whitehead \cite{Wh}. However,  the lack of regularity at 0 of $F$ implies that the symbols  $\gamma_{ij}^k(x,y)$ may not even be continuously extended to $y=0$ (indeed, $\exp_p$ is smooth at $0$ only in very particular cases, see \cite{AZ88}). 
Anyway, the exponential map $\exp$ is always\footnote{This is common knowledge at least since F.W. Warner \cite[Th. 4.5]{Wa}, see also explicit H .~Koehler's  \cite[Prop.~4.1]{Koe14}, where  S.~Ivanov is cited as the author.} $C^{1,1}$ and, following R. Bryant\footnote{\label{footRobert}See   \cite[Th. 4.5]{Wa} and   
\href{https://mathoverflow.net/questions/479094/how-badly-does-the-geodesic-exponential-map-fail-to-be-c2-on-finsler-manifold}{MathOverflow (10/02/2026)}.  Expansions of the metric in normal coordinates using curvature 
can be found in \cite{Pfeifer}.}, all the derivatives can be also bounded. Indeed, there exists a smooth map
  $E:(-\epsilon,\epsilon)\times \Sigma_p\rightarrow \R^n$ such that 
 \begin{equation}\label{e_Bryant1}
\exp_p(ru)=ru+r^2E(r,u) \qquad \forall u\in \Sigma_p, \quad   r\in [0,\epsilon), 
 \end{equation}
and   the curve 
 \begin{equation}\label{e_Bryant2} \gamma_u(t)= t u + t^2 E(t,u), \qquad \hbox{for} \,  |t|<\epsilon , 
 \end{equation}
is a smooth geodesic. Such estimates can be transmited to the Lorentz-Finsler case, see the proof of \cite[Lemma 6.6]{JMPS25}.

Then, the extension of this Finsler setting to the Lorentz-Finsler one turns out similar to the extension from Riemann to Lorentz manifolds,   even though it is  technically more involved, see  \cite{AJ16,  CJ_Gelomer,  Min15b, Min17b} (some of them sharpening usual Lorentzian regularity).  In particular, when $L$ is Lorentz-Finsler, timelike geodesics become the local maximizers of the energy functional. Morever, lightlike $L$-pregeodesics agree with cone geodesics, thus, the former depends only on the associated cone structure\footnote{It is worth pointing out that the conformal invariance of lightlike pregeodesics is valid in any signature (where causality properties cannot be claimed). This occurs both,  in the semi-Riemannian and in the semi-Finsler cases (in the latter, even anisotropically conformal changes are  permitted), see \cite{JS20}.}  $\C$. 
The regularity of $\exp$ is enough to ensure that, locally, the geometry and causality of a Lorentz-Finsler manifold mimics the one of a Lorentz Minkowski norm (thus, the one of a Lorentz-Minkowski scalar product). 
  In particular, (as in 
\cite[Prop. 5.34, Lemma 14.2]{O}):  
\begin{prop}
Let $(M,L)$ be Lorentz Finsler,  $U$  a normal neighborhood  of $p\in M$ and $q=\exp_p u\in U$. 
\bit\item
{(\em Local causality.)} Exists a timelike (resp. lightlike) curve from $p$ to $q$ in $U$ if and only if  $u$ is timelike (resp. lightlike) 
\item {\em (Local maximization.)} In this case, the longest causal curve (up to parametrization) from $p$ to $q$ in $U$ is the geodesic 
$[0,1] \ni t\mapsto \exp_p(tu)$.  
\eit \end{prop} 
Then, the basic causality relations $I^\pm, J^\pm, E^\pm $ (see \eqref{e_IJE}) and {\em time-separation} (Lorenzian distance) maintain the properties for relativistic spacetimes. The {\em causal ladder} is also consistently defined and mimicks the standard  Lorentzian one.

\subsection{Nonlinear, anisotropic and Finsler connections}
Next, our basic reference is \cite{JSV_Gelocor} (which takes into account  progress in previous \cite{J19, J20}), apart from standard books on Finsler Geometry such as \cite{BCS,Sh}.

For the tangent bundle $TM\rightarrow M$, natural coordinates  $(x,y)$ $=(x^1, \dots x^n, $ $ y^1, \dots y^n)$,  induced from  coordinates $x=(x^1, \dots , x^n)$ on $M$, are used.
The derivatives along the fiber (type $\partial_{y^k}$) will be called {\em vertical}. We  will consider  $\bar A\subset TM$, the (causal) domain of a Lorentz-Finsler metric $L$, but also   applies to the slit tangent bundle $TM\setminus \mathbf{0}$ of a (semi-)Finsler one.

\subsubsection{Geodesic spray}
Starting at geodesics, a number of geometric structures emerge by taking vertical derivatives (then, decreasing the order of homogeneity), we will follow \cite[\S 6.1]{JSV_Gelocor}.
The geodesic equation \eqref{e_geodesicas} is naturally associated with the {\em geodesic spray} $G$, a vector field defined on the tangent space  to the cone domain  ($T\bar A$ included in the second order tangent space  $T(TM)$)\footnote{The following notions are defined locally, so that they make sense  on  open subsets of $TM$. This includes not only $A$ but also $\bar A$ because $L$ can be locally extended beyond $\bar A=\C$ (recall the last item below Def. \ref{d_LorentzNorm}).}.  $G$ is defined  as:

\begin{equation} \label{e_Geod}
	{G}_{(x,y)}=y^{a} \frac{\partial}{\partial x^{a}}|_{(x,y)} - 2 {G}^{a}(x,y)\frac{\partial}{\partial y^{a}}|_{(x,y)}, \quad \hbox{where} \; 2{G}^{a}(x,y)=\gamma_{ij}^{a}(x,y)y^{i}y^{j} 
\end{equation}
(always $i,j,k,a= 1, \dots , n$, and sum in repeated indices). 
Notice that  each function  $G^a$ is positive 2-homogeneous in $y$  (i.e., $G^a(x,\lambda y)=\lambda^2 G^a(x,y)$ for $\lambda >0$.). Then,  there exist positive  1-homogeneous functions ${N^a_i}$ such that 
\begin{equation} \label{e_Non_linear}  {N^a_i}(x,y)= \frac{\partial G^a}{\partial y^i}(x,y), 
\end{equation}
and the {\em Euler theorem} ensures
$2G^a(x,y)=N^a_i(x,y)y^i$.

\subsubsection{Nonlinear  connections}
Now, the functions ${N^a_i}$ can be regarded as the components of a {\em nonlinear connection} ${\nu}$  on 
the  fibered manifold   $\bar A\rightarrow M$. Indeed, $T\bar  A$ admits a natural vertical subbundle $\Ver \bar A$ (composed by the velocities of curves included in each fiber) and a  nonlinear connection can be defined as a vector bundle homomorphism 
 $\nu: T\bar A\rightarrow \Ver \bar A$ such that $\nu|_{\Ver \bar A}$ is the identity. Then, the distribution $\Hor \bar A:=$ ker$(\nu)$ also characterizes $\nu$ and gives a decomposition $T\bar A=\Hor \bar A \oplus \Ver \bar  A$. In our case, using natural coordinates $(x,y,\dot x, \dot y)$ in $T(TM)$:
\begin{equation}
\label{e_delta}
\Hor_{(x,y)}A= \hbox{Span}_{\{i=1,\dots ,n\}}\left\{\left. \frac{\delta}{\delta x^i}\right|_{(x,y)} := \left. 
\frac{\partial}{\partial x^i}\right|_{(x,y)} - N^a_i(x,y) \left. \frac{\partial}{\partial y^a}\right|_{(x,y)}
\right\},
\end{equation}
which is invariant by the homotheties $h_\lambda: A \rightarrow A$, 
$v_p \mapsto \lambda v_p$, 
i.e., 
$\Hor_{(x,\lambda y)}A$ $=
d(h_{\lambda} )_{(x,y)}(\Hor_{(x,y)}A)$, for $\lambda>0$.

\subsubsection{Cartan tensor}
Computing explicitly from \eqref{e_Non_linear}, \eqref{e_Geod},
\begin{equation}\label{e_nonlinear_explicit}
{N^a_i}=\gamma^a_{ij}y^j-g^{aj}C_{ijk} \gamma^k_{lm}y^ly^m,
\end{equation}
where $C_{ijk}=\partial_{y^k}g_{ij}/2$ are coordinates of the {\em Cartan tensor} $C$ of $L$. Notice that, in the semi-Riemannian case $g_{ij}$ depends only on the point, thus $C_{ijk}=0$. In general, at each point $p$, $C$ is the differential (vertical derivative) of $g_{ij}$ restricted to $T_pM$ that is, $C_p$   measures how far  $g_{ij}(x(p),\cdot )$ is  from being semi-Euclidean.

\subsubsection{Anisotropic connections}\label{ss_nonlinearconncetions}
As the functions $N^a_i$ are positive homogeneous, Euler theorem also yields 
\begin{equation}\label{e_homogN}
	N^a_i(x,y)=\partial_{y^j} N^a_i(x,y) y^j  =  \mathring{\Gamma}_{ij}^a(x,y) y^j ,
\end{equation} 
where
$ \mathring{\Gamma}_{ij}^a(x,y)  :=\partial_{y^j} N^a_i(x,y)$ is positive $0$-homogeneous in $y$, that is, $\mathring{\Gamma}_{ij}^a(x,\lambda y)=\mathring{\Gamma}_{ij}^a(x,y)$ for $\lambda>0$. One could  define a covariant derivative by using the nonlinear  connection of a (causal) vector field $Z$ in the direction of a coordinate vector, namely\begin{equation}\label{e_dc}
\begin{array}{rl}
    D_i Z(x) & = \left( \frac{\partial Z^a}{\partial x^i}(x) + N^a_i(x, Z(x))\right) \partial_{x^a} 
    \\ & = 
 \left( \frac{\partial Z^a}{\partial x^i}(x) + \mathring{\Gamma}_{i j}^a(x, Z(x)) Z^j(x) \right) \partial_{x^a}.
\end{array}
\end{equation}
Notice, however, that this covariant derivative would not be linear in $Z$ because  $\mathring{\Gamma}_{i j}^a$ is also dependent of  $y=Z(x)$.\footnote{Such a derivative has some applications. For example, in \cite[\S 7]{JSV_Gelocor}),  it is used to define a (non-linear) parallel transport $Z$  for a causal vector $Z_x$ along a curve $\gamma$, regarding $Z(x)$ as an unknown. Then, the so-obtained parallel-transported  $Z$  is used to define a {\em linear} parallel transport along $\gamma$ by using $\nabla^Z$ defined below. } 

In case that  $\mathring{\Gamma}_{ij}^a (x,y) \equiv \Gamma_{ij}^a(x) $  
is fully independent of $y$, we can put 
$$
N^a_i(x,y)=
{\Gamma}_{ij}^a(x) y^j.$$ Then,  the functions  $\Gamma_{ij}^a(x)$ obey the standard rule of transformation of the Christoffel symbols, and  the non-linear connection becomes  {\em affine}
(or {\em linear}). The expression \eqref{e_dc} can be seen as a covariant derivative $\nabla$ on vector fields 
$(X,Y) \mapsto \nabla_X Y$ satisfying the expected elementary properties of  linearity  and Leibniz rule. In particular, it turns out  that $(\nabla_X Y)_p$ only depends of the value of $X$ at $p$, i.e., $X_p=\bar X_p \Longrightarrow (\nabla_X Y)_p=(\nabla_{\bar X} Y)_p$.

An {\em anisotropic connection} on the bundle $\bar A \rightarrow M$ can be defined in coordinates, by using functions $\Gamma^a_{\,\,  ij}(x,y)$
which depend on the oriented direction but satisfy the usual cocycle rule
\begin{equation}
\label{e_cocycle Chris}
 \bar \Gamma^a_{\,\, ij }(\bar x(x,y), \bar y(x,y))=
\frac{\partial  \bar x^a}{\partial  x^m}(x) \left(
\frac{\partial^2    x^m}{\partial \bar x^i \partial \bar  x^j}(x) + \frac{\partial  x^k}{\partial \bar x^i}(x) \, \frac{\partial  x^l}{\partial \bar x^j}(x) \, \Gamma^m_{\,\, kl } (x,y)
\right). 
\end{equation}
Indeed, now the anisotropic connection $\nabla$ applied to the coordinate vector fields $\partial_i, \partial_j$
is a {\em vertical vector field on $\bar A$} defined as:

\begin{equation}\label{e_christoff}
\nabla_{\partial_i}\partial_j|_{(x,y)}=\Gamma^a_{\,\, ij  }(x,y)\, \partial_{y^a}|_{(x,y)} \in \mathcal{V}(TM),
\end{equation}
with independence of the chosen   natural coordinates $ (x,y)$ on  $\bar A$.
By imposing usual linearity and Leibniz rules,  one defines $\nabla_X Y$ for any vector fields $X,Y$ on $M$ obtaining an {\em anisotropic connection} $\nabla$ (in coordinates). 

Now, let us rewrite \eqref{e_christoff}  in a fashion closer to affine connections. Choose
any  vector field  $Z: M \rightarrow  \bar A$ (that is, a causal vector field in the Lorentz-Finsler case or any non-vanishing vector field on $M$  in the Finsler case), and consider then the composition $$\nabla_XY \circ Z: M\rightarrow \mathcal{V}(TM).$$ Identifying naturally each vertical vector $\nabla_XY\circ Z|_{p}$ with a vector, denoted 
$\nabla^Z_XY|_p$ which belongs to $T_pM$ we obtain finally:  
 whenever a (causal) vector field  $Z$ is prescribed, one has  an {\em affine connection,
\begin{equation}\label{e_ca}
(X,Y) \rightarrow \nabla^Z _X Y,
\end{equation}
depending smoothly on $Z$}. Notice the notation and  identifications: 
$$\bar A \ni Z_p\mapsto \nabla^{Z_p}_XY:= 
\nabla_XY|_{Z_p} \in \mathcal{V}_{Z_p}(TM)
\equiv T_pM  \subset TM.$$

Indeed, \eqref{e_ca} can be used to define an {\em anisotropic connection} in an abstract (coordinate free) way, the  formal  definition can be found in \cite[\S 3]{JS20}. 
This abstract definition also recovers the fact (evident from above) that  $\nabla^Z _X Y$ depends, at each point $p$,  on the local value of $Y$ around $p$ and  only on the values of $Z$ and $X$ at $p$.  

Next, our aim is to see first that any non-linear connection $N^a_i$ has a naturally associated anisotropic connection $\Gamma^a_{\,\, ij }$ but, when $N^a_i$ comes from a semi-Finsler metric, a second anisotropic connection appears.

\subsubsection{Anisotropic connections of Berwald and Chern} \label{ss_Anisoropicconnections}
 A semi-Finsler metric  naturally yields two anisotropic connections defined  as in \eqref{e_christoff}, in terms of  Christoffel symbols (satisfying \eqref{e_cocycle Chris}): 
 
 \bit\item {\em Berwald anisotropic connection}. ts Christoffel symbols are  $\mathring{\Gamma}_{ij}^a(x,y) $ as in \eqref{e_homogN}  and they can be defined for any nonlinear connection $N^a_i$.
  In our case, recall that $N^a_i$ came from the formal ones $\gamma_{ij}^k(x,y)$   obtained from the fundamental tensor $g_{ij}(x,y)$   as in semi-Riemannian geometry (indeed, they appeared in the  spray \eqref{e_Geod}, thus in \eqref{e_Non_linear},  \eqref{e_nonlinear_explicit}).

When the anisotropic Berwald connection is  affine, the (semi-) Finsler manifold is called {\em  Berwald}.  This is characterized by the vanishing of the Berwald tensor (positive homogeneous of degree -1), obtained by taking the vertical derivative of the Berwald anisotropic connection.  In the positive definite case, Szab\'o \cite{Sz} showed that the associated affine connection of a  Berwald manifold is the Levi-Civita one of a Riemannian metric (see \cite{FHPV} for the indefinite case).
  
 \item {\em Chern anisotropic connection}.  its Christoffel symbols  $ ^C\Gamma_{ij}^a $  are computed as the  formal ones\footnote{Notice 
 that the  $\gamma_{ij}^a(x,y)$'s do 
 not satisfy the cocycle rule in \eqref{e_cocycle Chris}, but the 
 transformation obtained replacing in the left hand side therein 
 $\bar y(x,y)$ by $y$.} 
 $\gamma_{ij}^a(x,y)$
but now replacing the derivatives $\partial/\partial {x^i}$ by the corresponding derivatives $\delta/\delta x^i$ in \eqref{e_delta}.
It plays a role similar to the Levi-Civita connection in semi-Riemannian geometry. Indeed, it  is the unique symmetric anisotropic connection parallelizing $g$, \cite[Sect. 4.1]{J19}. Another interpretation in terms of $\nabla^Z$ can be found in \cite[Proposition 3.9]{J20}; in 
particular, if $Z$ is taken parallel at a point $x$ (a property which depends only on the nonlinear connection, see \eqref{e_dc}), then 
the Christoffel symbols of the semi-Riemannian metric $g_Z$ agree at $x$ with those of Chern's anisotropic connection at $(x,Z_x)$, see 
\cite[formula (7.17)]{Shen01}.  

The difference between the Chern and Berwald anisotropic connections is (up to metric equivalence) the so-called   {\em Landsberg tensor}. When this tensor vanishes, the semi-Finsler manifold is called {\em Landsberg}.  (The vertical derivative of the Chern anisotropic  connection also yields a tensor, which is the sum of the Berwald tensor and the vertical derivative of the Landsberg one.) 

  \eit
  Explicitly, the Berwald and Landsberg tensors, as well as  the contraction of the latter called {\em mean Landsberg}, are written in coordinates (starting at the geodesic spray in \eqref{e_Geod}), respectively as follows: 
  \begin{equation}\label{e_Berwald_et_al}
B^a_{ijl}:= 
{G}^{a}_{\cdot i\cdot j\cdot l}  = \mathring{\Gamma}_{ij\cdot l}^a  , \qquad 
  {\mathrm{Lan}_{ijl}}:=\frac{1}{2} y_a B^a_{ijl}, \qquad {\mathrm{Lan}_l}:=g^{ij} \mathrm{Lan}_{ijl}
\end{equation}
  where the dot symbol denotes vertical derivative (that is,  $\cdot \alpha$ means $\partial_{y^\alpha}$). 
  Notice that  $y_a:=y^b g_{ab}$,  showing the Landsberg dependence on the semi-Finsler metric and not only on the geodesic spray. 
  Moreover, the difference between  the anisotropic connections $\mathring{\Gamma}_{ij}^a$ and $^C{\Gamma}_{ij}^a$  becomes $g^{ab} \mathrm{Lan}_{ijb}$. 

  
   Summarizing, these tensors vanish  for  semi-Riemannian manifolds, 
the Berwald tensor measures  to  what extent the nonlinear connection is truly non-linear (i.e., non-affine) and the Landsberg one  to  what extent the aniso\-tropic Chern connection is not determined only by the non-linear connection.  

  Trivially, a Berwald manifold ($B^\mu_{\nu\rho\sigma}\equiv 0$) is a Landsberg one (${\mathrm{Lan}_{\mu\nu\rho}}\equiv 0$), and the latter is a weakly Landsberg one (${\mathrm{Lan}_\mu}\equiv 0$),   however, the converse to the first one is a major open question   in Finsler Geometry.
   Indeed, following D.~Bao, Landsberg non-Berwald manifolds are called {\em unicorns}. Unicorns  {\em with singularities} have been found in both Finsler \cite{Elgendi} and Lorentz-Finsler manifolds \cite{Heefer, HPRF} (see \S \ref{ss_unicorn} below). However, the inexistence of (regular) unicorns, or {\em Landsberg-Berwald conjecture}, is  a major open question in  Finsler Geometry, see \cite{Crampin}. 

	\subsubsection{Finsler connections}\label{ss_Finslerconnections}
A step forward is to define {\em Finsler connections}. Recall that the nonlinear connection $\nu$ was defined for the bundle $\bar A\mapsto M$ and provided the decomposition  $T\bar A=\Hor \bar A \oplus \Ver \bar A$. Finsler connections are linear connections for the bundle $\Ver \bar A \rightarrow \bar A$, also naturally invariant by homotheties $h_\lambda$. In order to specify the linear connection as a horizontal distribution, the decomposition of $T\bar A$ provided by $\nu$ reduces it to locally specify {\em horizontal} and {\em vertical} functions $H_{ij}^k, V_{ij}^k$ on $\bar A$ satisfying, respectively, the cocycle transformation of coordinates for an anisotropic  connection  and  a  tensor. The natural options for   
$H_{ij}^k$ are the Christoffel symbols of  the  Berwald and Chern anisotropic connections, while those for $V_{ij}^k$ are the zero and Cartan tensors. This yields the four classical linear connections:
\ben\item Berwald: $H_{ij}^k= \mathring{\Gamma}_{ij}^k$ (Christoffel symbols for Berwald), $V_{ij}^k=0$. 
\item Chern: $H_{ij}^k= \; ^C \Gamma_{ij}^k$ (Christoffel symbols for Chern), $V_{ij}^k=0$.   
\item Hasiguchi:  $H_{ij}^k= \mathring{\Gamma}_{ij}^k$,   $V_{ij}^k=  C_{ij}^k$   (Cartan tensor). 
\item Cartan: $H_{ij}^k= \; ^C  \Gamma_{ij}^k$,   $V_{ij}^k=C_{ij}^k$. 
\een
Berwald and Chern linear connections are {\em vertically trivial} (i.e. $V_{ij}^k=0$), thus,  equivalent to their anisotropic counterparts. 
    Indeed, the (anisotropic) Christoffel symbols $ \Gamma_{ij}^k(x,y)$ for  the latter ones can be regarded as the corresponding (linear)
      Christoffel symbols for the former ones (see formula (30) in \cite{SV}, where $\Delta$ therein vanishes for a vertically trivial connection). Given a curve  $s\mapsto Z(s)=(x(s),y(s))$ in $A$, one has naturally  a parallel transport associated with the linear connection,  as well as an anisotropic parallel transport associated with $\nabla^Z$, the latter with further possibilities see \cite[Sect. 7]{JSV_Gelocor}. 
\subsubsection{Curvatures}\label{ss2_5}

In Riemannian geometry, curvature arises uniquely from the Levi-Civita connection, but in Finsler geometry curvature must be defined either via the  nonlinear connections or via the non-unique linear connections on $\Ver \bar A \rightarrow \bar A$.

In general, the \textit{curvature}, the \textit{(anisotropic) Ricci scalar} and the \textit{torsion} of a homogeneous nonlinear connection $N$ can be regarded as homogeneous 
tensors with coordinates (always depending on $(x,y)$):
\begin{equation}
	\RN_{ij}^a=\delta_j N_i^a-\delta_i N_j^a,\qquad \mathrm{Ric}=  y^b\,\RN_{ba}^a   ,\qquad\torN_{ij}^a= \partial_{y^j}N_{i}^a- \partial_{y^i} N_{j}^a .
	\label{eq:curvature and torsion}
\end{equation}
The tensor $\RN_{\;  j}^k= y^b \RN_{bj}^k$, whose contraction yields the Ricci scalar, may be  regarded as a {\em predecessor} of the flag curvature below  (see \cite[Ex. 2.5.7]{BCS}). By \eqref{e_Non_linear}, the torsion vanishes for the nonlinear connection  from a geodesic spray.  It is worth noticing
$$
\left[\frac{\delta}{\delta x^i},
\frac{\delta}{\delta x^j}\right] = \; 
\RN_{ij}^a \; \frac{\partial}{\partial y^a} \; , 
$$
so that the horizontal distribution is integrable around a point $(x,y)\in \bar A$ if and only if $\RN_{ij}^a \equiv 0$ around $(x,y)$. 

For Finsler linear connections, consistently with the horizontal and vertical splitting in their definitions, the curvatures have three parts, labelled $hh$, $hv$, $vv$ \cite[p. 50]{BCS}. Focusing on the 
  Chern connection, the $vv$ part vanishes and the $hh$ one is described by  the {\em first Chern curvature tensor} \(R_j{}^i{} _{k  l}(x,y)\). The (0-homogeneous) components of this tensor are computable from the Chern Christoffel symbols $\Gamma^k_{ij}$ in a way formally analog to the Riemannian one, but again replacing $\partial/\partial{x^i}$ by $\delta/\delta {x^i}$ (see formulas (3.3.2) and (3.3.3) in \cite{BCS} for the explicit expression of \(R_j{}^i{} _{k  l}(x,y)\) as well as the remaining {\em hv} part of the curvature). Remarkably, \(R_j{}^i{} _{k  l}(x,y)\) agrees  (up to sign) with the {\em curvature tensor for the Chern anisotropic connection}, as defined in \cite{J19}. 
  
  From this tensor, 
   one can define the Finslerian analogue of sectional curvature, namely,  the \emph{flag curvature},
\[
K(x,y,\Pi) = \frac{v^i ( y^j R_{j i k l} \, y^l) v^k }{ g_y(y,y)g_y(v,v) - g_y(y,v)^2 },
\]
Note that $K(x,y,\Pi)$
 depends on the direction (flagpole) $y$
and the plane \(\Pi = \mathrm{span}\{y,v\}\) (flag) but not on the choice of $v$. The Ricci scalar $\mathrm{Ric}(x,y)$ can be computed as an average of the flag curvatures with flag pole $y$. In the literature, a Ricci tensor $\mathrm{Ric}_{ij}(x,y)$ introduced by Akbar-Zadeh and containing the same information of the Ricci scalar is also used \cite[p. 192]{BCS}.
Moreover, a weighted Ricci  curvature was introduced by Ohta \cite{Ohta, Ohta2},  in order to adapt  the well-known N-Bakry-\' Emery-Ricci tensor on a weighted Riemannian manifold (see for example \cite{Lott}) to the Finsler setting.
It plays a relevant role in comparison  theorems and yields splitting results \cite{Ohta1} extensible to the Lorentz-Finsler case \cite{LuMinOhta2}. 
    
\subsection{Symplectic and contact viewpoints}\label{ss_contact} 
This part follows the classical setting  in Mechanics \cite{AbrahamMarsden, Arnold}, adapted when necessary to the Finsler \cite{Dahl} and Lorentz-Finsler \cite{Hedickepreprint} cases.   
Notice that the usual notation  in Mechanics for coordinates in the  bundles $TM$ and $T^*M$ are rewriten here as $q^i,\dot q^i\equiv x^i,y^i$ and  $q^i, p_i\equiv x^i,p_i$, resp. Morever, in Finslerian Mechanics one tipically uses $TM\setminus \mathbf{0}$ and its dual bundle   and we will be particularly interested in the set of causal vectors 
$\bar A\subset TM\setminus \mathbf{0}$ and its   dual $\bar A^*$ of causal forms.  

\subsubsection{Basic Hamiltonian and Lagrangian approach}
The cotangent bundle $\hat \pi: T^*M\rightarrow M$  admits the {\em tautological form}, written in bundle  coordinates $(x^i,p_i)$ (with sum in repeated indexes)   
 
$$\theta= 
p_i dx^i  
, \quad \hbox{that is}, \quad \theta(y^i \partial_{x^i}+\dot p_j \partial_{p_j})|_{(x,p)}= p(y^i \partial_{x^i}|_x)= p_i y^i ,
$$
where $(x^i,p_i, y^i,   \dot p_i)$ are coordinates in $T(T^*M)$ and $y^i \partial_{x^i}=\hat \pi_*(y^i \partial_{x^i}+\dot p_j \partial_{p_j})$, making $\theta$ independent of coordinates. Then, the {\em Poincar\'e  2-form},
$$
\omega = -d\theta= dx^i \wedge dp_i . 
$$
 is {\em symplectic} i.e., closed with $\omega \wedge  \dots ^{(n)}\wedge \omega \neq 0$.  By Darboux theorem, every symplectic form looks  like the above in suitable coordinates. The fiber of $TM^*$ at each $p\in M$ is a {\em Lagrangian submanifold}, i.e., $\omega\equiv 0$ on the submanifold and it has maximum dimension $n$.
 
 Moreover,  $\omega$
also yields a linear isomorphism ({\em symplectic gradient})
$$
T(T^*M) \longrightarrow T^*(T^*M), \quad X\mapsto i_X\omega (\equiv \omega(X,\cdot )). 
$$  
Notice that this map restricts naturally when $T^*M$ is replaced by $\bar A^*$.  Thus, it permits to define, for any function $H:  \bar A^*\subset  T^*M\rightarrow \R$ its {\em Hamiltonian vector field $X_H$}, with integral curves $t\to (x(t),p(t))$,  characterized by
 $$\begin{array}{c}
 dH=
 i_{X_H}\omega, \; \;\; \hbox{then,} \; \; \; X_H=
 \frac{\partial H}{\partial p_i} \frac{\partial }{\partial x^i} 
 -
 \frac{\partial H}{\partial x^i} \frac{\partial }{\partial p_i}, 
\; \hbox{and }  \; y^i=\frac{\partial H}{\partial x^i}, \, \dot{p}_i= -
 \frac{\partial H}{\partial x^i},  
 \end{array}$$   
the last pair being the {\em Hamilton equations}. The flow of $X_H$ preserves the symplectic structure and the {\em Hamiltonian} $H$ becomes constant along the integral curves of $X_H$.

A (2-homogeneous) Finsler or Lorentz-Finsler metric   
must be regarded as a {\em Lagrangian} $L/2: \bar A\subset TM\rightarrow \R$ (the factor 1/2 being convenient). 
It permits to define the {\em generalized momenta} in coordinates
$$p_i:=\frac{1}{2} \frac{\partial L}{\partial y^i}= \frac{1}{2}\frac{\partial g_{jk}}{\partial y^i} y^jy^k+ g_{ik}y^k,$$ 
so that $p_idx^i$ recovers the Legendre map (Def. \ref{d_LegendreLorentz})  in our case:
\begin{equation}\label{e_LegendreLorentz2}
\flat: \bar A\rightarrow \bar A^*, \qquad  y^j \partial_{x^j}|_x
 \mapsto p_i dx^i|_x = g_{ij}(x,y) y^j  d x^i|_x ,
\end{equation}
just noticing that, by the $0$-homogeneity of $g_{ij}$ and Euler's theorem, 
$$
\frac{1}{2} \frac{\partial g_{jk}}{\partial y^i} y^jy^k \left(dx^i(y^l \partial_{x^l})\right)= \frac{1}{2}
y^jy^k \left( \frac{\partial g_{jk}}{\partial y^i} y^i\right)=0. 
$$
The standard {\em Hamiltonian $H_L$ associated with $L$} is its {\em Legendre transform},
$$
H_L(x,p)=p_i \, y^i(x,p)-\frac{1}{2}L(x,y(x,p))=\frac{1}{2}L(x,y(x,p)).
$$ 
Thus $H_L=L\circ \sharp$ and the Legendre map sends bijectively 
the  solutions $\gamma$ of the Lagrange equations  \eqref{e_geodesicas}  
into solutions $(\gamma')^\flat$ of Hamilton's preserving the (kinetic and total) energy $E=L(\gamma')/2=H((\gamma')^\flat)$.  

\subsubsection{The Hilbert form}\label{ssHilbert_form} Formula \eqref{e_LegendreLorentz2} can be alternatively interpreted as   defining   a  1-form $\eta _g$ on $\bar A$,   the {\em Hilbert form},   obtained   as   the pullback\footnote{In the Finsler case some times this is divided by $F(x,y)$. However, this would introduce an issue in the Lorentz-Finsler case for lightlike geodesics. } 
$$ \eta _g:=\flat^*\theta= 
g_{ij}(x,y)y^i dx^j|_{(x,y)}.
$$ 

In a natural way, $\omega_g=-d\eta_g$ provides a symplectic form on $T\bar A$ (see  \cite[Prop.7.13]{Dahl}). Then, the $\omega_g$-Hamilton equations for $L$ (regarded as a Hamiltonian function) are equivalent to the geodesic equations for $(M,L)$, which turns out the Lagrange equations for the Lagrangian $L/2$. 
Indeed,  
the $L$-geodesic vector field $G$ on $\bar A$ 
in \eqref{e_Geod} 
is the $\omega_g$-Hamiltonian one $X_L$, see \cite[Proposition 7.19]{Dahl}.

As the integral curves of $G$ are the velocities of the geodesics,  $\eta_g(G|_{(x,y)})=g_{ij}(x,y)y^i y^j$ is constant along each geodesic and vanishes only on the lightlike ones. In particular, $G$ is tangent to the indicatrix $\Sigma=L^{-1}(1)$  and $\eta_g(G)|_{\Sigma}\equiv 1$ for  both the Finsler and Lorentz-Finsler cases.

\subsubsection{Contact geometry}
 $\Sigma$
is a hypersurface of $T A$  with odd dimension $2n-1$, where the standard contact setting for the Finsler case \cite{Dahl} extends  to the Lorentz-Finsler one \cite{HedickePhD, Hedickepreprint}. 

A contact form $\eta$ is a 1-form  satisfying $\eta \wedge
d \eta \wedge \dots \wedge ^{(n-1)}d\eta \neq 0$ so that ker $\eta$ is a {\em contact structure}, that is, a distribution of hyperplanes which is maximally non-integrable, in the sense that the distribution can be locally expressed from some  $\eta$ as above. We will be interested in the case that this expression is global (i.e.,   the distribution is {\em co-orientable}). Anyway,  the contact structure is also determined   by $f\eta$, with $f\neq 0$ (a non-vanishing function). The primary example of a contact structure is  the {\em spherical cotangent bundle} $ST^*M$ of  $M$ (i.e.  the quotient of $T^*M\setminus \{0\}$ by the $\R_{>0}$ action of positive homotheties), where the tautological 1-form $\theta$ induces the required hyperplane distribution. The fibers of this bundle are {\em Legendrian submanifolds}, i.e., the contact form vanishes on them and they have 
the maximum  dimension $n-1$. 
 
A contact form provides a unique {\em Reeb vector} field, characterized by  
$$\eta(R)= 1, \qquad i_R 
d\eta=0.$$     For each function $f$ on $\Sigma$,  
it also yields the   {\em contact Hamiltonian 
vector field } $X_f$   satisfying:  
$$\eta(X_f)= f, \qquad i_{X_f} 
d\eta=  df(R) \eta    -df.$$ 
$X_f$ is a {\em contact vector field}, i.e., its  flow is composed by contactomorphisms (which preserve the contact structure but not necessarily the contact form). Notice that, when $f\not = 0$, then $X_f$ is the Reeb vector of $  \eta /f$.  
     
The Hilbert form $\eta_g$ vanishes on the the radial vector field $y^i\partial_{y^i}$, which is transversal to $\Sigma$. Its restriction $\eta$ to $\Sigma$ has rank $2n-2$ and becomes the natural contact form $\eta$ on $\Sigma$  (see \cite[Example 2.1.1]{HedickePhD} and references therein).
Its Reeb vector $R$ is just the restriction 
of the geodesic field $G$ to $\Sigma$. 
Moreover, one can check that $\eta=R^\flat$, where the  isomorphism 
$\flat:TA\to TA^*$ is computed   using the {\em Sasaki metric} (see \cite[Sect. 4.1]{Dahl}) restricted to $\Sigma$.

Notice that $\eta_g$ can also be restricted to $\C$, however, the setting changes as $\eta_g(G)|_{\C}\equiv 0$). We postpone its  study to the space of cone geodesics in \S \ref{s3_2}. 
    
\subsubsection{About the Hamilton-Jacobi equation}\label{ss_HJ} When studying the propagation of light in a classical setting, the relation between  particle trajectories and  wavefronts is well established in classical Mechanics from Hamilton-Jacobi equation (see \cite{AbrahamMarsden, Arnold} or the  thesis \cite{Vaquero}). 

When considering light propagation or shortest paths in Riemannian geometry and geometric optics, Hamilton-Jacobi reduces to the eikonal  equation,  namely, $|\nabla u|(x)=n(x)$, where $n(x)$ is a positive function, tipycally either $n(x)\equiv 1$ or the refraction index $n(x)=c/v(x)$ ---the quotient between the speed of light in vacuum and in (isotropic) media. 

The function $n(x)$ can be regarded as a cost function to be locally minimized by the trajectories, so that the wavefronts become the level sets of an appropriate distance function. 
This setting will underlie in the applications to both Lorentz Geometry and Classical Mechanics to be developed next.

\section{Global Lorentz-Finsler Geometry}\label{s3}

As explained in \S \ref{s2_3_1}, most of classical local Lorentz Geometry and global  causality can be transplanted to the Lorentz-Finsler setting. Next, we focus on globally hyperbolic Finsler spacetimes  to explain some  topics of interest concerning their global structure. 

Recall that these spacetimes are defined as causal  with compact diamonds $J(p,q):=J^+(p)\cap J^-(q)$, as in the Lorentz case \cite{BS07}.

\subsection{Finsler globally hyperbolic splittings}\label{s3_1} In this subsection, the main aim is to achieve the   Finslerian splittings in Theorems \ref{t_splitting}, \ref{t_splitting2} ensuring also additional properties listed below (Remark \ref{r_splittings}).  Reference \cite{Sa22}  expands our Lorentzian setting here and other Finsler developments.   

\subsubsection{The Lorentzian splitting}\label{s_items}
Classical Geroch's theorem \cite{Ge} asserts the equivalence,  in the Lorentz case, between the  global hyperbolicity of a spacetime $(M,g)$ and the existence of a topological acausal Cauchy hypersurface $S$. What is more, he  constructed a Cauchy time function $t$ with all the levels $t=$ constant Cauchy hypersurfaces (with the same properties as $S$) and, then, a homeomorphism between $M$ and $\R\times S$.  The long-standing question whether $S$ or even the Cauchy time  $t$ could be found smooth, was answered in the positive by Bernal and the author 
in \cite{BS03} and \cite{BS05}, respectively. The last reference, however, not only proved the existence  of a smooth Cauchy $t$ but also the existence of a Cauchy temporal function $\tau$  (i.e, smooth with timelike gradient 
$\nabla \tau$). This not only implies the improvement of the topological splitting $\R\times S$ into a smooth one with spacelike Cauchy hypersurfaces, but also the existence of an associated smooth  global orthogonal splitting, that is, with no cross terms between the $\R$ and $S$ parts. 
On  the one hand such splitting admits   first-principles implications about the possibility to recover globally the notions of time and space \cite{S25} and, on the other,  
Cauchy temporal functions $\tau$ also exhibit more technical advantages: 

\ben\item \label{item1}   $\tau$ is also Cauchy temporal for  
$C^0$ close metrics (an intuitive fact, which holds even when  timelike boundaries are permitted, see \cite{AFS}). In particular, the corresponding splitting  shows the stability of global hyperbolicity too (a fact claimed in  Geroch's original article \cite{Ge}). 

\item \label{item2}  Such $\tau$'s are flexible in terms of Cauchy initial data, namely: any  compact spacelike acausal hypersurface $S_0$ can be extended to  a spacelike Cauchy one $S$, which can  be regarded then as the slice $S=\tau^{-1}(0)$ for a Cauchy temporal function \cite{BS06}.  

\item \label{item3} The hypotheses on $\tau$ can be  strengthened   into  {\em steep} Cauchy temporal function (i.e., $\tau$ satisfies additionally $g(\nabla \tau,\nabla \tau)\leq -1$).

 This implies the possibility to obtain a Nash-type \cite{Nash, Gromov} isometric embedding  of the spacetime in Lorentz-Minkowski $\Lo^N$ for large $N$, see \cite{MS}. 

\item \label{item4} As proven by Burstcher and  Garc\'{\i}a-Heveling \cite{AnnegretLeonardo}, the temporal $\tau$ can be chosen  $h$-steep with respect to a complete Riemannian metric (a notion introduced by Bernard and Suhr \cite{BS} see below), which is  a condition more restrictive than being Cauchy temporal (additionally, it can be found also steep, see  
\cite[Appendix C]{BFS25}).

This is important in the context of Sormani and Vega's  null distance 
 $\hat d$ associated with $\tau$, see \cite{SormaniVega}. Indeed,  $h$-steepeness  implies the completeness of $\hat d$,   a crucial  property  for the theory of convergence of metrics. Moreover, $h$-steepness also  becomes equivalent to the completeness of the Wick rotated metric of the conformal  metric $g/|\nabla \tau|^2$, which also has implications for this theory \cite[Sect. 4]{BFS25}.

\item \label{item5} $\tau$ (and, then,  the splitting) can be chosen invariant by any compact group of isometries of the spacetime \cite{Mu16} (among other properties, see \cite{Mu12}).

\item \label{item6} In the case of {\em globally hyperbolic spacetimes-with-timelike-boundary}\footnote{\label{f_gh_con_borde}This means that the manifold $\bar M$  is permitted to admit a boundary $\partial M$ which turns out timelike at all the points, and the compactness of $J(p,q)$ in the definition of global hyperbolicity is imposed by including also the points in $\partial M$ (and taking not only piecewise smooth causal curves but also with locally Lipschitz regularity), see also \S \ref{sss_with_bd} below.}, $\tau$ can be chosen {\em adapted to the boundary}, that is,  with $\nabla \tau|_{\partial M}$ is  tangent to $\partial M$. Then,  $\bar M$ splits orthogonally as $\R\times \bar S$, where $\bar S$ is a spacelike Cauchy hypersurface with boundary, see \cite{AFS}.

\item \label{item7} In case of analytic metrics, the Cauchy temporal function, as well as the associated splitting, can be chosen also analytic,  see \cite[Th. 2.10]{Sa22}. Obviously, the extensibility  from a compact hypersurface $S_0$ to a Cauchy hypersurface seen in item  \eqref{item2} cannot hold in an analytic way. However, the smooth extension $S$ obtained in that item can be approximated by analytic Cauchy spacelike hypersurfaces.

\een

\subsubsection{The Lorentz-Finsler setting} Starting at 
 Fathi and Siconolfi's  article \cite{FS} published in 2012, a number of new approaches  have been developed extending the existence of  Cauchy time functions  to cone structures $\C$, even lowering the regularity considered here. Indeed, they consider cones just     convex, closed, salient  with non-empty
interior and varying continuously, and they  apply   weak KAM theory. Under global hyperbolicity, such a $\C$ is proved to admit a    smooth Cauchy time function. Noticeably, there is  a previous related approach in dynamical systems by Sullivan  \cite{Sullivan}, as well as a posterior one by Monclair   \cite{Monclair1, Monclair2}, the latter  linking     attractors, chain recurrent points and the existence
of time functions in Lorentzian manifolds by using Conley theory. This theory was used shortly after by Bernard and Suhr \cite{BS, BS2} to prove the existence of  temporal functions  satisfying  {\em $h$-steepness} (as mentioned above, this condition   becomes  stronger than Cauchy temporal  in the Lorentz case but independent of steepness,   \cite{BFS25}). Their results hold   for more general  closed cone structures; with this purpose,  a suitable extension of the notion of global hyperbolicity is introduced. 

Summing up,  at least for  regular cone structures, the existence of Cauchy temporal functions and some of the properties  listed in \S \ref{s_items} (as the stability in item \eqref{item1} or the  flexibility in item \eqref{item2}, see  \cite{Min17b}), is well established now by   methods alternative to the original Lorentzian ones ---however,  the possibility of low regular splittings does not seem to have been explored.  Next, we emphasize that the original approach  \cite{BS03, BS05, BS06, MS, AFS} extends  to Lorentz-Finsler manifolds yielding a global splitting and ensuring the properties listed in \S \ref{s_items}. 

\subsubsection{Case without boundary}
Recall that, given a cone structure $\C$, a smooth function $\tau$ is  {\em temporal} when $d\tau$ is timelike everywhere and then,  the (pointwise) map $\sharp$ defines  the  (necessarily timelike) gradient $\nabla \tau=d\tau^\sharp$ (Remarks~\ref{r_gradiente_temporal}, \ref{r_gradiente_final}).

\begin{thm}\label{t_splitting}
Any globally hyperbolic Finsler spacetime admits a Cauchy temporal function $\tau$ and, then, a splitting $\R\times S$ such that:

\bit\item Under the splitting,  $\tau: \R\times S \rightarrow \R$ is the natural projection on $\R$.
\item $\nabla \tau$ is everywhere collinear to  the natural vector field $\partial_\tau$.
\eit
Thus, the factors $\R$ and $S$ of the product are everywhere $g_{\partial_t}$-orthogonal;  moreover, all its  smooth Cauchy hypersurfaces are diffeomorphic.
\end{thm}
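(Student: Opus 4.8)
The plan is to adapt the original Lorentzian construction of Bernal and the author \cite{BS03, BS05, BS06} to the Lorentz--Finsler setting in order to produce the Cauchy temporal function $\tau$, and then to obtain the orthogonal splitting by flowing along the normalized Finslerian gradient of $\tau$. So I would split the argument into: (a) existence of a Cauchy temporal function; (b) construction of the product structure with the stated properties.

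For step (a), I would first recall that, by \S\ref{s2_3_1}, the local causality of a Finsler spacetime mimics that of a Lorentz--Minkowski norm, so that Geroch's construction of a smooth Cauchy \emph{time} function carries over unchanged (this is also known for general regular cone structures, see \S\ref{s_items}). The delicate point is the upgrade to a \emph{temporal} function, and here the only structural input used in \cite{BS05} is that, at each $p$, the timelike $1$-forms form an open \emph{convex} cone in $T^*_pM$. This holds in our setting because $\C_p$ is a strongly convex cone (Def.~\ref{d_cone}), the timelike forms being exactly the interior of its dual cone, which is convex; hence convex combinations --- in particular partition-of-unity sums --- of temporal functions remain temporal. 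I would then check that the local-to-global gluing of \cite{BS05}, combined with the smooth Cauchy time function above, yields a Cauchy temporal $\tau$ with no change beyond bookkeeping.

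For step (b), set $v:=\nabla\tau=(d\tau)^{\sharp}$. Since $\tau$ is temporal, $d\tau$ is a timelike $1$-form everywhere and $\sharp$ is a positive-homogeneous diffeomorphism from timelike forms to timelike vectors (see the discussion after Def.~\ref{d_LegendreLorentz}), so $v$ is a smooth timelike vector field. From $v^{\flat}=g_v(v,\cdot)=d\tau$ and $2$-homogeneity of $L$ one gets $d\tau(v)=v^{\flat}(v)=2L(v)>0$, so
\[
T:=\frac{v}{d\tau(v)}
\]
is a smooth timelike vector field with $d\tau(T)\equiv 1$. I would then argue that the flow of $T$ is complete: a maximal integral curve through $x\in S:=\tau^{-1}(0)$ is a future-inextendible timelike curve along which $\tau$ grows at unit rate from the value $0$, so since the restriction of $\tau$ to an inextendible causal curve is a bijection onto $\R$ (this is what ``Cauchy time function'' means) its domain must be all of $\R$. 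Writing $\Phi\colon\R\times S\to M$, $(t,x)\mapsto\Phi_t(x)$, for the resulting flow, $\Phi$ is injective because $\tau(\Phi_t(x))=t$, surjective because the integral curve of $T$ through any point of $M$ meets the Cauchy hypersurface $S$, and an immersion because $d\tau$ annihilates $d\Phi(T_xS)\subset\ker d\tau$ while $d\tau(d\Phi(\partial_t))=1$; hence a diffeomorphism. Under the identification $M\cong\R\times S$ it gives $\tau=$ (projection onto $\R$) and $\partial_\tau=T$, parallel to $\nabla\tau$ by construction. Orthogonality is then read off the Legendre duality: for $w\in T_xS=\ker d\tau_x$ one has $g_v(v,w)=v^{\flat}(w)=d\tau_x(w)=0$, and since $\partial_\tau|_x$ is a positive multiple of $v$ and the fundamental tensor is $0$-homogeneous ($g_{\partial_\tau}=g_v$), $g_{\partial_\tau}(\partial_\tau,w)=0$.

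The hard part will be step (a): it is not a formal transcription but requires carrying the full smoothing-and-gluing machinery of \cite{BS03,BS05} through the Finslerian local model, the one genuinely new ingredient being the convexity of the cone of timelike $1$-forms, which our strong-convexity hypothesis on $\C$ guarantees. The boundary case of Theorem~\ref{t_splitting2} needs in addition the boundary-adapted refinements of \cite{AFS} and would be handled separately.
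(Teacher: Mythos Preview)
Your overall plan---transplant \cite{BS03,BS05} to get $\tau$, then flow along the normalized Finslerian gradient---is exactly the paper's approach, and your step (b) is correct (modulo a harmless factor: $v^\flat(v)=g_v(v,v)=L(v)$, not $2L(v)$). The splitting argument and the orthogonality computation via $g_{\partial_\tau}(\partial_\tau,w)=d\tau(w)=0$ match the paper.

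The gap is in step (a). You claim that the \emph{only} genuinely Finslerian ingredient in carrying \cite{BS05} over is the convexity of the cone of timelike $1$-forms, so that partition-of-unity sums of local temporal functions stay temporal. That is one of the two ingredients, but not the only one. The construction in \cite{BS03,BS05} also requires producing the local ``semi-temporal'' building blocks $h_p$, and these are defined via the time-separation on a convex neighbourhood, namely $h_p(q)=e^{\mathbf d(p',p)^{-2}}\,e^{-\mathbf d(p',q)^{-2}}$. In the Lorentzian case $q\mapsto\mathbf d(p',q)^2$ is a quadratic polynomial in normal coordinates, so smoothness and bounded derivatives are free; in the Lorentz--Finsler case this function equals $L_{p'}\circ\exp_{p'}^{-1}$, and one must check it is smooth on $h_p^{-1}(0,\infty)$ with derivatives of all orders bounded so that the negative exponentiation smooths it at the boundary. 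This uses that $L$ extends smoothly beyond $\C$ (the non-degeneracy of $g_v$ on lightlike $v$). The paper isolates precisely this point; calling it ``bookkeeping'' is where your outline falls short.

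A smaller slip: Geroch's original time function is only continuous, not smooth; the smoothing is exactly what \cite{BS03} provides, so the sentence ``Geroch's construction of a smooth Cauchy time function carries over'' conflates two steps.
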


\begin{proof}
The technique used in \cite{BS05} to construct $\tau$ uses the following ingredients  which can be transplanted to the Lorentz-Finsler setting\footnote{Reference  \cite{Ringstrom} is also strongly recommended for a proof in book format as well as further implications on the initial value problem in General  Relativity.}.  
First, Geroch's Cauchy time function $t:M\rightarrow \R$, whose existence in  \cite{Ge} is proved by using standard properties of causality valid for cone structures. Second, a smoothing procedure which relies on the existence of functions labelled $h_p$, $p\in M$  (introduced in \cite{BS03}) defined as follows.

Let $t_1<t_2$ and consider Geroch's Cauchy slices $S_1:=t^{-1}(t_1), S_2:=t^{-1}(t_2)$.   
For each $p\in S_2$ and convex neighborhood $U_p$ of $p$, the function  $h_p : M \rightarrow [0,\infty)$
must satisfy: (i) $h_p(p) = 1$, (ii) the support of $h_p$ (i.e., the closure of $h_p
^{-1}((0,\infty))$) is compact and included in
$U_p \cap I^+(S_1)$, and 
(iii) for each  $q \in J^-(S_2)$ such that $h_p(q)>0$, $\nabla h_p|_q$ is  timelike and past-pointing. Notice that all these conditions can be transplanted to the Finsler setting, just taking into account that the hypotesis (iii) is stated assuming that  $d h_p|_q$ is timelike and, thus $h_p$ is temporal around $q$ with gradient $\nabla h_p|_q$.

To construct $h_p$, one chooses  $p' \in I^-(p)\cap I^+(S_1)$ such that $J^+(p') \cap J^-(S_{2}) \subset U_p$ and define $h_p$ on 
$I^-(S_{2})$ as the function:
\be \label{hp} 
h_p(q) = 
e^{\mathbf{d} (p',p)^{-2}} \; \cdot \, e^{-\mathbf{d} (p',q)^{-2}} 
\ee
and construct any smooth extension of $h_p$ out of $I^-(S_{2})$ such that the support of $h_p$ is included in $U_p$ and $h_p\geq 0$. 
Here $\mathbf{d}$ is the time-separation on $U_p$ regarded as a spacetime
($h_p$ is regarded as 0 on $I^-(S_{2})\backslash  U_p$).   

The key for the regularity of $h_p$ in the Lorentzian case, is that the function $q \mapsto \mathbf{d} (p',q)^{2}$ satisfies:  on $h_p^{-1}(0,\infty)$, it is smooth with  bounded derivatives of any order  (indeed, it is a quadratic polynomial of degree 2) 
on  $h_p^{-1}(0,\infty)$) and, thus,   the negative exponentiation in \eqref{hp} smoothens it at 0. This  is extensible to the Lorenz-Finsler case too, because now     
$q \mapsto \mathbf{d} (p',q)^{2}$ becomes equal to $L_{p'}$ in normal coordinates 
on $h_p^{-1}(0,\infty)$. Thus, $L_{p'}$ is smooth therein and, moreover,  the derivatives of any order are bounded   because $L_{p'}$ can also be extended smoothly beyond the boundary.

The other steps in the references \cite{BS03, BS05} use algebraic manipulations of these functions that do not affect smoothability and, then, they are extensible to the Lorentz-Finsler case. It is worth pointing out that only finite sums of these types of functions are used  in  \cite{BS03}, while in    \cite{BS05} infinite sums are used. Anyway, no additional issue appears because the simple argument which proves that a converging  sum of timelike vectors is also timelike (see \cite[Lemma 3.10]{BS05}) is valid  for Lorentz-Minkowski norms too.       

Once $\tau$ has been constructed, the splitting $\R\times S$ is obtained by using that $\nabla \tau$ is a timelike vector field (this holds in the Lorentz-Finsler case by the consistency of Def. \ref{d_LegendreLorentz}) and, then, its integral curves must cross all the slices of $\tau$. The last assertion follows because the splitting procedure flows $S=\tau^{-1}(0)$ by using the vector field  $\nabla \tau/|\nabla \tau|^2$, then guaranteeing that $\partial_t$ and $\nabla \tau$ points out in the same oriented direction; this also provides a diffeomorphism between any two smooth Cauchy hypersurfaces.
\end{proof}

\begin{rem}
The orthogonality in the last assertion of Theorem \ref{t_splitting} is the best possible one, as $g_v$ varies with the causal vector $v$ and it is not defined in the directions tangent to the factor $S$.
\end{rem}
\subsubsection{Case with timelike boundary for $\C$ and summary}\label{sss_with_bd} Let us consider the case of a manifold with boundary $\bar M$, as in \cite{AFS,HS25}.
For a cone structure $\C= \partial \bar A$ on $\bar M$, the boundary is called {\em timelike} when so are the hyperplanes tangent to $\partial M$ according to Def. \ref{d_cone} (that is,
when $T_p{\partial M}\cap A_p\neq 0$ for all $p\in \partial M$). Accordingly, if $L$ is a Lorentz-Finsler metric on $\bar M$, the boundary is defined as {\em timelike} when so is  its cone structure. 

Now, all the notions like the definitions of inextensibility for  causal curves $\gamma$, the sets $I^+(p), J^+(p)$ and the compactness of $J(p,q)$, must take into account the points of $\partial M$ and: (a) this forces to consider  locally Lipschitz regularity instead of piecewise smoothness for the causal curves $\gamma$ used to compute $J^\pm(p)$ (by the reasons argued in \cite[Appendix B]{AFS}, appliable  even if $\partial M$ is $C^\infty$), and (b) necessarily, Cauchy hypersurfaces $\bar S$ must have a boundary, eventually met by  inextensible $\gamma$ (which may travel along $\partial M$).

For the extension of the splitting in Theorem \ref{t_splitting} to the case with timelike boundary,  
the  procedure in \cite{AFS} for the Lorentzian case becomes  quite technical now. However, the optimal result for cone structures can be solved as a simple consequence of the introduced tools.

\begin{thm}\label{t_splitting2}
Let $(\bar M,\C)$ be any cone structure which is  globally 
hyperbolic-with-timelike-boundary: 

(a) It admits a Cauchy temporal function $\tau$. 

(b) For any  Cauchy temporal function $\tau$ on $\bar M$, there exists a  
Lorentz-Finsler metric $L$ such that $\tau$ is  {\em adapted to the boundary} for $L$,  that is,  $\nabla \tau|_{\partial M}$ is  tangent to $\partial M$. 

(c) Any Lorentz-Finsler metric $L$ endowed with a Cauchy temporal function $\tau$  adapted to the boundary   admits a global Cauchy splitting $\bar M=\R\times \bar S$ satisfying all the properties stated in Theorem \ref{t_splitting}.
\end{thm}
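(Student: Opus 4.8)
The plan is to prove items (a), (b), (c) in this order, reducing (a) to the boundaryless Theorem~\ref{t_splitting} and leaving the geometric core to the cone-triple calculus of Theorem~\ref{t_cone_triple}. For \emph{item (a)} I would transplant the proof of Theorem~\ref{t_splitting} to the manifold-with-boundary $\bar M$. All the causal ingredients there --- Geroch's Cauchy time function $t$, compactness of the diamonds $J(p,q)$ (now computed including boundary points), and convex neighbourhoods near $\partial M$ --- are purely causal and available for $(\bar M,\C)$ by the framework of \cite{AFS}; one also fixes, via Theorem~\ref{t_cone_triple}, an auxiliary Lorentz--Finsler metric with cone structure $\C$, so that local time-separations and normal coordinates make sense. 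The only point needing care is that the bump functions $h_p$ of \cite{BS03} must now also be built when $p\in\partial M$, but this goes through unchanged because the auxiliary metric extends smoothly across the timelike (hence non-characteristic) hypersurface $\partial M$: the function $q\mapsto \mathbf{d}(p',q)^2$ is still smooth with bounded derivatives of all orders on $h_p^{-1}(0,\infty)$ (there it coincides with the squared Lorentz--Minkowski norm at $p'$ read in normal coordinates), so the negative exponentiation in \eqref{hp} smoothens it at $0$ as before; the subsequent algebraic manipulations and the ``a convergent sum of timelike covectors is timelike'' lemma are insensitive to the boundary. This produces a Cauchy temporal $\tau$.

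For \emph{item (b)}, let $\tau$ be a Cauchy temporal function for $\C$, so $\Omega:=d\tau$ is a $\C$-timelike one-form. First choose a $\C$-timelike vector field $T$ on $\bar M$ with $\Omega(T)\equiv 1$ whose restriction to $\partial M$ is tangent to $\partial M$: for each $p\in\partial M$ the set $\{v\in A_p\cap T_p\partial M:\ d\tau_p(v)=1\}$ is non-empty (the boundary is timelike and $d\tau_p$ is positive on causal vectors), convex and open in an affine hyperplane of $T_p\partial M$, so a smooth section over the closed set $\partial M$ exists and, by a partition of unity over the convex-fibred data $\{v\in A_p:\ d\tau_p(v)=1\}$, extends to a global such $T$ on $\bar M$. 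By Theorem~\ref{t_cone_triple}(A) there is a unique Finsler metric $F$ on $\ker\Omega$ making $(\Omega,T,F)$ a cone triple for $\C$; let $L$ be the Lorentz--Finsler metric obtained by smoothening $\mathcal{G}=\Omega^2-(F\circ\pi_T)^2$ near $\mathrm{Span}(T)$ as in Theorem~\ref{t_cone_triple}(B)(d), restricted to $\bar A$. Its cone structure is still $\C$. Since $\pi_T(T)=0$ and $0$ is a minimum of the smoothened $(F\circ\pi_T)^2$, the vertical derivative of $L$ at $T$ equals $2\,d\tau$, hence $g_T(T,\cdot)=d\tau$ by $2$-homogeneity, and therefore $\nabla\tau=(d\tau)^\sharp=T$ at every point for $L$. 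In particular $\nabla\tau|_{\partial M}=T|_{\partial M}$ is tangent to $\partial M$, i.e.\ $\tau$ is adapted to the boundary for $L$; and it is still Cauchy temporal for $L$ since the cone structure, which governs causality, has not changed.

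For \emph{item (c)}, consider the future-directed timelike vector field $X:=\nabla\tau/g_{\nabla\tau}(\nabla\tau,\nabla\tau)$, which satisfies $d\tau(X)\equiv 1$. Arguing as in the proof of Theorem~\ref{t_splitting}: $\tau$ grows at unit rate along the integral curves of $X$; global hyperbolicity forces each inextensible such curve to meet every Cauchy slice $\tau^{-1}(t)$ exactly once; and since $X$ is a smooth vector field on all of $\bar M$ no integral curve can have an endpoint, so the flow is complete and $(s,q)\mapsto \mathrm{Fl}^X_s(q)$ is a diffeomorphism $\R\times S\to\bar M$ with $S=\tau^{-1}(0)$, under which $\tau$ is the $\R$-projection and $\partial_\tau$ is parallel to $\nabla\tau$. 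The new ingredient is that adaptedness makes $X$ tangent to $\partial M$ along $\partial M$, hence $\partial M$ is flow-invariant; thus $\partial M=\R\times\partial S$ with $\partial S=S\cap\partial M=(\tau|_{\partial M})^{-1}(0)$, which is a genuine hypersurface of $\partial M$ because $d\tau_p$ does not vanish on the timelike hyperplane $T_p\partial M$. Consequently $\bar S:=S$ is a Cauchy hypersurface-with-boundary, $\bar M=\R\times\bar S$ splits orthogonally (the factors being $g_{\partial_\tau}$-orthogonal since $\partial_\tau\parallel\nabla\tau$), and the remaining assertions of Theorem~\ref{t_splitting} hold verbatim.

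The main obstacle is item~(a): once the cone-triple description of Theorem~\ref{t_cone_triple} is at hand, items (b) and (c) are essentially immediate, whereas (a) requires one to re-examine the Bernal--Sánchez smoothing --- chiefly the smoothness, with bounded derivatives, of the local time-separation $\mathbf{d}(p',\cdot)$ --- at points of the boundary, which ultimately rests on $\partial M$ being timelike (non-characteristic) together with the causality theory of \cite{AFS}.
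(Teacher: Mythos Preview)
Your approach to all three items follows the paper's own route, and items (a) and (c) are fine: (a) is handled in the paper by exactly the one-line reduction you give (the boundaryless technique of \cite{BS03,BS05} works in the presence of a timelike boundary, cf.\ \cite[Remark~4.3(2)]{AFS}, and then the Lorentz--Finsler smoothing of Theorem~\ref{t_splitting} applies), and (c) is precisely the observation that adaptedness makes the flow of $\nabla\tau$ preserve $\partial M$.

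There is, however, a genuine gap in your argument for (b). You invoke the smoothening of Theorem~\ref{t_cone_triple}(B)(d) and then assert that ``$0$ is a minimum of the smoothened $(F\circ\pi_T)^2$'', from which $g_T(T,\cdot)=d\tau$ and hence $\nabla\tau=T$ would follow. But Theorem~\ref{t_cone_triple}(B)(d) only guarantees that the smoothening preserves Lorentzian signature; it says nothing about preserving the location of the minimum. A generic smooth cap replacing the conical tip of $\Sigma_{\mathcal G}$ at $T$ could perfectly well have its $d\tau$-minimum displaced from the $T$-direction, and then $\nabla\tau$ for the resulting $L$ would \emph{not} be tangent to $\partial M$. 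The paper is explicit that this is the delicate point: one must revisit the smoothening procedure of \cite[\S5.2]{JS20} and ensure, via the convex-analysis argument of Azagra described in \cite[Remark~5.5]{JS20}, that $0$ remains the minimum of the smoothed function $\tilde t_0$ in \cite[Lemma~5.4]{JS20}, so that $T$ stays the direction of the minimum of $d\tau$ on the indicatrix $\Sigma$. Once this refined smoothening is used, your computation $g_T(T,\cdot)=d\tau$ goes through and the rest of (b) is correct. Incidentally, this also means your closing assessment is inverted: the paper treats (a) as essentially a citation, while the actual geometric content lies in (b).
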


\begin{proof} (a) Notice first that, in the Lorentzian case, the technique to obtain a (non-adapted) Cauchy temporal function for the case without boundary also works for the case with boundary \cite[Remark 4.3(2)]{AFS}. Thus, this holds in the Lorentz-Finsler case by the reasons in Theorem \ref{t_splitting}.

(b) Choose any future directed timelike vector field $T$ on 
$\bar M$ such that $T|_{\partial M}$ is tangent 
to $\partial M$,\footnote{This can be carried out just by noticing that both $M$ and $\partial M$ are manifolds without boundary, 
	thus admitting timelike vector 
fields $T_M$, $T_{\partial M}$, resp. for the inherited cone structures.  $T_{\partial M}$ is tangent to $\partial M$ and it is also extendible to 
a timelike vector $T_W$ on neighborhood $W$ of  $\partial M$. Then, $T=: \mu T_M + (1-\mu)T_W$, where $\{\mu, 1-\mu\}$ is a partition of the unity subordinated to $\{M,W\}$, suffices.} and assume $d\tau(T)=1$ with no loss of generality. Using Theorem \ref{t_cone_triple}, consider the associated cone triple $(\Omega=d\tau, T,F)$ and function $\mathcal{G}=d\tau^2-(F\circ \pi_T)^2$. 

By construction, at each point $p\in \bar M$, $T$ attains the minimum of $d\tau$ restricted to $\Sigma_{\mathcal{G}}|_p:=\mathcal{G}^{-1}(1) \cap {T_pM}$. 
Then, one can smoothen  $\Sigma_\mathcal{G}$ (thus $\mathcal{G}$) close to 
the direction of $T$ to obtain a smooth and strongly convex $\Sigma$  following the procedure in \cite[\S 5.2]{JS20}. It is important to notice that this procedure 
permits to maintain $T$ as the direction of the minimum of $d\tau|_{\Sigma}$. Indeed, revising the procedure, 
it is enough to ensure that 0  remains as the minimum for the smoothed function 
$\tilde t_0$ 
in \cite[Lemma 5.4]{JS20}. This can be achieved directly  by using the general argument by D. Azagra explained in \cite[Remark 5.5]{JS20}. Thus, taking the corresponding Lorentz-Finsler metric 
$L$ (with the so-smoothened indicatrix $\Sigma$), the gradient $\nabla \tau$ will point 
everywhere in the direction of $T$, in particular along $T(\partial M)$ on $\partial M$. 

(c) The procedure to obtain the splitting in Theorem \ref{t_splitting} also works now,  because the flow of $\nabla \tau$ preserves $\partial M$.
\end{proof}

\begin{rem}\label{r_splittings}
The strengthening of the conditions on temporal functions for the Lorentzian case in items \eqref{item1} to \eqref{item5}  of \S  \ref{s_items}   rely on arguments similar to those in the proof of Theorem \ref{t_splitting}. It  is worth pointing out that cone triples $(\Omega, T, F)$  provide a simple technical way to implement operations in cone structures as narrowing or widening  them (say, replacing $F$ by a Finsler metric with indicatrix inside  or enclosing the one of $F$, resp.) to be used for stability.    Thus, these five items can  be extended to the Lorentz-Finsler case as before. Moreover, the strengthening in item \eqref{item6} corresponds directly to the Lorentz-Finsler result in Theorem \ref{t_splitting2}.

Notice that some of these stregthenings  are relevant in its own, but items \eqref{item3} and \eqref{item4} are also important for  applications to Nash-type isometric embeddings and convergence of Lorentz metrics. Thus, it would be interesting to explore Lorentz-Finsler issues in these fields, taking into account the subtleties of the purely Finsler problem (see for example \cite{BI}). 

Finally, the analytic case in item \eqref{item7} can also be extended to the Lorentz-Finsler case reasoning as in \cite{Sa22}. Namely,    Grauert  
\cite[Prop. 8]{Grauert} and Whitney \cite[Lemmas 6, 7]{W} (see also \cite{Azagra}) established that   any $C^k$ 
  function, $k\in \N$, on a  real analytic manifold can be $C^k$ approximated by analytic functions. 
  Then, a  $C^2$ approximation of a (eventually steep and $h$-steep) Cauchy temporal function by analytic ones   suffices to maintain the spacelike character of the slices and its Cauchy character (apart from the other eventual properties), as well as to construct the  splitting in Th. \ref{t_splitting}.   
\end{rem}

\subsection{The space of cone geodesics}\label{s3_2} The study of the \emph{space $\Ng$ of null geodesics}   or \emph{lightrays} or {\em cone geodesics} (the latter used here to emphasize its exclusive dependence on the cone structure) comes from
 Penrose's seminal ideas   on   twistor space, leading to consider lightrays as a fundamental physical structure from where spacetime emerges \cite{Penrose0, Penrose}.
Inspired by this approach, Low studied $\Ng$ introducing the  viewpoint of contact geometry in the Lorentz setting \cite{Low1989, LowSurvey}.  Hedicke   \cite{HedickePhD, Hedickepreprint} has   developed  the Lorentz-Finsler case,   thus   providing background for this section.  


  Roughly,   Low proved that $\Ng$ turns out well behaved   and it is identifiable to   a  natural smooth (Hausdorff) manifold in all globally hyperbolic spacetimes \cite{Low1989}. 
  Hedicke and Suhr \cite{HedickeSuhr}   extended this result to some   classes of causally simple spacetimes. 
 The structure of  $\Ng$ can also be fully determined for   globally hyperbolic cone structures-with-timelike-boundary, in particular, relativistic spacetimes admitting  a timelike conformal boundary such as (conformally asymptotic) anti-de Sitter \cite{HS25}. 
 
   Remarkably,   Low \cite{LowSurvey} also developed  the symplectic and contact viewpoints connecting causality with topological linking,   posing a   first version   for a crucial conjecture solved in positive by Chernov and Nemirowski \cite{ChernovNemirovski10}. 

  Next, we deepen in some details and prospects on this  topic.   From a physical viewpoint, it is also worth mentioning the study by Hasse and Perlick  \cite{HassePerlick} about physical implications of $\Ng$ in order to detect possible Finslerian corrections to standard Relativity. 

\subsubsection{Setting and Low's conjecture}\label{s_Low}
Let   $(M^{n+1},\C)$ be a  cone structure.  Extending the Lorentz case,  consider the \emph{space  of  cone geodesics $\Ng$} 
\[
\mathcal{N}= \{ \gamma \ \text{ inextensible cone geod. in } M \},
\]
 computable by taking all the inextensible lightlike geodesics  for any compatible Lorentz-Finsler $L$
 and identifying those differing in the parametrization.

 Both the geodesic spray $G$ and the radial field $\mathcal{R}:=y^k \frac{\partial}{\partial {y^k}}$ on $\bar A \subset TM$ preserve the tangent to the cone structure $T\C$ and, moreover,  the distribution $\mathcal{F}:=$ Span $\{\mathcal{R},G\}$ is involutive   ($[\mathcal{R},G]=G$).   This makes consistent the definition of $\mathcal{N}$ as the quotient topological space $T\C/\mathcal{F}$.
When $\C$ is strongly causal, the quotient is a possibly non-Hausdorff manifold  of the same dimension, see   Low's   \cite[Prop. 2.1]{Low1989} (or   Lorentz-Finsler   \cite{HedickePhD, HS25}).

\begin{rem}
 Non-Hausdorffness arises easily. Indeed, this happens in the  space of 
(Riemannian) geodesics for $\R^2\setminus\{0\}$ and this is transmited to the space of cone geodesics in   $\R^1_1\times 
(\R^2\setminus\{0\})$
 (an open subset of Lorentz-Minkowski 
$\LL^3$).
Anyway the  structures to be  considered next  make sense whenever $\C$ is a (possibly non-Hausdorff) manifold.

Below strong causality, 
classes of compact cone structures with well behaved $\Ng$ are  Zollfrei spacetimes \cite{Suhr}, that is,
compact Lorentzian manifolds such that the geodesic flow restricted to the 
cones induces a fibration by circles (thus, the orbit space being a smooth manifold),   and some classes of lens spaces nicely related to  Engel geometries  \cite{MarinSalvador2024}. 
\end{rem}

When the Finsler spacetime is globally hyperbolic, $\Ng$ becomes a smooth manifold. Indeed, choose any 
Cauchy hypersurface $S$, and notice that cone geodesics will cross $S$ exactly once, then  providing an identification between $\Ng$ and the set of lightlike directions on $S$. This set can be characterized by taking any cone triple $(\Omega,T,F)$ for $\C$ and noticing that lightlike directions are in bijective correspondence with the indicatrix for $F$ on $\Sigma$, that is, rewritting the conclusions in \cite{Low1989, HedickePhD}:

\begin{prop} \label{p_Ng_glob_hyp} The space of cone geodesics $\Ng$ for any globally hyperbolic cone structure $(M^{n+1},\C)$ is diffeomorphic to $\Sigma_{S}, \, 
 \hbox{where}$  
\bit\item $S\subset M$ is any smooth spacelike Cauchy hypersurface, and 
\item    $\Sigma_{S}$ is the indicatrix  determined by the Finsler metric $F$ of any cone triple  $(\Omega,T,F)$ for $\C|_{S}$.  
\eit
In particular, dim $\Ng=2n-1$.
\end{prop}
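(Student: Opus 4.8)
Here is how I would approach the proof.

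The plan is to realise the claimed diffeomorphism by using the Cauchy hypersurface $S$ as a screen that every cone geodesic crosses exactly once, and then to re-express the lightlike directions along $S$ through the cone-triple decomposition of Definition~\ref{d_cone_triple}. First I would record two elementary geometric facts. Since $S$ is spacelike its tangent hyperplanes are spacelike, hence contain no causal vector (Definition~\ref{d_cone}), so every cone geodesic meets $S$ transversally; and since $S$ is Cauchy, every inextensible --- hence every maximal --- cone geodesic meets it exactly once. Thus $\gamma\mapsto(p_\gamma,[\gamma'(p_\gamma)])$, where $p_\gamma$ is the crossing point and $[\gamma'(p_\gamma)]$ the future lightlike direction there, is a well-defined map from $\Ng$ to the bundle $\mathcal{D}_S\to S$ of future lightlike directions along $S$. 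It is a bijection: a prescribed future lightlike direction at a point of $S$ is the initial direction of a unique maximal cone geodesic (existence by solving the geodesic equation and extending maximally; uniqueness because cone geodesics with proportional initial velocities at a common point agree in $\Ng$), and here there is no loss of regularity at the zero section because lightlike vectors never vanish and the geodesic spray $G$ is smooth on $T\bar A$, which is $G$-invariant.

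Next I would identify $\mathcal{D}_S$ with $\Sigma_S$. Fixing a cone triple $(\Omega,T,F)$ for $\C|_S$ --- which exists by Theorem~\ref{t_cone_triple}(A) --- item~(3) of Definition~\ref{d_cone_triple} says that a nonzero $v_p\in T_pM$, $p\in S$, is lightlike if and only if $v_p=F(\pi_T v_p)\,T_p+\pi_T v_p$, and two lightlike vectors at $p$ span the same ray exactly when their $\pi_T$-projections are positive multiples of one another. Hence each future lightlike direction at $p$ has a single representative normalised by $\Omega(v_p)=1$, namely $v_p=T_p+w$ with $w:=\pi_T v_p\in\Sigma_p:=\Sigma_S\cap\ker\Omega_p$. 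The assignment $[v_p]\mapsto w=v_p/\Omega(v_p)-T_p$, with inverse $w\mapsto[T_p+w]$, is then a fibrewise-smooth bijection $\mathcal{D}_S\to\Sigma_S$ covering the identity of $S$, that is, a diffeomorphism of bundles over $S$; composing with the bijection of the first step gives a bijection $\Phi\colon\Ng\to\Sigma_S$.

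Then it remains to check that $\Phi$ and $\Phi^{-1}$ are smooth for the quotient structure $\Ng=T\C/\mathcal{F}$ (which is a manifold precisely because the spacetime is globally hyperbolic). For $\Phi^{-1}$ this is immediate: $w\mapsto T_p+w$ is a smooth section $\Sigma_S\to T\bar A$ landing in $T\C$, and post-composing with the canonical projection $T\C\to\Ng$ (which sends a lightlike vector to the leaf of $\mathcal{F}$ it lies on) yields $\Phi^{-1}$. For $\Phi$ I would fix a Cauchy temporal function $\tau$ with $S=\tau^{-1}(0)$ (Theorem~\ref{t_splitting}); since $d\tau$ is timelike it is strictly positive on every causal vector, so $t\mapsto\tau(\gamma_v(t))$ is strictly increasing along each $G$-orbit and, by the implicit function theorem, the instant $t_\ast(v)$ at which the $G$-orbit through $v\in T\C$ reaches $\{\tau=0\}$ depends smoothly on $v$. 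Flowing to that instant and applying the decomposition of the second step defines a smooth, $\mathcal{F}$-invariant map $T\C\to\Sigma_S$ that descends to $\Phi$. Finally, $\Sigma_p$ is the $F$-indicatrix inside the $n$-dimensional vector space $\ker\Omega_p$, so it has dimension $n-1$, while $\dim S=n$; therefore $\dim\Ng=\dim\Sigma_S=n+(n-1)=2n-1$.

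The hard part will be the smoothness of the ``flow to $S$'' map in the last step: this is where global hyperbolicity is genuinely used --- it is what guarantees that every maximal cone geodesic, being inextensible, does reach $S$ and does so only once, so that $t_\ast$ is globally defined --- and where one must be careful that $G$ is smooth on the conic region $T\bar A$, the point being that there is no zero-section obstruction precisely because causal (hence lightlike) vectors are nowhere zero. Everything else is a routine unwinding of the cone-triple decomposition and of the quotient definition of $\Ng$.
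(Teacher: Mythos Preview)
Your proposal is correct and follows essentially the same approach as the paper: the paper's argument (given in the paragraph preceding the proposition rather than as a formal proof) is simply the sketch that cone geodesics cross the Cauchy hypersurface $S$ exactly once, identifying $\Ng$ with the lightlike directions along $S$, which are in turn identified with $F$-unit vectors via the cone triple. You have spelled out the smoothness in both directions --- in particular the flow-to-$S$ map via a Cauchy temporal function and the implicit function theorem --- in considerably more detail than the paper offers, but the strategy is identical.
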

The independence of  $\Ng$ on the chosen $S$ is consistent with Th. \ref{t_splitting} and  $\Sigma_{S}$ becomes also a convex bundle $\Sigma_{S}\rightarrow S$

   The  Hilbert   contact structure 
 (see \S \ref{ssHilbert_form}) is preserved on $\C$  by the flows of $\mathcal{R},G$ and it can be induced on $\Ng$,   whenever $\Ng$ is a manifold.   Anyway,  to provide a    better insight,  
 $\Ng$ will be regarded  as a quotient in the cotangent bundle $T^*M$  by using the 
Legendre map (Def. \ref{d_LegendreLorentz}). 
Following Low \cite{LowSurvey} in the Lorentz case (or 
\cite[Sect. 3.1]{HedickePhD} in the Lorentz-Finsler one),      first   consider 
$\C^*$ as a subset of $ T^*M\setminus\mathbf{0}$ (the $H_L$-energy level $E=0$), and  then regard $\Ng$ as the quotient $\C^*/\mathcal{F^*}$ with $\mathcal{F^*}:=$ Span $\{\mathcal{R^\flat},G^\flat\}$   (which works specifically for $E=0$). 
Then, the standard {\em contact form} on $T^*M$ induces a contact form $\eta$ on $\Ng$ \cite[Prop. 3.1.4]{HedickePhD}. This  provides a tool which will link
the differential topology of $\mathcal{N}$ with the causal 
geometry of $M$. 

Indeed, each point $p\in M$   induces   a submanifold in $\Ng$, namely, its \emph{sky}:
\[
\mathfrak{S}_p = \{ \gamma \in \mathcal{N} \mid p \in \gamma \},
\]
Such a sky becomes a {\em Legendrian sphere} for the  contact form.


Low conjectured  that, in a globally hyperbolic spacetime, causal relations between events in $M^{n+1}$,  can be detected by topological linking of their skies in $\mathcal{N}$    for $n=2$.
  Subsequent work by Nat\'ario and Tod \cite{Natario}   proved partially it   and  clarified that the proper invariant for $n>2$ is \emph{Legendrian} linking. Chernov and Nemirovski proved the 
reformulated conjecture   first in the case of a Cauchy hypersurface $S$ diffeomorphic to an open subset of $\R^n$ \cite{Chernov_2010} and then 
requiring only non-compactness for $S$, namely: 

\begin{thm}[Th. 10.4 in \cite{ChernovNemirovski10}] \label{t_CN}   Consider a spacetime $M^{n+1}$ whose universal cover has a non-compact Cauchy hypersurface $S$ and $n> 2$ (resp. $n=2$). 
The points $p,q\in M^{n+1}$ are causally related (i.e., $p<q$ or $q<p$)   
if and only if $\mathfrak{S}_p$ and $ \mathfrak{S}_q$ are non-trivially Legendrian linked (resp. topologically linked) in the contact manifold $\mathcal{N}$.
\end{thm}
  Notice   that the  non-compactness of a Cauchy hypersurface  \cite{Low01} or just of its universal covering \cite{ChernovRudyak, Bauermeister}, implies that 
different points have different skies , that is,  the spacetime is  {\em non strongly
(null) refocusing}. This property permits  to reconstruct $M$ from the space of skies as a point set; to recover it topologically, one should prevent the weaker condition of  {\em refocusing}, that is, forbid that  the cone geodesics departing from one point $p$ can arrive at points arbitrarily close to a second one $q$. 

 For further results and   open questions in the case of 
compact $S$, see  \cite{Chernov18}.   
 Other relevant related topics (eventually extensible to the Lorentz-Finsler case, see  \cite{Hedickepreprint}) include 
the relation of non-refocusing  with  isotopies by skies   \cite{ChernovRudyak, Kinlaw,  Bautista}, causal boundary constructions \cite{LowSurvey, Bautista_2022, Chernov_2020} or inverse and  reconstruction problems \cite{Feizmohammadi2019, Bautista_2014}. 
 
The wealth of this framework also opens new issues connecting contact and Lorentz (-Finsler) geometries, with interest  for the contact side too,   where  isotopies that run transverse to   $\eta$ appear naturallly \cite{Polterovich}.   In particular, Chernov and  Nemirovski \cite{Chernov_2020} found an analogy between the  {\em orderability}  provided by the group of positive contactomorphisms (preserving a co-oriented contact structure) Cont$(M, \eta)$  and the Lorentz {\em causal condition} for spacetimes. They introduced the {\em interval topology} in  Cont$(M, \eta)$  which resembles  Alexandrov's topology in the Lorentz setting, and determined properties such as Hausdorffness.
  
Moreover, Hedicke \cite{Hedicke2024}, inspired in ideas from symplectic geometry as
Shelukhin's  Hofer norm on contactomorphisms \cite{Shelukhin},  defined a Lorentzian type distance on the connected part of the identity $\hbox{Cont}_0(M, \eta)$. Among its nice properties, this distance is continuous, makes $\hbox{Cont}_0(M, \eta)$ a Lorentzian pre-length space and it can be calculated for
the Reeb flow of the contact form
---then permitting to answer a question by
Shelukhin  on the diameter of the contactomorphism group under the assumption of
orderability \cite[Question 18]{Shelukhin}.  When applied to the space of skies of 
globally hyperbolic spacetimes, the manifold topology coincides with the topology induced by
the distance of Legendrian skies in the space of cone geodesics \cite[Th. 6.1]{Hedicke2024}.

\subsubsection{Causal simplicity and the case with timelike boundary} \label{s_HS25}
The first natural question on $\Ng$ is its Hausdorffness, only ensured above by global hyperbolicity. The step in the standard causal  ladder of spacetimes  just below global hyperbolicity is causal simplicity, that is,  
being causal with all $J^+(p), J^-(p)$  closed, \cite{MinSan}. Low characterized the compactness of the intersections $J^+(p)\cap J^-(p)$ for global hyperbolicity in terms of {\em null pseudoconvexity}, \cite{Low1990}\footnote{  \label{f_Bartolo} A similar characterization of causal simplicity in  terms of {\em maximal null pseudoconvexity} was claimed in \cite[Th. 2]{VPE}, but a counterexample was found in \cite{Hedickeetal}, building on  a previous Riemannian example on convexity  in  \cite{BartoloAGAG}. }.
  Chernov \cite{Chernov} conjectured  that causal simplicity might  suffice for Hausdorffness. A partial positive answer was given by Hedicke and Suhr:

\begin{thm}
\cite[Thm 2.5]{HedickeSuhr} If a causally simple spacetime can be isometrically embedded as an open subset of a  globally hyperbolic spacetime then   $\Ng$ inherits a  Hausdorff contact manifold structure. 
\end{thm}
However, these authors also gave counterexamples to Chernov's conjecture, see \cite[\S 2]{HedickeSuhr} (  see others in    in \cite[\S 5.2]{HS25}). Indeed, they found examples of causally simple spacetimes  non openly embedded in a globally hyperbolic spacetime with both Hausdorff and non-Hausdorff $\Ng$.

Anyway, causal simplicity turns out a natural class where the Hausdorffness of $\Ng$ should be studied. This class does not have a natural candidate for the manifold structure, in contrast to the globally hyperbolic case (Prop.~\ref{p_Ng_glob_hyp}).
A natural  subclass 
to focus is bounded convex open subsets of
Lorentz-Minkowski space $\LL^{n+1}$.  Indeed, Hedicke \cite{Hedicke2021}  considered a more general  class of  open subsets for globally hyperbolic spacetimes, namely, the {\em conformally star-shaped} ones. For this class, $\Ng$ becomes contactomorphic to the
space of cone geodesics of the surrounding globally hyperbolic spacetime, being  then
its contact type that of a spherical cotangent bundle  (in particular, $ST^*\R^n$ for the case of  $\LL^{n+1}$), see \cite[Th. 2.3]{Hedicke2021}.
 
Another interesting  class of open neighborhoods, the  {\em strongly null convex} ones, has also been   considered in \cite{Hedicke2021}. The results are restricted to dimension 3, where they connect to other parts of contact geometry. For such a neighborhood $K$, the boundary $\partial \Ng_K$ of its space of cone geodesics $\Ng_K$ becomes a  convex surface included in the cone geodesic space $\Ng$ of the globally hyperbolic spacetime \cite[Th. 2.8]{Hedicke2021}. Here, {\em convex surface} is used in the contact sense  introduced by Giroux \cite{Giroux}, namely, a closed surface $S$ with a transverse contact vector field defined on a neighborhood of $S$ (this author  also proved that these  surfaces are generic among  closed surfaces), thus, the two notions of convexity  are linked. 
Under further hypotheses, $\Ng_K$ becomes contactomorphic to the spherical cotangent bundle of $\R^2$, \cite[Th. 2.9]{Hedicke2021}.


In the case of spacetimes with timelike 
boundary  seen in \S \ref{s3_1}, the recent reference \cite{HS25} considers both, Lorentz-Finsler metrics and spacetimes with timelike boundary (although the outcomes are equal in the Lorentz and the Lorentz-Finsler cases, the proof in the latter becomes subtler).  
Indeed, both causal simplicity and Hausdorffness of $\Ng$, are ensured by the (infinitesimal) {\em lightlike convexity} of the boundary, that is, the positive semidefiniteness of the $\partial M$ second fundamental form  on lightlike 
directions. More precisely,

\begin{thm} [\cite{HS25}, Theorem 1.2] For a globally hyperbolic-with-timelike-boundary cone structure $(\bar M,\C)$ they are equivalent:
	\begin{enumerate}
		\item The boundary $\partial M$ is lightconvex. 
		\item The interior $M$ of $\bar M$  is causally simple. 
		\item The space of cone  geodesics $\Ng$ of $M$ is  Hausdorff. 
	\end{enumerate}
	\end{thm}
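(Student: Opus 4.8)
The plan is to make condition (1) a hub and prove $(1)\Rightarrow(2)$, $(1)\Rightarrow(3)$ together with the contrapositives $\neg(1)\Rightarrow\neg(2)$ and $\neg(1)\Rightarrow\neg(3)$. Everything passes through one infinitesimal-to-local lemma on lightlike geodesics near $\partial M$, which I would prove first. Let $p_0\in\partial M$, let $v_0\in T_{p_0}\partial M$ be lightlike, let $\sigma^{\xi}_{v_0}$ be the second fundamental form of $\partial M$ with $\xi$ pointing into $\bar M$ (computed with the fundamental tensor $g_{v_0}$, available because $L$ extends smoothly past $\C$, see the comments below Def.~\ref{d_LorentzNorm}), and let $\gamma$ be the lightlike geodesic of $\bar M$ with $\gamma(0)=p_0$, $\dot\gamma(0)=v_0$. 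Then: if $\sigma^{\xi}_{v_0}(v_0,v_0)>0$, $\gamma$ leaves $\bar M$ for all small $t\ne 0$; if $\sigma^{\xi}_{v_0}(v_0,v_0)<0$, $\gamma$ stays in the open manifold $M$ for all small $t\ne 0$ and meets $\partial M$ only at $t=0$. This is a second-order Taylor expansion of $\rho\circ\gamma$ for a defining function $\rho$ of $\partial M$ (with $\rho>0$ on the interior), using that $(\rho\circ\gamma)''(0)$ equals a positive multiple of $-\sigma^{\xi}_{v_0}(v_0,v_0)$; the only technical point is that $\exp$ is merely $C^{1,1}$, which is absorbed through the Bryant-type expansion \eqref{e_Bryant1}--\eqref{e_Bryant2} transplanted to the Lorentz-Finsler setting as in the proof of Theorem~\ref{t_splitting} (cf.\ the estimate in \cite[Lemma~6.6]{JMPS25}), and this same estimate supplies the uniform control needed when $\gamma$ ranges over families of geodesics approaching $\partial M$.

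Granting the lemma, $(1)\Rightarrow(2)$ and $(1)\Rightarrow(3)$ are limit-curve and limit-geodesic arguments carried out in the ambient $\bar M$, which is globally hyperbolic-with-timelike-boundary and hence supports the limit curve theorem. For $(2)$: if $q\in\overline{J^+_M(p)}\setminus I^+_M(p)$ with $p,q\in M$, a sequence of causal curves of $M$ from $p$ to $q_n\to q$ subconverges to a causal curve of $\bar M$ from $p$ to $q$ which, since $q\notin I^+_M(p)=I^+_{\bar M}(p)\cap M$, must be an unbroken lightlike geodesic $c$ of $\bar M$ by the usual variational argument (adapted to $\bar M$); any point of $c$ lying on $\partial M$ would be a point where $c$ is tangent to $\partial M$, whence $\sigma^\xi\le 0$ there, which by lightconvexity forces $\sigma^\xi=0$ there --- the borderline case, handled separately --- so, once that case is dealt with, $c\subset M$ and $q\in J^+_M(p)$, i.e.\ $J^+_M(p)$ is closed. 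For $(3)$: $M$ is strongly causal (an open subset of the strongly causal $\bar M$), so $\Ng$ is a possibly non-Hausdorff manifold (as recalled above); if $[\gamma]\ne[\gamma']$ were inseparable in $\Ng$, a common sequence of lightlike geodesics of $M$ whose classes converge to both would subconverge in $\bar M$ to a lightlike geodesic $\bar\gamma$ of $\bar M$, contained in $\bar M$, realizing reparametrized representatives of both $\gamma$ and $\gamma'$ on disjoint parameter intervals; since $[\gamma]\ne[\gamma']$, between those intervals $\bar\gamma$ touches $\partial M$ and is tangent there (a minimum of $\rho\circ\bar\gamma$), so $\sigma^\xi\le 0$ at that point --- again only the borderline case is compatible with lightconvexity --- a contradiction. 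Hence $\Ng$ is Hausdorff.

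For the converses, suppose $\sigma^{\xi}_{v_0}(v_0,v_0)<0$ for some lightlike $v_0\in T_{p_0}\partial M$. By the lemma the lightlike geodesic $\gamma$ of $\bar M$ issuing from $(p_0,v_0)$ stays in the open $M$ for all small $t\ne 0$ and meets $\partial M$ only at $t=0$, so its two halves $\gamma^+$ and $\gamma^-$ extend to inextendible lightlike geodesics of $M$ which, being disjoint, are distinct points of $\Ng$. A small perturbation of the initial direction produces a family $\gamma_s$ of lightlike geodesics of $M$ missing $p_0$ and slipping past it inside $M$ (possible precisely because $\gamma$ meets $\partial M$ only at $p_0$), with $[\gamma_s]$ accumulating on both $[\gamma^+]$ and $[\gamma^-]$; hence these two classes are inseparable and $\Ng$ is not Hausdorff, i.e.\ $\neg(3)$. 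Running the same construction with $\gamma_s$ passing through a fixed point $p\in\gamma^-$ yields points $q_s\to q\in\gamma^+$ with $q_s\in J^+_M(p)$ but $q\notin J^+_M(p)$ (in a convex neighborhood the only causal curve of $\bar M$ from $p$ to $q$ is $\gamma$ itself, which runs through $p_0\notin M$), so $J^+_M(p)$ is not closed, i.e.\ $\neg(2)$.

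The main obstacle, and the reason the Lorentz-Finsler case is subtler than the Lorentzian one, is the interplay of anisotropy and low regularity inside the lemma: lightconvexity is a whole family of conditions --- one second fundamental form $\sigma^{\xi}_{v}$ per lightlike direction $v$, read off from the direction-dependent $g_{v}$ --- so it cannot be reduced to a single metric Hessian, and its translation into the behaviour of lightlike geodesics must be pushed through the merely $C^{1,1}$ exponential map, which is exactly what forces the Bryant-type estimates and the prior smooth extension of $L$ across $\C$ that even makes the relevant $g_{v}$'s available near $\partial M$. A second delicate point is the borderline case $\sigma^{\xi}_{v_0}(v_0,v_0)=0$, where the second-order expansion is inconclusive and one must either carry the expansion to higher order or argue by a perturbation/limiting argument on the boundary data, so as to close the equivalence with the merely semidefinite hypothesis. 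Finally, since $M$ itself is not assumed globally hyperbolic --- only $\bar M$ is --- every compactness step (limit curves, limit geodesics, subconvergence of the slipping family $\gamma_s$) must be executed in $\bar M$ and only afterwards transferred to the open manifold, and making the slipping of $\gamma_s$ past $p_0$ uniform in $s$ is the quantitative heart of $\neg(1)\Rightarrow\neg(3)$.
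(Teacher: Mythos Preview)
Your overall architecture matches what the paper indicates: the survey does not give a self-contained proof here but states that the argument in \cite{HS25} ``relies on a technical result on the equivalence of infinitesimal and local convexity for $\partial M$ in the semi-Finsler case \cite[Theor.\ 1.1]{HS25}, where the anisotropic Chern connection must be used.'' Your hub-at-(1) scheme, with the infinitesimal-to-local lemma as the engine and limit-curve arguments in $\bar M$ for the implications, is exactly the right shape, and your identification of the semidefinite borderline and the $C^{1,1}$ exponential as the two delicate points is accurate.

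Where your proposal falls short of the paper's indicated route is precisely on those two points. First, the quantity $(\rho\circ\gamma)''(0)$ is not canonically defined in the Finsler setting until you specify a connection; the paper singles out the \emph{anisotropic Chern connection} as the tool that makes the Hessian computation go through, and your appeal to ``the fundamental tensor $g_{v_0}$'' together with Bryant-type estimates for $\exp$ does not by itself supply that Hessian. The Bryant expansion controls regularity of $\exp$, not the identification of $(\rho\circ\gamma)''$ with $-\sigma^\xi_{v_0}(v_0,v_0)$; that identification is a connection statement. Second, and more seriously, you explicitly leave the borderline $\sigma^\xi_{v_0}(v_0,v_0)=0$ case open (``handled separately'', ``either carry the expansion to higher order or argue by perturbation''). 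That case is the entire content of the infinitesimal-to-local equivalence: the strict inequalities are elementary, but upgrading the semidefinite hypothesis $\sigma^\xi\ge 0$ to genuine local convexity is the hard theorem, already nontrivial in the Riemannian and Lorentzian settings and requiring the specific connection machinery in the semi-Finsler one. Until that step is filled in, your $(1)\Rightarrow(2)$ and $(1)\Rightarrow(3)$ arguments both stall exactly where you say they do, and the equivalence is not closed.
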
 

  The proof of the equivalence between the two first items relies on a technical result on the equivalence of infinitesimal and local convexity for $\partial M$ in the semi-Finsler case \cite[Theor. 1.1]{HS25}, 
 which goes back to convexity equivalences in Riemann and Finsler geometries  studied  by Bartolo et al. \cite{BartoloCVPDE} and the reconstruction problem by Hintz and Uhlmann \cite{HintzUhlmann}.  A previous study by Caponio et al. \cite{CaponioGerminarioSanchez} focused on  the case of stationary spacetimes (where the Lorentz-Finsler metric becomes essentially a Finsler one of Randers type, see \S \ref{ss_stationary} below)    providing further applications to visibility and gravitational lensing. 

It is worth pointing out that, then, 
 $\Ng$  is a smooth Hausdorff $(2n-1)$-manifold (without boundary), and it  can be explicitly described  in terms of: (a)  the projectivization of the tangent bundle $T \bar S$ of any spacelike  Cauchy hypersurface  $\bar S$ in $\bar M^{n+1}$ (also used in the case without boundary), (b) the set of inwards lightdirections starting at points in $J^+(\bar S)\cap \partial M$ and (c) the set of  outwards lightdirections  starting at points in $J^-(\bar S)\cap \partial M$.  In particular \cite[Corollary 4.14]{HS25}: 
 \begin{quote}
{\em  if every cone geodesic has a future  (resp. past) endpoint at $\partial M$, then $\Ng$ is diffeomorphic to $\R\times T \, \partial S$.}
\end{quote}
This result opens prospects to extend some of the aforementioned results to a considerably bigger setting.

\subsection{Singularity theorems}\label{s3_3}
The extension of classical relativistic  singularity theorems to the Lorentz-Finsler setting becomes interesting not only as a technical achievement but also to stress the robustness of these results, in the same vein     
 as  recent results on low regularity for these  theorems \cite{Graf, Oy}.

Aazami and Javaloyes \cite{AJ16} proved the first Finslerian singularity theorem by transplanting Penrose's one.  They solved  technical issues (as those related with the notion of trapped surfaces or the lack of smoothability of the exponential map at 0), 
and 
clarified the appropiated tools. Indeed, they noticed  that the involved elements  (lightlike 
geodesics, focal points, trapped surfaces, Ricci scalar) depended exclusively on the 
nonlinear connection and cone structure, then using effectively the anisotropic Chern 
connection. Other classical  singularity theorems where extended to Finsler spacetimes 
focusing on Raychaudhuri equation in 
\cite{Min_singular}. Moreover,   weighted Raychaudhuri equations and inequalities were obtained  in \cite{LuMinOhta} by using weighted 
Ricci curvature, then obtaining various 
singularity theorems in this setting. Such a weighted tensor extends the weighted Ricci 
curvature introduced for   Finsler metrics by Ohta 
 \cite{Ohta}.

\section{Lorentz-Finsler applied to classical geometries}  \label{s5}

Next,  the Lorentz-Finsler viewpoint applied to both,  Lorentz Geometry, and   Finsler one (including the Riemannian case) is developed. In particular, the relativistic {\em causal boundary}  will emerge as a unifying concept for some relevant boundaries in the Riemannian, Finslerian and Lorentzian settings. Most of these applications  have already been explained in the review \cite{JPS22}. So, we will make here just a short summary with some updates.

\subsection{Application to causality in  Lorentz Geometry}\label{s5_2} 
 Consistently with the Hamilton-Jacobi viewpoint explained in \S \ref{ss_HJ}, next the causal futures and pasts  in a stationary spacetime    will   be computed by using a convenient cone triple $(\Omega=dt, T,F)$, where $F$ is a Finsler metric of Randers metric which is used as a cost function. Then, the causal futures and past of a subset are described in terms of a Randers distance function, whose possible non-symmetry reflects the asymmetry of the time $t$ (the map $t\mapsto -t$ is not an isometry). 
This leads to a precise characterization of the causality of these spacetimes, including the position in the causal ladder. Moreover, the applications will be extended to the class of $SSTK$ spacetimes by dropping the hypothesis that $T$ in the cone triple is timelike.   
Here, our basic references are \cite{CJS11, CJS24}, apart from  the aforementioned \cite{JPS22}.

\subsubsection{Stationary spacetimes}\label{ss_stationary} 
In Lorentzian Geometry, a {\em stationary spacetime} is a Lorentzian manifold admitting a timelike vector field  $K$ (which will be asume to time-orientate it). Locally, such a spacetime looks like a {\em standard stationary spacetime}, that is,   a Lorentzian manifold $(M=\R\times S,g)$ such that $g$ is written as
\begin{equation}\label{statmet}
	g((\tau,v),(\tau,v))=-\Lambda \tau^2+2\omega(v)\tau+g_0(v,v),
\end{equation}
where $(\tau,v)\in  T_{(t_0,x_0)}(\R\times S)$,  $(t_0,x_0)\in \R\times S$,  $\Lambda:S\rightarrow \R$ is smooth and positive, $\omega$ is a one-form  and $g_0$ a Riemmanian metric, all on $S$.  Here, the timelike Killing vector field is the natural one $K=\partial_t$ associated to the projection $t: \R\times S\rightarrow \R$ which becomes a temporal function.
These notions are naturally extensible to the Lorentz-Finsler case\footnote{\label{foot_stationary} A stationary Finsler spacetime admits a timelike vector $K$ 
which is Killing (its local flow preserves $L$); in this case ii is static case when the distribution $K^\perp=\{ v\in TM:  g_K(K,v)=0  \}$ is involutive. However, their local expresions are not ony a transplantion of the Lorentz ones, see \cite{JS20} and also \cite{CaponioStancarone, CJ_Gelomer}.}, but here  we will remain in the Lorentz setting. Such a standard stationary expression admits a natural cone triple $(\Omega, T,F)$ with $\Omega=dt$, $T= \partial_t$ and $F$ then determined by the corresponding cone structure $\C$. Notice that $F$ is properly a Finsler metric on $S$,  the {\em Fermat metric}. Specifically,  
\begin{equation}\label{Fermat}
	F(v)=\frac{1}{\Lambda}\omega(v)+\sqrt{\frac{1}{\Lambda^2}\omega(v)^2+\frac{1}{\Lambda} g_0(v,v)},
\end{equation}
which belongs to the Finslerian class of {\em Randers metrics}. Notice that this metric is {\em non-reversible} (that is, there exist $v\in TS$ such that $F(v)\neq F(-v)$) whenever $\omega\neq 0$, then  reflecting an asymmetry between the future and past-pointing directions. 
 As explained in \S \ref{s2_3_1}, $F$ determines  a generalized distance  $d_F$, which is again non-symmetric if $\omega\neq 0$. Then,  one can define the {\em forward and backward balls}, $B^+(x,r)$, $B^-(x,r)$ associated to $d_F$, whose definition depends on whether they are constructed by using the distance from the center $x$  to the point or the other way round, respectively. In terms of these balls, chronological future and pasts can be computed easily and  the causality of the spacetime can be characterized  in a tidy way, see  \cite[Theorems 4.3 and 4.4]{CJS11} or \cite[\S 3.4.2]{JPS22}. Summing up:

\begin{thm}\label{t_escala_causal} Any standard stationary spacetime $(M=\R\times S,g)$ with associated Fermat metric $F$ (as in \eqref{statmet}, \eqref{Fermat}) 
is causally continuous\footnote{A causality condition stronger then {\em stable causality} (ensured by the existence of the temporal function $t$), where the sets $I^\pm(p)$ vary continuously with $p\in M$, \cite{MinSan}.} and:
	
	\smallskip

	\noi (A) The following properties 
	are equivalent:
	
	(A1) $(M, g)$ is causally simple.
	
	(A2) $F$ is convex (i.e., each $p,q\in M$ can be joined by a minimizing $F$-geodesic).
	
	\smallskip
	
	\noi (B) The following properties 
	are equivalent:
	
	(B1) $(M, g)$ is globally hyperbolic.
	
	(B2) the ball intersections 
	$ B_F^+(p,r)\cap  B_F^-(q,r'), p, q\in S, r, r'>0$ are  precompact. 

	\smallskip

	\noi (C) The following properties 
	are equivalent:

	(C1) The slices $\{t_0\}\times S$ are Cauchy hypersurfaces (so, $t$ is a Cauchy temporal function).
	
	(C2) The Fermat metric $F$ is (forward and backward) complete.
	
\end{thm}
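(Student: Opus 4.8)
The plan is to run everything through the \emph{Fermat dictionary}. From the cone triple $(dt,\partial_t,F)$ of \eqref{statmet}, Definition \ref{d_cone_triple} gives that $(\tau,v)\in T(\R\times S)$ is $\C$-causal iff $\tau\ge F(v)$ and lightlike iff $\tau=F(v)$. Since $t$ is a time function, every future-directed causal curve reparametrizes by $t$, its $S$-projection $\beta$ satisfying $F(\dot\beta)\le1$; conversely a constant-$F$-speed reparametrization of any $S$-curve of $F$-length $\le\Delta t$ (resp. $<\Delta t$) lifts to a causal (resp. timelike) curve. First I would make this correspondence precise, together with Fermat's principle (future-directed lightlike pregeodesics of $g$ project exactly onto $F$-pregeodesics), and deduce
\[
I^+(p)=\{(t,x):d_F(x_p,x)<t-t_p\},\qquad \overline{J^+(p)}=\overline{I^+(p)}=\{(t,x):d_F(x_p,x)\le t-t_p\}
\]
for $p=(t_p,x_p)$ (the last identity using only continuity of $d_F(x_p,\cdot)$), with the sharper $J^+(p)=\{(t,x):d_F(x_p,x)\le t-t_p\}$ holding once $F$ is convex, and symmetric formulas for the pasts. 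The preliminary claim that $(M,g)$ is causally continuous then follows: the temporal function $t$ gives stable causality hence distinction, and writing $I^\pm(p)$ in terms of forward/backward open Fermat balls, continuity of $d_F$ in both arguments gives inner and outer continuity of $p\mapsto I^\pm(p)$, i.e. reflectivity.

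For \textbf{(A)}, if $F$ is convex then the closed-form descriptions of $J^+(p)$ and $J^-(q)$ show they are closed, so $(M,g)$ is causally simple; conversely, causal simplicity yields $J^+(p)=\overline{J^+(p)}=\{(t,x):d_F(x_p,x)\le t-t_p\}$, and a causal curve from $p$ realizing the equality $d_F(x_p,x)=t-t_p$ projects, after reparametrizing by $t$, to an $F$-curve of length $\le t-t_p=d_F(x_p,x)$, hence a minimizing $F$-geodesic; letting $x_p,x$ vary gives convexity of $F$.

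For \textbf{(C)}, if $F$ is forward and backward complete, an inextensible future-directed causal curve reparametrized by $t$ must have $t$ exhausting $\R$ --- a finite $t$-interval would force its $S$-projection to be a forward- or backward-Cauchy $F$-curve of bounded length, hence convergent in $S$, extending the curve --- so it meets every slice $\{t_0\}\times S$, which is acausal; thus the slices are Cauchy. Conversely, forward incompleteness produces a unit-speed $F$-geodesic leaving every compact set in finite parameter, whose lift (Fermat) is a future-inextensible lightlike curve along which $t$ stays bounded, so sufficiently-future slices fail to be Cauchy; backward incompleteness is symmetric. For \textbf{(B)}, global hyperbolicity implies causal simplicity, hence (by (A)) $F$ is convex and the closed-form $J^\pm$ hold; the compact diamond $J^+(p)\cap J^-(q)$ then projects onto $\{x\in S:d_F(x_p,x)+d_F(x,x_q)\le t_q-t_p\}$, which is compact and contains $B_F^+(x_p,r)\cap B_F^-(x_q,r')$ as soon as $r+r'\le t_q-t_p$; letting $p,q$ range over $S$ gives (B2). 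Conversely, from (B2) I would first show $F$ is convex by an Arzel\`a--Ascoli argument: a minimizing sequence joining $x$ to $y$ stays inside $B_F^+(x,d_F(x,y)+2)\cap B_F^-(y,d_F(x,y)+2)$, precompact by (B2), so (reparametrizing by $F$-arclength and using lower semicontinuity of $F$-length) a minimizing geodesic exists; then, as in (A), $J^+(p)$ and $J^-(q)$ are closed, so $J^+(p)\cap J^-(q)=\overline{J^+(p)}\cap\overline{J^-(q)}$ projects onto $\{x:d_F(x_p,x)+d_F(x,x_q)\le t_q-t_p\}$, which sits inside $\overline{B_F^+(x_p,R)}\cap\overline{B_F^-(x_q,R)}$ with $R=t_q-t_p$ and is precompact by (B2) (a closed $F$-ball lies in a slightly larger open one, by continuity of $d_F$); since the $t$-coordinate stays in $[t_p,t_q]$, the diamond is compact, and a causal spacetime with compact diamonds is globally hyperbolic.

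I expect the \textbf{main obstacle} to be twofold. Conceptually the cleanest ingredient --- the ``reparametrize by $t$'' dictionary and Fermat's principle --- is the one whose rigorous proof demands the most care, because the Fermat metric \eqref{Fermat} is Randers (only positively homogeneous and, for $\omega\ne0$, non-reversible), $L=F^2$ is non-smooth along the zero section, and the $t$-parametrization of a lightlike geodesic is only a \emph{pregeodesic} reparametrization, so the variational comparison between $g$- and $F$-geodesics must be arranged with these conformal and homogeneity subtleties handled. The other delicate point is the reverse implication of (B): upgrading \emph{precompactness} of the Fermat ball intersections to genuine \emph{compactness} of the causal diamonds is precisely where the convexity of $F$ (extracted via a Finslerian Hopf--Rinow / Arzel\`a--Ascoli input) is indispensable, as no purely causal-ladder argument seems to suffice.
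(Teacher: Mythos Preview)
Your proposal is correct and follows exactly the strategy of the cited references \cite[Theorems 4.3 and 4.4]{CJS11} (the paper itself does not give a proof but only cites these). The ``Fermat dictionary'' you set up---$I^+(p)=\{(t,x):d_F(x_p,x)<t-t_p\}$ and its past counterpart via the reversed distance---is precisely the engine used there, and your treatment of (A) and (C) matches the original arguments. For (B2)$\Rightarrow$(B1) your route through convexity of $F$ (via Arzel\`a--Ascoli on minimizing sequences trapped in a precompact ball intersection) and then through (A) to obtain closedness of $J^\pm(p)$ is also the approach in \cite{CJS11}; the two ``obstacles'' you flag (Randers non-reversibility and the compactness upgrade in (B)) are exactly the points where that reference does the real work.
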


\subsubsection{SSTK spacetimes}
Notice that, as  Causality is invariant under conformal transformations, we could divide $g$ by $\Lambda>0$  in \eqref{statmet}. However, the above generality is useful to consider also the case when  the sign of $\Lambda$ changes, with the unique restriction that the metric remains Lorentzian, namely,
\begin{equation}\label{e_SSTKrestiction}
	\Lambda + \parallel\omega\parallel^2_0>0 
	\end{equation}
(here, $\parallel\cdot\parallel_0$ 	denotes the induced $g_0$-norm). In this case, the spacetime is called {\em standard with a spacetransverse
	Killing vector field} or {\em SSTK}. For these spacetimes, $t$ is still a temporal function, and  the intersection  $\{dt=1\}\cap \C$ provides a wind Finsler  structure $\Sigma$ on $S$  (Fig. \ref{f_ConoCasos3}). Let us analyze it explicitly (see \cite[\S 3.3]{CJS24} for detailed  computations).

	\begin{figure}
		
		\begin{center}
						\includegraphics[height=0.2\textheight]{./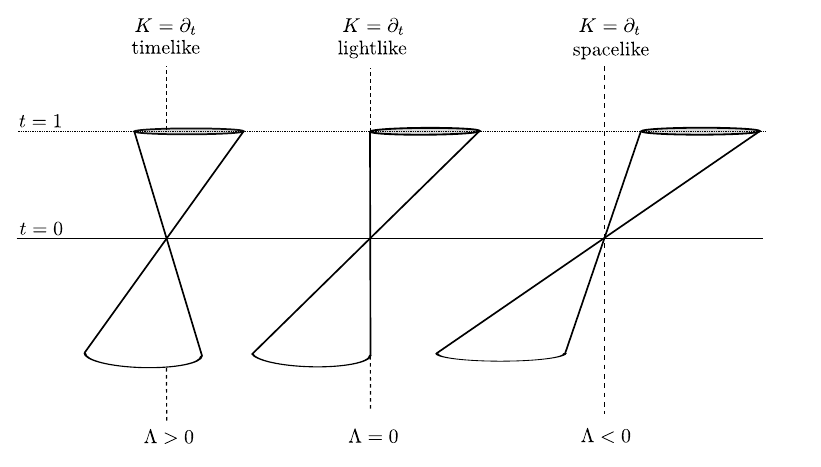}
			\caption{Wind structure on $S$ for a SSTK spacetime: Finsler (weak wind, $\partial_t$ timelike,   Fig. \ref{f_cone_triple}), Kropina (critical wind $\partial_t$ lightlike), strong wind ($\partial_t$ spacelike). [Credit \cite{CJS24}.]
			}\label{f_ConoCasos3}
		\end{center}
		
	\end{figure}
	
	In  case $\Lambda >0$ the Fermat metric $F$ in \eqref{Fermat} can be described in terms of its Zermelo navigation data $(g_R,W)$. These data directly provide the indicatrix of $F$ as follows: first, consider the indicatrix $\Sigma_R$ of the Riemannian metric $g_R$ and then, translate pointwise its center using the vector field 
	$W$.\footnote{The term {\em navigation} refers to the interpretation of  this translated  indicatrix as the pointwise and direction-dependent velocity of a sailboat  affected by the wind $W$.}  Zermelo data for $F$ in \eqref{statmet} are 	 determined by:
	\begin{equation}\label{e_FZ}
		\omega=-g_0(W,\cdot), \qquad g_R =\frac{1}{\Lambda+\|\omega\|_0^2}g_0.
	\end{equation} 
	However, these  data make sense whenever the SSTK condition \eqref{e_SSTKrestiction} holds. Indeed, the case $\Lambda=0$ corresponds to the critical case $0\in \Sigma$ (or Zermelo data with $g_R(W,W)=1$), so that $F$ turns into a {\em Kropina metric}, which is a type of singular Finsler metric (explicit in \ref{e_kropina} below). The case $\Lambda<0$ (Zermelo data with $g_R(W,W)>1$) corresponds to a strong wind structure. Here, the name {\em wind Riemannian } structure can be also used, as the desplaced indicatrix comes from a Riemannian metric.  
	
	The Finsler metric $F$ and the  improper Lorentz-Finsler metric\footnote{The name {\em  improper} Lorentz-Finsler reminds that, here, the condition of smooth extension to the cone for  Lorentz-Minkowski norms (Def. \ref{d_LorentzNorm}) is not properly fulfilled.} $F_l$ associated with  the wind structure $\Sigma$  (recall Fig. \ref{f_wind}) is explicitly:
		$$
		F=\frac{g_0}{-\omega+ \sqrt{\Lambda g_0+\omega^2}}, \qquad  F_l=-\frac{g_0}{\omega+ \sqrt{\Lambda g_0+\omega^2}}$$	
	(see the table in	 \cite[Fig. 2, p. 44]{CJS24} for additional details). As shown exhaustively in \cite[Chapter 2]{CJS24}, this wind structures admit natural notions on lengths, geodesics, convexity and completeness. In terms of them, the causal properties of the SSTK spacetime can be determined precisely, \cite[Chapters 3-5]{CJS24}, in particular extending Theorem \ref{t_escala_causal}, see \cite[Th. 5.9]{CJS24}. 
	
	It is worth pointing out that  SSTK structures appear in several relativistic spacetimes, especially, black holes. Indeed,  slow Kerr spacetime presents such a Killing $K$, which is ligthlike in the  {\em limit stationary hypersurface}   which bounds the so-called  {\em ergosphere}\footnote{The ergosphere has relevant physical properties. Indeed, Penrose process can occur, theoretically allowing for energy to be extracted from the black hole's rotation.}. For the specific description of this region and physically related properties, see \cite{JS17b, D}.

\subsection{Application to Riemannian and Finslerian Geometries} \label{s5_1}

\subsubsection{Wind Riemannian manifolds of constant flag curvature} 
In the previous section we have seen that, in a natural way, Randers metrics with Zermelo data $(g_R,W)$ turns out Kropina metrics if one permits $g_R(W,W)=1$ and strong wind Riemannian if 
$g_R(W,W)>1$. Kropina metrics are typically studied as singular Finsler metrics written as a quotient $g_R/\omega$, where $\omega$ is a non-vanishing 1-form and only the region $\omega>0$ is taken into account. 
In Finslerian Geometry becomes natural the 
evolution of the centroid of the indicatrix,
in contrast with the Riemannian case where, in natural coordinates, the indicatrices must be always ellipsoids centered at 0. Then, it is not surprising  that wind Riemannian (or, in general, wind Finslerian) structures may arise as complete solutions for a problem initially posed for Randers (or, in general Finsler) manifolds. In this case, the viewpoint of cone structures provides a natural ``non-singular'' description of the problem, which may lead to a more comprehensive solution.

Remarkably, this was the case of the problem of  classification for Randers metrics of constant flag curvature (CFG). As explained in \S \ref{ss2_5}  flag curvature plays a role with analogies  to sectional curvature in (semi-) Riemannian geometry. However, in striking difference with the semi-Riemannian case, the problem of finding the model spaces of constant flag curvature is completely open in Finsler Geometry. Indeed, a milestone in Finslerian Geometry obtained by Bao, Robles and Shen \cite{bao2004} was the classification of such manifolds  restricted to the class of Randers metrics. Bao et al. {\em local} solution states that the Zermelo data $(g_R,W)$ of a CFC Randers manifold must be: $g_R$ a Riemannian metric of constant curvature $c$ and $W$ any of its Killing vector fields, and, in the case $c=0$, $W$ a homothetic vector field too.    For the global result, however,  $g(W,W)$ may reach the value $1$  in some of the previous cases, then leading to incomplete solutions. These solutions  are inextensible as Randers metrics but can be extended as wind Riemannian ones, as proved in \cite{JS17}.  This leads to the following global result,  \cite[Theorem 3.12]{JS17}, obtained taking into account a natural notion of {\em completeness} for wind Finslerian structures (see \cite[Def. 2.45, Prop. 6.6]{CJS24}).

\begin{thm}\label{e_cfc}
The complete simply connected wind Riemannian structures with constant flag curvature lie in one of the following two exclusive cases, determined by the Zermelo data $(g_R,W)$:

(i) $(M, g_R)$ is a model space of constant curvature and $W$ is any of its Killing
vector fields.

(ii) $(M, g_R)$ is isometric to $\R^n$ and $W$ is a properly homothetic (non-Killing)
vector field.

Moreover, the inextensible simply connected Randers (resp. Kropina) manifolds with constant flag curvature are the maximal simply connected open subsets of the previous wind Riemannian structures where the wind is mild (resp. critical).   
\end{thm}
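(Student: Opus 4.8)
\emph{Overview.} The plan is to reduce Theorem~\ref{e_cfc} to the \emph{local} classification of Bao, Robles and Shen \cite{bao2004} together with the elementary fact that a wind Riemannian structure carries globally defined Zermelo navigation data, and then to upgrade to complete simply connected models by invoking the wind-Finslerian notion of completeness from \cite{CJS24}. First I would fix the global data: by definition a wind Riemannian structure is a hypersurface $\Sigma\subset TM$ whose fibre $\Sigma_p$ is the translate by a vector $W_p$ of the $g_{R,p}$-unit sphere of some Euclidean product $g_{R,p}$ on $T_pM$, and the smoothness/transversality conditions in the definition of a wind Finsler structure make $p\mapsto(g_{R,p},W_p)$ a smooth Riemannian metric $g_R$ and smooth vector field $W$ on $M$, regardless of whether the wind is mild, critical or strong. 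On the conic domain of allowed directions one has at each point the conic Randers metric $F$ of the displayed formulas, and ``constant flag curvature $c$ of the wind Riemannian structure'' is to be read as: the flag curvature of this $F$ equals $c$ throughout its domain. (This is exactly the viewpoint for which the Lorentzian/SSTK cone structure is the natural ambient object: the ``incomplete Randers'' solutions below are inextensible as Randers metrics but sit inside genuinely complete wind structures.)

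\emph{Reduction to Bao--Robles--Shen.} Next I would observe that the Bao--Robles--Shen computation of the flag curvature of a Randers metric in terms of its navigation data is a differential identity valid on the open set of directions where $F$ is defined, and does not need $g_R(W,W)<1$ globally. Imposing that it equals the constant $c$ therefore yields, pointwise on all of $M$: $g_R$ has constant sectional curvature $c$, and $W$ is $g_R$-Killing, with $W$ moreover $g_R$-homothetic when $c=0$. On the open mild-wind region $\{g_R(W,W)<1\}$ this is literally the local theorem of \cite{bao2004}; on the rest of $M$ one uses the same identity for the conic metric, or simply that the tensorial conditions ``$g_R$ has constant curvature $c$'' and ``$\mathcal L_W g_R=0$ (resp.\ proportional to $g_R$ when $c=0$)'' are closed and hold on an open dense set.

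\emph{Models and the Randers/Kropina subcases.} Since the wind Riemannian structure is complete, $g_R$ is a complete Riemannian metric --- here one invokes the completeness criteria for wind Finslerian structures of \cite{CJS24} (Def.~2.45, Prop.~6.6), which are precisely tailored to reduce to $g_R$-completeness in the Riemannian-wind case. Being complete, simply connected and of constant curvature $c$, $(M,g_R)$ is a model space (round sphere for $c>0$, $\R^n$ for $c=0$, hyperbolic space for $c<0$), and $W$ is a Killing field of that model, homothetic if $c=0$; since on the sphere and on hyperbolic space every complete homothetic field is already Killing while on $\R^n$ a properly homothetic field is never Killing, alternatives (i) and (ii) are mutually exclusive and exhaustive. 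Conversely, for any such $(g_R,W)$ the translated unit spheres form a smooth strongly convex hypersurface bundle (translates of ellipsoids are always strongly convex), whose conic Randers metric has flag curvature $c$ by the previous step read backwards, and which is complete by the same criteria of \cite{CJS24}; so every listed item occurs. Finally, inside a model the locus $\{g_R(W,W)<1\}$ is where the structure restricts to a genuine Randers metric and $\{g_R(W,W)=1\}$ is where it restricts to a Kropina metric, while the natural extension past $\{g_R(W,W)=1\}$ forces leaving the Randers (resp.\ Kropina) class; hence a simply connected Randers (resp.\ Kropina) manifold of constant flag curvature inextensible as such is exactly a maximal simply connected open subset of one of these wind Riemannian models on which the wind is mild (resp.\ critical), and conversely. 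The example $\R^n$ with a properly homothetic $W$ (a single wind structure containing a Randers region near the ``origin'', the critical Kropina sphere $g_R(W,W)=1$, and a strong-wind exterior) shows this is non-vacuous.

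\emph{Main obstacle.} The delicate point is the completeness bookkeeping in the last step: one must be certain that the wind-Finslerian completeness used in the statement is genuinely equivalent to geodesic completeness of the background $g_R$ --- this is exactly where incomplete Randers CFC metrics get ``completed'', so a naive completeness notion would yield the wrong list --- and, relatedly, that passing from the identities of \cite{bao2004} on the mild-wind region to a conclusion valid on all of $M$ is legitimate (via density of that region, or the direct conic computation). By contrast, the constant-curvature rigidity, the model-space classification, and the converse (strong convexity and completeness of translated-ellipsoid bundles) are essentially routine.
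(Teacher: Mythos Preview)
The paper does not give its own proof of this theorem: it only sketches the strategy in the paragraph preceding the statement and refers to \cite[Theorem~3.12]{JS17} for the actual argument, invoking the Bao--Robles--Shen local classification \cite{bao2004} and the wind-Finslerian completeness notion of \cite[Def.~2.45, Prop.~6.6]{CJS24}. Your proposal follows precisely this route, so at the level of strategy you are aligned with the paper.

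Two points deserve tightening. First, your phrasing ``$W$ is $g_R$-Killing, with $W$ moreover $g_R$-homothetic when $c=0$'' is garbled: Killing is already a special case of homothetic, so the condition should read that $W$ is homothetic, and is automatically Killing when the sectional curvature of $g_R$ is nonzero (since nonflat space forms admit no properly homothetic fields). Second, your fallback density argument --- deducing the tensorial conditions on all of $M$ from their validity on the mild-wind region $\{g_R(W,W)<1\}$ --- fails when that region is empty, which can happen (e.g.\ a strong wind everywhere). You do flag the alternative of running the Bao--Robles--Shen identity directly for the conic Randers metric $F$ on its domain of definition; that is indeed the correct route, and is what \cite{JS17} does, but it is not quite ``the same identity'' automatically: one must check that the flag-curvature computation of \cite{bao2004} goes through for a conic (not globally defined) Randers metric, which is straightforward but should be stated rather than assumed. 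Your identification of the completeness equivalence as the main obstacle is accurate and matches the paper's emphasis.
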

Such a result also extends the classification of Kropina metrics by Yoshikawa and Sabau \cite{YS}.

\subsubsection{Geodesic refocusing}\label{sss_refocusing} Refocusing is a relevant  property in Riemannian Geometry, studied in its own right by B\'erard-Bergery in \cite{B-B},  which is related to the problem of manifolds whose geodesics are closed. A Riemannian manifold is called a $Z^x$ manifold when all geodesics starting at $x$ return to $x$, and a $Y^x_l$ manifold if every unit-speed geodesic starting at
$x$ arrives at $x$ at time $l>0$. A basic open question is whether there are $Z^x$ non $Y^x_l$ manifolds. Considering two points $x,y$ and geodesics connecting them, analogous definitions $Z^{(x,y)}$, $Y^{(x,y)}_l$ follow and they formalize the idea of {\em strong refocusing} for the geodesics departing from $x$.

 In the Lorentz case, this issue is relevant for the space of cone geodesics $\Ng$ (as pointed out   below Th. \ref{t_CN}).  Indeed,   Chernov and Rudyak \cite{ChernovRudyak}  and then   Kinlaw  \cite{Kinlaw}  stressed the importance of   {\em (null) refocusing},  that is, the property that all the geodesics starting at an event $p$ arrive close to a second one $q$. In the case of a product spacetime $\R\times S$, if $S$ is a $Y^{(x,y)}_l$ then all the lightlike geodesics starting at $p=(x,0)$ will arrive at $q=(y,l)$, i.e. a  {\em strong (null) refocusing} holds. Consistently, Chernov, Kinlaw and Sadykov
\cite{ChernovKinlawSadykov} studied Riemannian refocusing too. 
Recently, Bauermeister \cite{Bauermeister} has  introduced 
the notion of  {\em observer refocusing} in spacetimes
and obtained as a Riemannian corollary 
a partial  solution to the 
 open question on $Z^{(x,y)}$, $Y^{(x,y)}_l$ manifolds: 
{\em an analytic Riemannian $Z^{(x,y)}$ manifold is also an $Y^{(x,y)}_l$} one for some $l>0$. 
The proofs use techniques  transplantable to the Lorentz-Finsler setting too
(including the splitting explained in \S \ref{s_items}, item~\ref{item7}), opening  the possibility to obtain new applications of Lorentz-Finsler geometry.

\subsection{The causal boundary: a link among the three geometries}\label{s5_3}
The {\em causal boundary}  of a spacetime, {\em c-boundary} for short, is a construction initially proposed by Geroch, Kronheimer and Penrose \cite{GKP} as an intrinsic alternative to the extrinsic  conformal boundary commonly used in General Relativity. In their seminal article, these authors pointed out that their own definition of the topology should be improved. This question became a major issue, that hindered the applicability of the c-boundary. What is more,  the c-boundary turned out non-Hausdorff in some simple cases, making it awkward. After the careful study by Flores et al. \cite{FHS11} (see also previous Harris' \cite{Ha00}), it was clear that  a minimal {\em chr-topology} should be always present at this boundary and that  this topology might be non-Hausdorff. Anyway, it was left open the possibility that, eventually,  some authors might prefer to choose a refinement of this topology in certain particular cases. 

The systematic computation of the so-defined  c-boundary for a standard stationary spacetime was carried out by Flores et al. in \cite{FHS11} (see \cite{Ha01} for a previous study of the static case). This research provided some remarkable  links between the c-boundary and others boundaries that could be defined in the Riemannian and in the Finslerian setting. 

Following \cite{FHS11}, the links with the Riemannian setting appeared when considering the c-boundary of a product $(\R\times S, g=-dt^2+g_0)$. For a  Riemannian manifold $(S,g_0)$, one can consider the following three general boundaries: the basic  Cauchy one for incomplete metrics, the celebrated Gromov's compactification for any complete Riemannian manifold \cite{Gr, Gromov}  and the Busemann one for Cartan-Hadamard manifolds (i.e., simply connected Riemannian manifolds with non-positive sectional curvature).  

Essentially, Gromov's identify each $x\in S$  with  $d_x+\R$, that is,  the distance function $d_x$ to $x$, up to an (additive)  constant. Then, $S$ can be regarded as a subset of the space of Lipschitz functions on $S$ up to a constant. As this space is compact, the compactification of $M$ is attained just by taking the closure therein, being Gromov's boundary the frontier points. 

Busemann compactification provides a sphere at infinite (computable by using {\em Busemann functions}) for any Cartan-Hadamard manifold, as proven by Eberlein and O'Neill  \cite{EO}. It turns out equivalent to Gromov's for these spaces. As proven in \cite{FHS11}, the construction is generalizable to any complete Riemannian manifold, now using Busemann-type functions on diverging curves. 
Using a suitable topology, this generalized {\em Busemann boundary} naturally matches with the spacetime c-boundary and may be non-Hausdorff in some patological cases. It is worth pointing out that, in these cases  Gromov boundary presents   awkward properties,  as the existence of points in the boundary non-reachable as limits of continuous curves in the manifold (see \cite[Figures 2, 3]{Sa22}).   When Busemann's is Hausdorff,   however, Gromov and Busemann boundaries are equivalent and become the natural Riemannian counterpart to the c-boundary. 

Moreover, for incomplete Riemannian metrics the Cauchy boundary can be naturally added to both Gromov and Busemann boundaries. When considering the c-boundary of the product spacetime,  this Cauchy boundary  admits a natural interpretation in terms of  {\em naked singularities} (associated with the lack of global hyperbolicity).

In the general case of standard stationary spacetimes, we have seen that the causality is determined by the (non-reversible) Fermat Randers metric $F$. This suggest extending the previous Riemannian setting with counterparts in  Lorentz products to the Finslerian one with counterparts for Randers metrics in stationary spacetimes. 

The extensions of Gromov and Busemann boundaries to the reversible Finslerian setting  have a similar fashion. However,  non-reversible Finsler metrics admit a variety of new elements associated with the non-symmetry of the distance. Indeed, one has a forward and backward completion for each one of the three Riemannian completions, which will correspond to the different behavior of the c-boundary of stationary spacetimes to the future and the past.
A further  explanation of this topic exceeds our purpose. The reader is referred to \cite{FHS_Memo} for the original exhaustive  study,   the Section 5 in \cite{JPS22} for a summary, and the Section 4.4  in \cite{Sa22} for an   instructive particular case. 

  As a final remark, it is  worth pointing out the comparison of the c-boundary with a boundary introduced by Beem for the class of causally simple spacetimes \cite{Be77},  recently used in the setting of low regularity and Lorentz length spaces (see Burgos' thesis \cite{Bu} and references therein). When tested in static spacetimes, one notices that  Beem's boundary does not involve a space compactification, as in the case of Busemann's and Gromov's. Specifically,  Beem's topology is a metrizable topology finer than the minimal  chr-one which permits to introduce a structure of Lorentzian pre-length space in the completion of any globally hyperbolic spacetime. Anyway, the best results are obtained when it coincides with the chr-topology---a property again equivalent to the Hausdorffness of the chr-topology. So, under Hausdorffness, the c-boundary becomes a unifying concept for Riemann, Finsler and Lorentz manifolds, including the framework of low regularity.

\section{Applications in Everyday Physics}\label{s4}
Finsler and Lorentz-Finsler Geometry have been applied to study moving objects and wave propagation in many situations. These include 
Zermelo's navigation problem (the motion of a ship in a current or a zeppelin in the air),  \cite{bao2004,CJS24,JS20,MPR25}, 
general anisotropic waves  \cite{JPS21}, or other specific ones,   such as sound waves \cite{gibbons2010,gibbons2011}, wildfire fronts \cite{JPS23,markvorsen2016} and seismic waves \cite{antonelli2003,yajima2009}, the latter including refraction between different layers of the Earth \cite{BS02,SW99}. 

 The specific Lorentz-Finsler viewpoint  was stressed since the  first version  of \cite{CJS24}. There it is shown the possibility to model, in a unified  and non-singular way, the motion of objects whose  velocities yield a wind-Finsler structure by using a cone structure (as  in Fig. \ref{f_conewind}).   Then, the systematic application of cone structures and Lorentz-Finsler metrics  in \cite{JPS21}, and  subsequent applications as \cite{JPS23}, showed a number of advantages, which will appear next: 

\ben \item 
The possible time-dependence of the model, or {\em rheonomy}, is  geometrized in a natural intuitive picture. 

\item The non-rheonomic case is also benefited, as in the already studied setting of SSTK spacetimes (\S \ref{s5}), or in  the geometric interpretation of  Snell law below (see Fig. \ref{f_Snellbasic}). 

\item Finsler spacetime techniques can solve technical issues about propagation which are  more involved from a purely Finslerian viewpoint (for example,  
see 
\S \ref{ss_Wildfires} below). 

\item 
The relativistic machinery of causality and, especially, event horizons can be used (see Fig. \ref{f_Wild}). 
In particular, cone geodesics  provide first arriving curves and satisfy Fermat principle i.e., they are the critical curves for the arrival time 
not only when first or last arriving  (see Fig. \ref{f_Fermat}) but also in other critical cases, typically  with  noticeable global properties. 

\item Geometric properties admit further interpretations in specific models, as the case of focal and cut points for wildfires,    \S \ref{ss_Wildfires}. 

\een

What is more, the recent spacetime viewpoint for the study of refraction  in \cite{JMPS25}, including Snell formula, not only provides a better modeling for specific cases (as seism propagation through possibly moving layers) but also introduces an essential ingredient for an efficient discretization of classical wave propagation as well as for Numeric Relativity (see  \S \ref{ss_discretization}). 

Next, we will give a brief account of these Lorentz-Finsler applications and references.

\subsection{Fermat, Zermelo and Snell problems}\label{s4_1}

 Hamilton Jacobi equation and optical geometrics (see \S \ref{ss_HJ}) are linked to the classical Fermat principle which states that the trajectories of light rays are the critical points of the arrival time functional, $\int n(\gamma(s))ds$, on curves $\gamma$ connecting the source with each point. This underlies the relativistic and classical applications below, where the time is embedded in the manifold structure even in the classical setting, providing a richer geometric picture. 

\subsubsection{An increasingly general Fermat principle}
Let us pose the  basic 
Fermat principle for a cone structure $(M,\C)$ admitting a time function $t: M\rightarrow \R$.
 Choose  $p\in M$  and an embedded  timelike curve $\alpha$ parametrized by the time $t$.  Consider the set of piecewise smooth lightlike curves from $p$ to $\alpha$ 
$$\Ng_{p,\alpha}:=\{ \gamma: [0,1] \rightarrow M | \gamma \;  \hbox{lightlike},  \; p=\gamma(0), \,   \gamma(1) \in \alpha \}.$$
 Then, define the {\em arrival} functional:
  
 $$\mathcal{T}: \Ng_{p,\alpha} \rightarrow \R \qquad  \gamma \mapsto \mathcal{T}[\gamma]:=\alpha^{-1}(\gamma(1)).$$

\begin{figure}
\begin{center} 
		\includegraphics[width=0.4\textwidth]{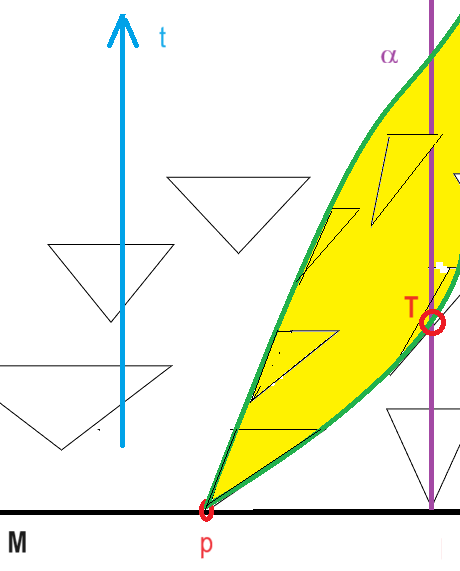}
\end{center}	
\caption{The first and last causal curves from $p$ to $\alpha$ (in green) must be lightlike geodesics by causality considerations ---in higher dimensions    other critical points of the arrival functional can be pictured.}\label{f_Fermat}
\end{figure}
\noindent Fermat's principle is then the following result:
\begin{quote}
{\em  	The set of critical points of $\mathcal{T}$ is equal to the  set of cone geodesics from $p$ to $\alpha$.  }
\end{quote}

\begin{note}
In particular,   the first-arriving causal curves must be cone geodesics. Of course, this can be also achieved trivially from standard causality,  as necessarily the first arriving point must lie in the horismos $E^+(p) (=J^+(p)\setminus I^+(p)$). However, even this case will be relevant in the Snell setting, as it will permit define causality for discontinuous media.  
\end{note}

In the relativistic setting, the Fermat principle was posed by Kovner \cite{K90}  and studied rigurously by Perlick \cite{P90}. This author also noticed that the existence of the global  time $t$ was irrelevant and, indeed the parameter of $\alpha$ for any regular   parametrization (i.e., smooth  with non-vanishing velocity) could play its role for the functional $\mathcal{T}$. As the computations of critical points only uses variations of the arriving curve, global hypotheses in $\C$ are  required only to ensure existence of critical points.   Perlick also made a first  extension of Fermat principle  to the Lorentz-Finsler case \cite{P06}. 

Perlick and Piccione \cite{PP98} also extended the principle to the case when the source of the lightlike curves is a spacelike submanifold $P$ (instead of a point $p$)  and the curves arrive at a timelike submanifold (now actually endowed with a time function in order to redefine properly $\mathcal{T}$). In the setting of SSTK spacetimes above, Caponio et al. \cite{CJS24} considered also the case when the arrival curve $\alpha$ is not timelike,    noticing that just the non-orthogonality between  $\alpha$ and the incident curve $\gamma$ is sufficient to establish Fermat principle (notice that this condition is  automatically  satisfied when $\alpha$ is timelike), see \cite[Th. 7.4]{CJS24}. 

These previous approaches and hypotheses have been  taken into account and optimized in the setting for Snell formula for cone structures  recently developed in \cite{JMPS25}.

\subsubsection{Zermelo navigation}\label{ss_Zermelo}
The Zermelo problem admits now a very general formulation. Consider 
a moving object and let $(M,\C)$ be a cone structure modelling its 
possible (maximum) velocities at each event (point of $M$) and direction. 
Let $p\in M$ be the starting event and $\alpha$  a  target    curve.   Which are the candidates of first arriving (for the 
parameter of $\alpha$) permitted curves  and when do they exist?

It is clear now that the solutions must be  cone geodesics from $p$ to the intersection of $E^+(p) \cap $Im$(\gamma)$, when non-empty. The existence of such a geodesic depends on the global structure of $\C$. In case that $\C$ is causally simple (in particular, when  globally hyperbolic) and $\alpha$ continuosly inextendible towards the infimum of its parameter,   such a geodesic will exist if and only if  $\alpha$ contains at least one point in $J^+(p)$ and  one  $M\setminus I^+(p)$, as this forces the existence of a first point  in $E^+(p) \cap $Im$(\gamma)$. 

If $\C$ is a  SSTK structure\footnote{The following assertions could be extended to Finsler SSTK spacetimes, taking into account footnote \ref{foot_stationary}.} then  its whole  causality  is encoded in a wind Finsler structure $\Sigma$  on a slice $S=\{t=0\}$ through $p$. Thus,  the general conditions which ensure  causal simplicity or global hyperbolicity for $\C$ can be expressed accurately in terms of 
$\Sigma$ as explained in \S \ref{s5_2}. If  $\alpha$ is an integral curve of the Killing vector field $K$ intersecting $S$ at a point  $q$, then the existence of Zermelo solutions is equivalent to the existence of minimizing geodesics $\rho$ for 
the wind Finsler structure from $p$ to $q$. If, additionally, $\alpha$ is parametrized by the  time $t$, then the arrival time is the wind Finsler length of $\rho$. In case that the SSTK is  a 
standard stationary spacetime, the wind Finsler structure $\Sigma$ becomes just a Randers metric and $\rho$ is a usual geodesic. Moreover,  as the integral curves of $K$ are timelike,  the Fermat principle holds with no restriction about  orthogonality at the arrival point in $\alpha$.  From the geometric  viewpoint, this generalizes widely the previous study by Perlick about the stationary case in \cite{P90b}, where the possibility of a  reduction in one dimension less for Fermat principle considered.

\subsubsection{Snell law}
Next, we will follow \cite{JMPS25}. Classical Snell law applies to light rays  crossing two different media, in order to  find the fastest trajectory between two prescribed points. This is characterized by the  elementary formula   
	$
	 \sin (\theta_1)/v_1=  \sin (\theta_2)/v_2$, where $v_i$ is the speed of light at each medium and $\theta_1, \theta_2$ the incident and refracted angles, according to Fig. \ref{f_SnellClassic}. 
	 
\begin{figure}
	\begin{center} 
	\includegraphics[height=0.3\textheight]{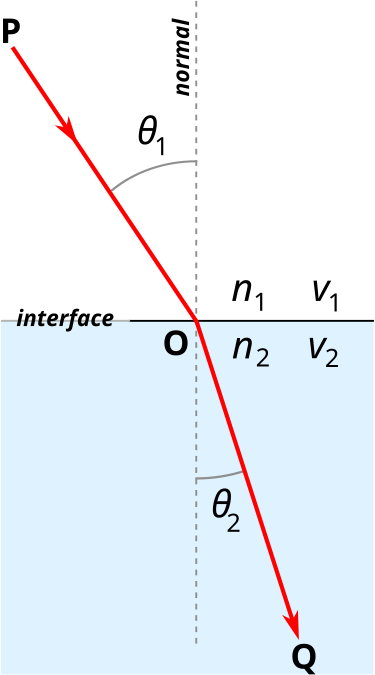}
	\caption{Classic Snell (Credit: Wikipedia ``Snell's law'')}\label{f_SnellClassic}
			
	\end{center}

\end{figure}
	
 When considering the propagation between two cone structures $(Q_1,\C_1)$, $(Q_2,\C_2)$, Snell law becomes a Fermat  problem in a  discontinuous medium. Here light ray departs from an initial event $p\in Q_1$ and must arrive at a curve $\alpha$ modelling the target observer as in Fig \ref{Snellsetting},  namely:
	
\begin{figure}

	\includegraphics[width=0.54\textwidth]{./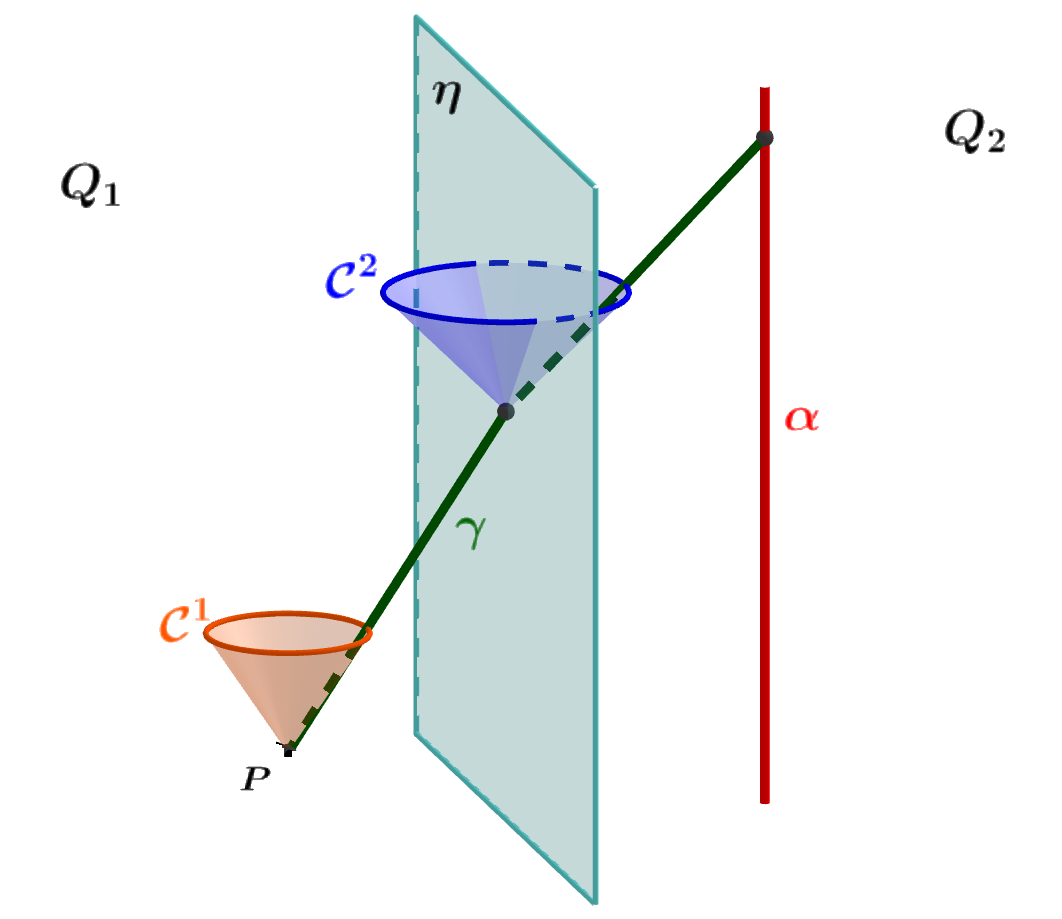}
	\caption{Snell rheonomic setting (Credit \cite{JMPS25})}\label{Snellsetting}
	\end{figure}
	
	\bit 
	\item Event $p$ (source) at medium $Q_1$ with cone ${\C }_1$..
	\item Observer $\alpha$  at medium $Q_2$ with cone $\C_2$.
	(wider cone means $v_2>v_1$)
	\item $\eta$ interface, the critical curve $\gamma$ will be a  lightlike geodesic broken at a point $o\in\eta$.  
	\eit

	\begin{figure}
	
	\begin{center} 
		
			\includegraphics[width=0.5\textwidth]{./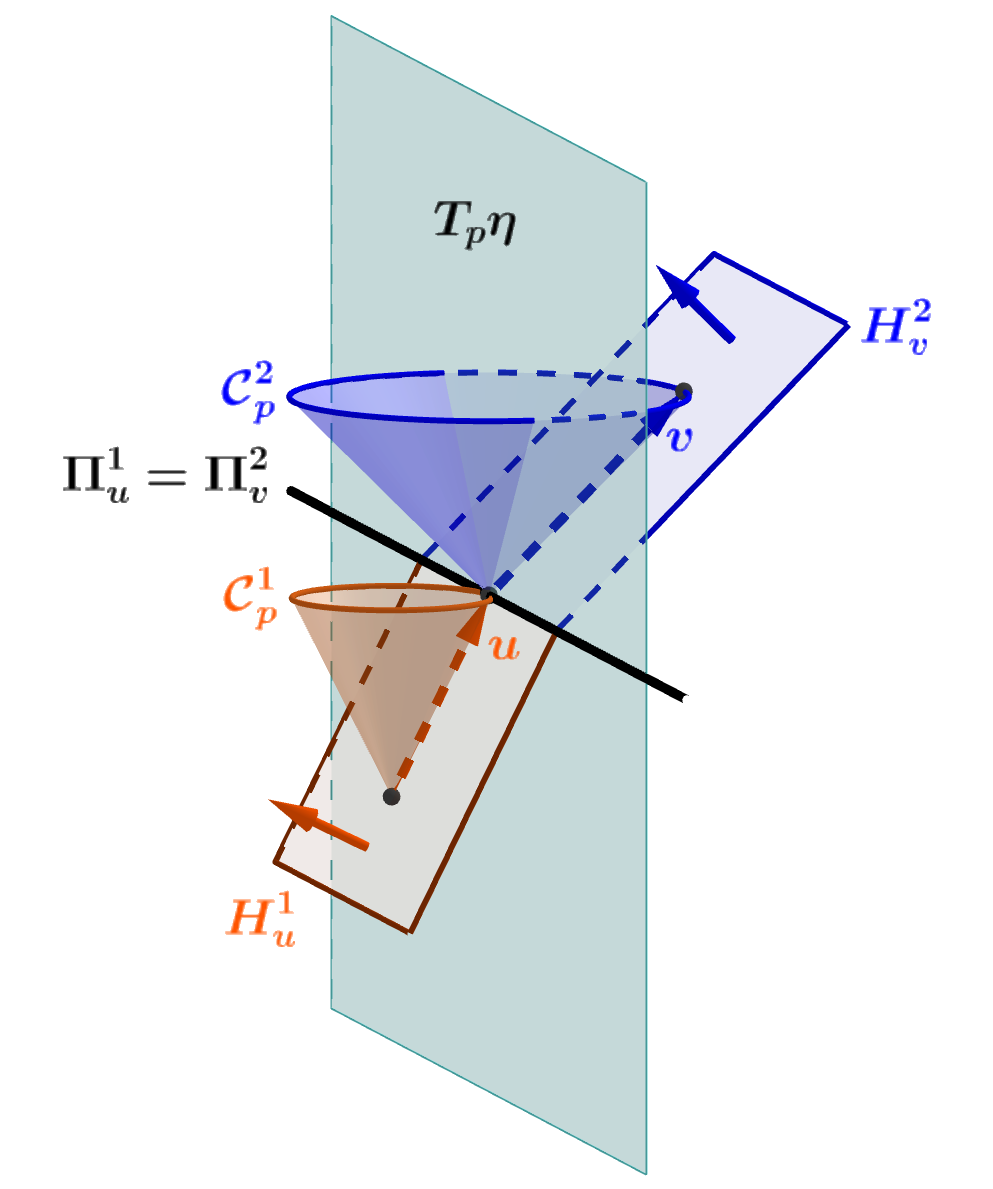}
	\caption{Basic case of Snell formula (Credit \cite{JMPS25})} \label{f_Snellbasic}
	\end{center} 
	\end{figure}

The basic Snell law  for (anisotropic) cones is (\cite[Th. 4.3]{JMPS25}, see Fig. \ref{f_Snellbasic}): 
	\begin{equation}\label{e_Snellbasic}
\Pi_u^1 (:=	H_u^1 \cap T_o\eta) = \Pi_v^2 (:=H_v^2 \cap T_o\eta ) ,
	\end{equation}
where $u$ (resp $v$) is the incident (resp. refracted) $\C_1$-lightlike (resp. $\C_2$-lightlike) direction at $o\in \eta$,   $H_u^1$ (resp. $H_v^2$) is the hyperplane tangent to $\C_1$ at $u$ (resp. to $\C_2$ at $v$) and Snell's formula states that the intersections  $\Pi_u^1, \Pi_v^2$ of these hyperplanes with the tangent space to the interface $T_o\eta$ must be equal. 
There is, however, a number of issues in relation to \eqref{e_Snellbasic} and  Fig.~\ref{f_Snellbasic}:

\bit\item The incident hyperplane $H_u^1$ (and then the refracted one $H_v^2$) are assumed to be transverse to $\eta$.

\item The orientations in these hyperplanes (selected by the side where the corresponding cone lives) must be consistent, as in Fig.~\ref{f_Snellbasic}. This consistency is the additional hypothesis  to ensure local minimization for Fermat principle and, then, to obtain a  {\em Snell geodesic} i.e. a locally horismotic curve.
Indeed, if the blue cone lived on the other side of $H_v^2$, then the broken lightlike geodesic could not minimize the arrival time (and could not be used to compute the causal future of $p$). 

\item  Classical effects as  reflection and total  reflection also appear as a consequence of the Fermat problem.

\item Purely relativistic issues appear when the interface $\eta$ is not timelike (indeed, the causal character of $\eta$ might be  different for $Q_1$ and $Q_2$). 

For example, when $\eta$ is spacelike for $Q_1$ and $Q_2$, the reflection is forbidden. 
Moreover, if the cones point out in consistently oriented directions at $\eta$,   two refracted  trajectories  would appear, see 
Fig.~\ref{f_2refracted}. In this case, only one of them is minimizing (a fact determined again by orientations) providing then a Snell geodesic.
In case of non-consistent orientation, no refracted one  can appear, see Fig.~\ref{f_NOrefracted}.\footnote{In these figures, past cones (which were not considered in $\C$) are depicted and filled for an easier comparison with usual relativistic diagrams.}  
\eit

\begin{figure}
\begin{center} 
				\includegraphics[width=0.5\textwidth]{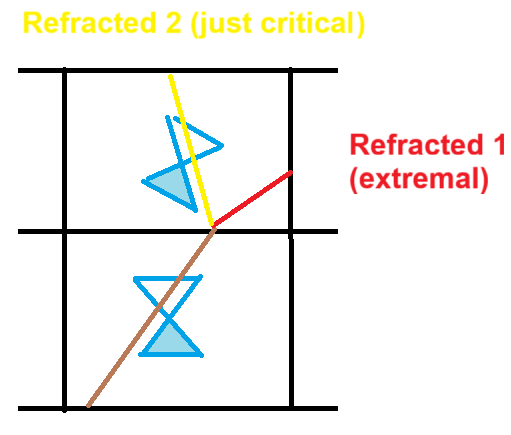}
	\caption{Two refracted trajectories for a spacelike interface. Only the red one is  locally minimizing (Snell geodesic)}	\label{f_2refracted}
\end{center}	

\end{figure}

\begin{figure}

\begin{center} 
	\includegraphics[width=0.5\textwidth]{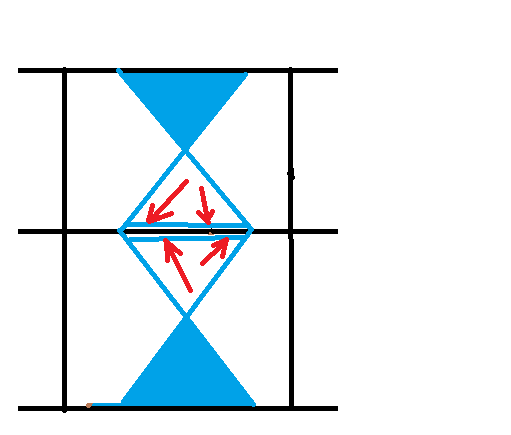}
	\caption{No reflected nor  refracted trajectories in a spacelike interface.}	\label{f_NOrefracted}
\end{center}	

\end{figure}

All these possibilities are analyzed in 
\cite[\S 5]{JMPS25}, as well as the corresponding 
causality in such discontinuous media  
\cite[\S 6]{JMPS25}. About the latter, it is worth pointing out the higher technical difficulty of the Lorentz-Finsler case in comparison with the Lorentz one, stressed in the proof of \cite[Lemma 6.6]{JMPS25}.

\subsection{Applications to wave propagation and discretization}  \label{s4_2}

\begin{figure}

\begin{center}

\includegraphics[height=0.4\textheight]{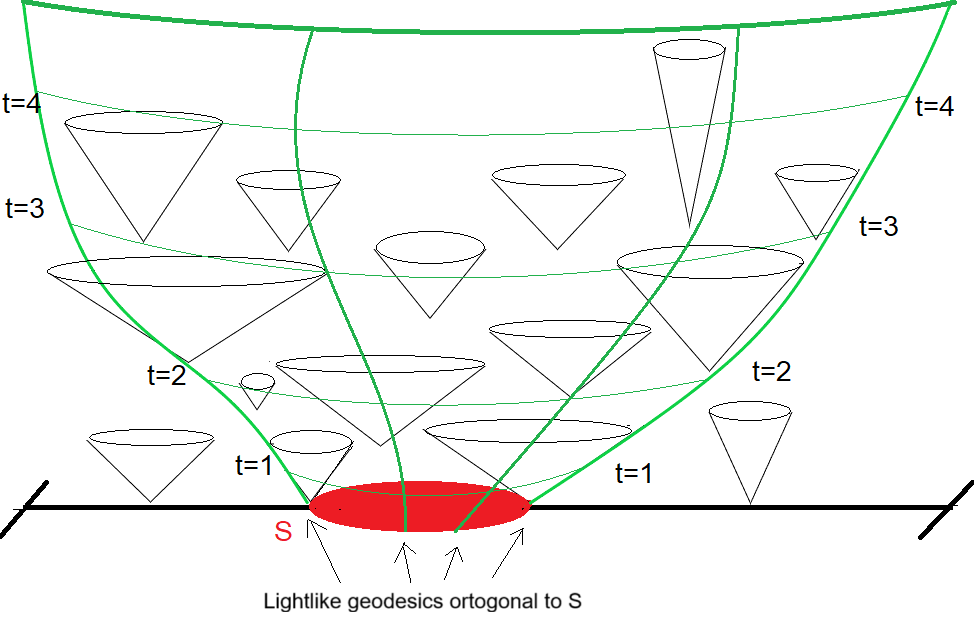}
\caption{Propagation of a wildfire (cone structure viewpoint)}\label{f_Wild}
\end{center}

\end{figure}

\subsubsection{General wave propagation
and wildfires} \label{ss_Wildfires}
Reasoning as for Zermelo navigation in \S \ref{ss_Zermelo}, let $(M,\C)$ be a cone structure modelling the velocity of  wave propagation  in each  event   and direction permitted by a medium.
In the case of Cosmology, the medium is vacuum, but in the case of a problem in classical Mechanics (the propagation of a wildfire or a sound wave), a cone triple for $(\Omega, T, F)$ on $\C$ appears naturally. Namely,  $T$ would  model the observers ``at rest'' on Earth,   $\Omega$ is $dt$, with  $t$ the Newtonian time, and $F$ (which depends on $t$) specifies $\C$. 

Focusing on a wildfire,  it will start at a bounded region $R$ which can be modelled as a compact open subset with boundary $S=\partial R$ of the (spacelike) hypersurface  $t=0$. Then, the evolved burned region is equal to its $\C$-causal future $J^+(R)$. The spacetime frontfire will be its boundary in $t>0$,  which 
matches the horismos $E^+(S)$. The frontfire at each instant $t_0>0$ is $E^+(S)\cap \{t=t_0\}$ (see Fig. \ref{f_Wild}). Now, Lorentz-Finsler geometry applies to show that $E^+(S)$ is composed by the cone geodesics starting at $S$ with initial velocity pointing outside $\R$. Notice that,  at each point $p\in S$, there are exactly two lightike directions orthogonal to $S$, the (non-depicted) second one pointing inwards $R$.   

 This  general viewpoint of anisotropic wave propagation was developed in \cite{JPS21}. It includes the short time existence of solutions \cite[Theorem 4.8]{JPS21}. This result is more involved now than in the  purely Lorentz case (see  \cite[Lemma 4.7]{JPS21}), indeed, it had not been carried out previously (i.e., with a non-spacetime approach) as far as we know. 
 
 The particular case of wildfires is analyzed detailedly in \cite{JPS23}. This includes a comparison with actual wildfire monitoring, which is based in rough  approximations to the wave equation.     
As emphasized therein  \cite[\S 5.2]{JPS23}, focal and cut points   along cone geodesics become points where wildfires are more intense and firefighters might get trapped.

\begin{figure}

\begin{center}

\includegraphics[height=0.23\textheight]{./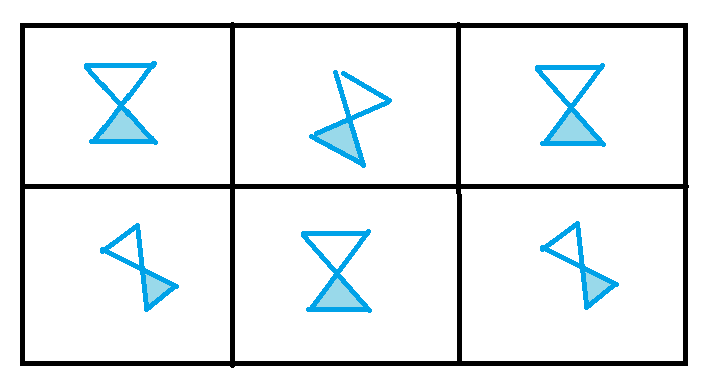}
\caption{A discrete grid for both classical  wavefront propagation and Numeric Relativity}\label{f_Grid}
\end{center}

\end{figure}

\subsubsection{From seisms to Numerics and Particle Physics}\label{ss_discretization}
The modelization of seisms is similar to  wildfires. However,  it is natural  to consider  discontinuous layers for seisms  and, thus, a systematic use of Snell formula. In this classical setting,  the interface between different layers yields a timelike interface $\eta$ where the basic Snell formula \eqref{e_Snellbasic} is enough. Indeed, typically, the interface will be ``constant in time'' 
so that its spacetime evolution 
is trivial\footnote{Certainly, the effect of the seism itself could modify the interface, however, this would be negligible for the computation of the sismic wavefront. Moreover, in any case, it is not expected that the velocity of the movement of the interface could be compared    
with the velocity of seism propagation so that \eqref{e_Snellbasic} would remain.}.

Anyway, the use of Snell formula can go far beyond the scope of seisms, and require the use of other cases explained below \eqref{e_Snellbasic}. Specifically:

\ben\item Computer aided modelizations of classical systems  must use  discretizations. In a natural way, this modeling  yield a grid where each cell would contains a cone (Fig. \ref{f_Grid}). Then, the propagation of the wave between each two grids will be controlled by the basic Snell formula \eqref{e_Snellbasic}.

\item  Such a discretization migth be useful for Numerical Relativity. The usual way to proceed in this field is considering at each cell the Lorentzian metric at one of its points. Notice, however, that such a Lorentzian metric is determined by the  cone structure $\C$  up to a conformal factor. Thus, the separate study of $\C$ as above and this factor might be more efficient. 

For this purpose,  the interface would be generically timelike or spacelike. 
The first case is essentially governed by  Snell formula \eqref{e_Snellbasic}. 
For spacelike interfaces, on the one hand, the possibility of double refraction (as in Fig. \ref{f_2refracted}) appears naturally. Then,  the possibility to distinguish the Snell geodesics by using orientations would be useful. On the other hand,  the case of no refraction (Fig. \ref{f_NOrefracted}) would mean either that the grid approximation is overly rough or  
that a singularity appears.

\item \label{item_effective_discrete_cone} Cone structures might be useful also at the fundamental level of particle physics\footnote{These ideas were discussed in the meeting ``Causal fermionic systems 2025'', Regensburg October 6-10th.}. Indeed, whatever the fundamental theory one chooses, 
a semiclassical propagation of interactions should emerge for matter and energy. This propagation would yield   cones  (resembling infinitesimal  skies for the space $\Ng$ of cone geodesics),   possibly anisotropic as considered here. Depending of the theory, these cones would match in  a smooth cone structure or a quantum discretization \cite{BOW,Cuzi,Loll,Surya}. In both cases, they would be modelizable with the introduced tools.        
\een

\section{Finslerian Relativity and Einstein equations}\label{s6}
The topic of Finslerian modifications of Relativity would merit a survey far more extensive than the present one.
 Here, following some physical foundations in \S \ref{s6_1}, presented in a style accessible to mathematically oriented readers, we will focus on Finslerian versions of the vacuum Einstein equations.
    Specifically, we will examine their variational formulations within both the Hilbert (\S \ref{s6_20}) and Palatini 
    (\S \ref{s6_4}) approaches ---highlighting their inequivalence when Lan$_i \neq 0$, in contrast to the Lorentzian case--- and discuss some vacuum solutions (\S \ref{s6_3}).

\subsection{Physical foundations}\label{s6_1}

\subsubsection{Very Special and General Relativity}
In 2006, Cohen and Glashow \cite{CG} proposed \emph{Very Special Relativity} (VSR), suggesting that much of the observed physics of Special Relativity follows not from the full Lorentz 6-dim. group  SO(1,3), but from a smaller 4-dim. subgroup SIM(2), 
the group of Lorentz transformations preserving a particular lightlike direction Span \( \{n^\mu\} \). This group still preserves the constancy of the speed of light and remains consistent with quantum field theory. A smaller 3-dim. subgroup  HOM(2)  preserving exacty \( n^\mu \)  would also satisfy these requirements.
Such groups  break rotational symmetry partially, but preserve enough structure for most relativistic kinematics to hold. Anyway, they imply that, physically,  Lorentz invariance is only approximate or emergent.

In the 1970s, Bogoslovsky \cite{Bogos} had already proposed a Finslerian generalization of Minkowski spacetime, which fulfilled above requisites. Namely,
\begin{equation}\label{e_Bogos}
   \left( \eta_{\mu\nu} dx^\mu dx^\nu \right)^{\frac{1-b}{2}} \left( n_\rho dx^\rho \right)^b,
\end{equation}
where \( \eta_{\mu\nu} \) is the Minkowski metric,
   \( n^\mu \) is a fixed lightlike direction (\( \eta_{\mu\nu} n^\mu n^\nu = 0 \)) and \( b \) is a small, dimensionless parameter measuring deviation from Lorentz invariance\footnote{\label{f_smooth} Bogoslovsky metric is an {\em improper} Lorentz norm, according to our definitions, as it is not smooth at the distinghished lightlike direction. 
   	Anyway,  this means only that other specific geometric  tools might be required for their treatment. Notice also that if a power of $L$ is smooth outside the zero section (as in \cite{PW}),  this is typically  enough for the regularity of both, the  cone and the indicatrix.}.    
   In 2007, VSR as well as its natural generalization to possibly curved spaces (called subsequently {\em General Very Special Relativity}), were clearly  identified by Gibbons, Gomis and Pope as Finsler Geometry  \cite{GGP};  shortly after,   Finslerian Cosmology was considered by Kouretsis et al. \cite{KSS0}.  Fuster, Pfeifer and Pabst \cite{FPP} studied Berwald spacetimes as a non-flat generalization of the line element used in VSR, and promoted the tidier name  {\em Very General Relativity}.  Pfeifer and Wohlfart \cite{PW} carried out a systematic construction of gravitational dynamics, including field equations (see \S \ref{ssPW}), matter coupling, and observer definitions. Hohmann \cite{Hoh} explored relations between Finsler spacetimes and Cartan geometry.
   
In any case, to drop the beloved Lorentz symmetry is a drastic  proposal, and different arguments have appeared from both  the {\em effective} and the {\em fundamental viewpoints}.   

\subsubsection{The effective viewpoint} As pointed out above (\S \ref{ss_discretization}, item \eqref{item_effective_discrete_cone}), aniso\-tropic cones may emerge as an effective model for propagation coming from  any fundamental theory. In this vein, 
Girelli, Liberati and Sindoni \cite{GLS}  noticed  in 2007 that Planck-scale modifications to the usual energy-momentum relation (dispersion relations) in Quantum Gravity, can be understood by using Finsler geometry. These modifications are a common feature of many Quantum Gravity (QG) phenomenology approaches, and some models proposed that this phenomenology could be associated with
an energy dependent geometry called ``rainbow metric''. The authors showed that rainbow metrics are naturally described by Finsler geometry, then providing a 
mathematical framework for the possible geometric structures  arising in
the semiclassical regime of QG. 
Moreover,  Edwards and  Kosteleck\'y \cite{EK} connected effective field theories with Lorentz violation  to Finsler geometry and  showed that  classical point-particle Lagrangians from Lorentz violating  operators correspond to Finsler structures. Major cosmological issues as dark matter have been also modeled effectively by using Finsler elements, but the requirements for such a modelization may be disputed and go beyond our scope, see for example \cite{ChanLi, Vacaru}.

\subsubsection{The fundamental viewpoint}  
The foundations  of Finsler spacetimes from the observers viewpoint have been widely discussed in \cite{BJS_Universe}, including classical   axiomatic  approaches by von Ignatowsky (1910) \cite{Ignato} and  Ehlers, Pirani and Schild (1972) \cite{EPS}.  Bububianu and Vacaru \cite{Bubu} introduced a different axiomatic approach on gravity theories with modified dispersion relations,  and  other viewpoints can be seen in Kouretsis et al. \cite{KSS} or Vacaru \cite{Vacaru0}.  Next, a very brief summary of the classical ones is presented.  

 \smallskip \smallskip

\noindent Following  \cite[Sect. 2-4]{BJS_Universe} (which builds on \cite{BLS, Ignato}),  the  theories for space and time  
such as  Classical Mechanics and Special Relativity, rely on the existence of inertial frames of references (IFR). Such a frame $R$  
assigns four coordinates $(t,x^i)$ to the whole set of events. The first one $t$ is named {\em temporal}, the other three $x^i$ {\em spatial}, and they obey the 
following
two postulates: 

(1) linearity: the transformation of coordinates 
between two IFR's $R, \bar R$ is an affine isomorphism $\R^4\rightarrow \R^4$, 

(2) time interchangeability and space 
interchangeability: if $(\bar t,\bar x^i)$ are the coordinates of $\bar R$ then:
\begin{align}\label{e_post2}
\partial\bar t/\partial t&=   \partial t/\partial\bar t,
&&
&\partial \bar x^i/\partial x^j&=\partial x^j/\partial\bar x^i, \qquad \qquad \forall i,j=1,2,3,
\end{align} 
or, in a more physical language:  the temporal coordinate $\bar t$  (resp. the spatial coordinate $\bar x^i$) of $\bar R$  measured by using the physical clock (resp. the rod for the direction $x^j$) of $R$,  goes by as  the temporal coordinate $t$  (resp. the spatial coordinate $x^j$) of $R$  measured by using the physical clock (resp. the rod for the direction $\bar x^i$) of $\bar R$.

Then, a computation shows the existence of one (generically unique) group where all the changes of coordinates among IFR's fall. This group lies in one of the following four cases: 

(a) the Galilei group of classical Mechanics, 

(b) the Lorentz group with lightspeed $c>0$ (where $c$ can be interpreted as a unit normalization of the quotient between any spatial coordinate and the temporal one),  conjugate to the orthochronous Lorentz group O(1,3), 

(c) the orthogonal group up a normalization $c>0$  (where $c$ can be interpreted as above),  conjugate to the orthogonal one O(4), and 

(d) a group dual to Galilei, some times called {\em Carroll group}. 

\smallskip

\noindent Each one of the  postulates (1) and (2) above assumes the possibility of a linearization for measurements of  spacetime. Considering the case of the group (b), to drop the first postulate means to go from Special to  General Relativity  (under the assumption  that the now pointwise constant $c$ can be regarded as a true constant on the manifold of all the events). Then, dropping also the second postulate, one arrives at Lorentz-Finsler Geometry.

Summing up, Lorentz-Finsler metrics emerge from General Relativity  after a relaxation of the assumptions on  symmetries similar to the emergence of General Relativity from the Special one. In \cite{JSV_Universe}, it is carried out a detailed development of the meaning of measurements (for both, kinematics and stress-energy tensor) by comparing  the Lorentz-Finsler case with the standard relativistic one.  

\smallskip
\smallskip
\noindent
Ehlers, Pirani, and Schild's approach (EPS) \cite{EPS} offers a constructive axiomatic framework for General Relativity based on observable structures rather than assuming a pre-existing spacetime metric. Their goal was to derive the geometric structure of spacetime from physical measurements starting with two fundamental observational structures:

(A) Light rays, which will define a cone structure. 

(B) Free-falling particles, which will define a projective structure. 

\smallskip

\noindent By combining these two structures under certain compatibility conditions, EPS argues that the Lorentz metric of spacetime can be reconstructed ---light rays and free-falling particles being modelled as lightlike and timelike pregeodesics---  then providing the seeked foundation for General Relativity. 

However, one can wonder why general Lorentz-Finsler metrics are excluded for EPS. Tavakol and Van den Berg \cite{Tavakol} pointed out  that a Lorentz norm (or, whith more generality a Berwald space) should  satisfy  EPS axioms, as their geodesics underlie an affine space. L\"ammerzahl and Perlick \cite{LP}  argued against the unjustified requirement of  smoothness for the radar coordinates  introduced around each event $e$ in the EPS first axiom\footnote{Specifically,  the smoothness  at $e$ of a function  $g(p)=-t(e_1)t(e_2)$, meaning the product of local arrival parameters  of two light rays (then satisfying $g(e)=0, g_{,a}(e)=0$).}. 

The careful analysis in \cite[\S 5]{BJS_Universe} identifies the two (unjustified) physical hypotheses on smoothness which turns out the exclusion of general  Lorentz-Finsler metrics, essentially: 

(i) The requirement of smoothness pointed out by L\"ammerzahl and Perlick would imply to ask for  the smoothness at 0 of $F^2$, where $F$ is a  Finsler metric, specifically, the one in a cone triple $(\Omega, T, F)$ associated with the cone $\C$ provided by the light rays see \cite[Example 5.1]{BJS_Universe}. By Lemma \ref{l_norms}, this would make $F^2$ Riemannian  and, then, $\C$ will be Lorentzian. Moreover, even in Lorentz-Minkowski spacetime there are situations where such a smoothness fail, see \cite[footnote 24]{BJS_Universe}.

(ii) In order to deduce the existence of the projective structure, EPS considers a general {\em law of inertia}  
for freely falling observers. They write it in coordinates  as
\begin{equation}\label{e_law_of_Finsler_inertia}
	\ddot{x}^a + \Pi^a_{bc} \dot{x}^b \dot{x}^c=\lambda \dot x^a
\end{equation} (see \cite[formula (7)]{EPS}). Here, $\lambda$ depends on the parameterization $x^a(u)$  of the curve and $\Pi^a_{bc}$ 's depend on $x^a$ and  are called the {\em projective coefficients} (indeed,  they  determine a projective structure $\mathcal{P}$ compatible with some affine connection). However,  $\Pi^a_{bc}$ 's are not permitted to depend on the  direction of the  velocities $\dot x^j$, and this not justified, see \cite[\S 5.2.2]{BJS_Universe}. Indeed, if permitted   a non-linear connection would be obtained (its exponential being as regular as explained in~\eqref{e_Bryant1}).

 \smallskip
\smallskip
\noindent
Summing up, from the purely geometrical viewpoint, Lorentz-Finsler metrics are physically permitted (whenever additional conditions on isotropy are not explictly imposed) as a consequence of the possible infinitesimal non-linearity of spacetime. 

\subsection{Finsler Einstein-Hilbert variational approach}\label{s6_20}

\subsubsection{First approaches} \label{s6_21} Given any semi-Finsler metric $L$ with fundamental tensor $g$ and any vector field $V$ lying in its domain, one can define the {\em osculating} semi-Riemannian metric $
	g_V(x):=g(x,V(x))$.
  Asanov \cite{Asanov} carried out a variational approach for such  metrics leading to  an equation formally similar to Einstein's vacuum one. However, the subtleties of the osculating metrics (see for example the interpretation of Chern anisotropic connection in \S \ref{ss_Anisoropicconnections}) make unclear its implications.  
  
  Miron  \cite{Miron} considered Finsler metrics from a general Lagrange viewpoint. Starting at the fundamental tensor $g$, a Sasaki-type metric on $A\subset TM\setminus\mathbf{0}$, which would admit a classical Einstein equation, is constructed. 
No further physical interpretation was given then, anyway, the
Lagrangian theory and its physical applicability is analyzed further by Miron et al. \cite{MRA}.

Rutz \cite{Rutz} considered the  analogy with General Relativity and Newtonian Gravity, where the infinitesimal geodesic deviation must vanish in vacuum. As the Ricci scalar in 
\eqref{eq:curvature and torsion} depends only on the nonlinear connection,  she proposed
 as  Finslerian vacuum  equation 
\begin{equation}\label{e_Rutz}
	\mathrm{Ric}=0. 
\end{equation}
This choice, however,  did not take into account the variational viewpoint and, in fact, this equation cannot be obtained as the Euler Lagrange one for an action functional. Anyway, the systematic procedure of variational completion in \cite{VoicuKupka} is applicable. As proven by   Hohmann,  Pfeifer and  Voicu \cite{HPV}, the variational completion of Rutz's  equation will coincide with the variational equations  below obtained by Pfeifer and Wohlfart in \cite{PW}, then providing an extra support to  the latter.
	
\subsubsection{Pfeifer--Wohlfarth (PW) equation} \label{ssPW}
Pfeifer and Wohlfarth~\cite{PW} proposed an action functional for the gravitational sector of Finsler geometry by integrating the Ricci scalar on the indicatrix or, equivalently \cite[\S 3]{{JSV22}}, 
\begin{equation}
    \mathscr{S}[L] := \int_{D\subset \mathbb{P}^+A} g^{\alpha\beta}\,
    \mathrm{Ric} 
    _{\cdot \alpha \cdot \beta}
    \, d\Sigma^+ ,
\end{equation}
 where
 $d\Sigma^+$ denotes the natural volume form on the positively projectivized tangent bundle $\mathbb{P}^+A$. This action is closely related to the Einstein--Hilbert functional in Riemannian geometry, and  represents the simplest choice based on the canonical curvature derived from the connection coefficients $\mathring{G}^\mu$.   The Cartan tensor (a relevant ingredient of the theory) is not involved here, but it might be incorporated at a further stage of the theory, eventually  including couplings with  particles. 

Varying the action $\mathscr{S}$ with respect to the Finsler function $L$ and adding a matter contribution on the right-hand side leads to the field {\em PW equation}
\begin{equation}\label{e_PW}
	\mathrm{Ric} - \frac{1}{6} g^{\alpha\beta} \mathrm{Ric}_{\cdot \alpha \cdot \beta} L + \{\text{terms involving } \mathrm{Lan}_{\mu}\} = \mathfrak{T},
\end{equation}
(recall the dot notation for vertical derivatives in \eqref{e_Berwald_et_al}) where $\mathfrak{T}$ denotes the energy--momentum distribution function on $A$, and $\mathrm{Lan}_{\mu}$ is the mean Landsberg tensor (\S \ref{ss_Anisoropicconnections}).  PW equation makes sense in any signature. For semi-Riemannian metrics in vacuum ($\mathfrak{T}=0$), it is equivalent to impose Ricci flatness, but not to Rutz equation \eqref{e_Rutz} in general.

A conservation law associated with diffeomorphism invariance was derived by  Hohmann  et al. \cite{HPV22}:
\begin{equation}
    \int_{\mathbb{P}^+(A_x)} y^\alpha \mathfrak{T}_{\mid \alpha} y_\mu \, d\Sigma_x^+ = 0, \qquad \forall x \in M,
\end{equation}
which is consistent with known physical conservation laws, such as those arising in the Liouville equation for kinetic gases. What is more, the authors  conjectured that the refinement of $\mathfrak{T}$ from a perfect fluid description to a kinetic gas, combined with the effects of the Landsberg tensor, could give rise to an accelerated cosmic expansion without invoking dark energy \cite{HPVkineticgas, PVFP}.

\subsection{Exact Vacuum Solutions to PW's} \label{s6_3} To date, only a limited number of explicit solutions to the Pfeifer--Wohlfarth equation are known. The examples discussed next illustrate the rich geometric structure of Finslerian gravity and its potential relevance for cosmology and gravitational physics.
A comprehensive review of exact solutions to the Pfeifer--Wohlfarth equation can be found in the PhD thesis by S. Heefer~\cite{Heefer}, which is our main reference for the brief summary here. 

The solutions are typically analytic, defined on the whole $TM$ but possibly containing singularities in some directions,  which may lie in the  causal cone.	 

\subsubsection{$(\alpha,\beta)$-metrics and pp-waves}
The so-called $(\alpha,\beta)$-metrics  are a special class of Finsler metrics which can be expressed as:
\begin{equation}
	F = \alpha \, \phi\!\left(\frac{\beta}{\alpha}\right),
\end{equation}
where $\alpha$ is (the 1-homogeneous root of) a Riemannian metric, $\beta$ is   a 1-form 
and $\phi$ is smooth function satisfying conditions ensuring the Finsler character;
this definition extends directly to  Lorentz case. As a first result:

\begin{thm} \cite[Th. 8.8.1]{Heefer}. Let $F$ be any $(\alpha,\beta)$-metric such that
	$\alpha$  solves the classical Einstein equations in vacuum and $\beta$ is parallel respect to $\alpha$.
	Then, the nonlinear connection of $F$ coincides with the Levi-Civita connection
	of $\alpha$ and  $F$ becomes a (Ricci-flat) vacuum solution to PW equation.
	\end{thm}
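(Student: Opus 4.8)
The plan is to collapse the Pfeifer--Wohlfarth equation \eqref{e_PW} to the identity $0=0$ by showing that, on the open set where $F$ is a genuine (Lorentz--)Finsler metric, each of its three terms vanishes identically. The two hypotheses play separate roles: the $\alpha$-parallelism of $\beta$ will force the geodesic spray of $F$ to coincide with the Riemannian spray of $\alpha$, so that $F$ is Berwald and hence weakly Landsberg; and the vacuum Einstein condition on $\alpha$ will then make the Ricci scalar --- which by \eqref{eq:curvature and torsion} depends only on the nonlinear connection --- vanish.

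First I would write down the spray coefficients $G^a$ of the $(\alpha,\beta)$-metric $F=\alpha\,\phi(\beta/\alpha)$. By the standard formula for $(\alpha,\beta)$-metrics (see e.g.\ \cite{BCS} or \cite{Heefer}), $G^a$ equals the Levi--Civita spray coefficients $G^a_\alpha$ of $\alpha$ plus a sum of correction terms, each of which is built algebraically --- with coefficients that are functions of $s=\beta/\alpha$ --- from the covariant derivative $b_{i|j}$ of $\beta$ with respect to the Levi--Civita connection of $\alpha$, i.e.\ from its symmetric part $r_{ij}$, its antisymmetric part $s_{ij}$ and the contractions $r_{00}$, $s^a{}_0$. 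When $\beta$ is $\alpha$-parallel we have $b_{i|j}\equiv 0$, so every correction term vanishes and $G^a\equiv G^a_\alpha$. Hence the nonlinear connection of $F$, namely $N^a_i=\partial G^a/\partial y^i=\partial G^a_\alpha/\partial y^i$, is exactly that of the Levi--Civita connection of $\alpha$, which proves the first assertion. Being affine, this nonlinear connection makes $F$ a Berwald metric, hence Landsberg, hence weakly Landsberg, so $\mathrm{Lan}_\mu\equiv 0$ and the bracketed terms in \eqref{e_PW} drop out.

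Next I would use that the Ricci scalar $\mathrm{Ric}$ is determined by the nonlinear connection alone. As that connection is the Levi--Civita connection of $\alpha$, one gets $\mathrm{Ric}(x,y)=\mathrm{Ric}^\alpha_{ij}(x)\,y^iy^j$, the Ricci curvature of $\alpha$ evaluated on $y$. The hypothesis that $\alpha$ is a vacuum Einstein metric means $\mathrm{Ric}^\alpha_{ij}\equiv 0$, hence $\mathrm{Ric}\equiv 0$ on the slit tangent bundle, and consequently $\mathrm{Ric}_{\cdot\alpha}\equiv 0$ and $\mathrm{Ric}_{\cdot\alpha\cdot\beta}\equiv 0$ as vertical derivatives of the zero function. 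Substituting into \eqref{e_PW} with $\mathfrak{T}=0$ yields $0-0+0=0$, so $F$ is a Ricci-flat (hence vacuum) solution of the PW equation.

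The main obstacle I foresee is the first step: reproducing, or citing with the correct conventions, the spray formula for a general $(\alpha,\beta)$-metric and verifying that \emph{every} correction term --- including the coefficient functions of $s$ and their possible poles --- is genuinely annihilated once $b_{i|j}=0$, rather than merely simplified. One should also respect the standing regularity caveat for $(\alpha,\beta)$-metrics: $F$ may fail to be smooth or strongly convex along some directions, so the chain ``$\beta$ parallel $\Rightarrow$ Berwald $\Rightarrow$ weakly Landsberg $\Rightarrow$ PW'' and the conclusion are to be read on the open region where $F$ is a bona fide (Lorentz-)Finsler metric, consistently with the statement that such solutions are defined on all of $TM$ but may carry singular directions.
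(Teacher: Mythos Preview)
The paper does not supply its own proof of this theorem: it is quoted from Heefer's thesis \cite[Th.~8.8.1]{Heefer} as part of a survey of known exact solutions, with no argument given in the text. So there is no in-paper proof to compare against.

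That said, your argument is correct and is the standard one (and essentially the one in \cite{Heefer}): the $(\alpha,\beta)$-spray formula shows that every correction to $G^a_\alpha$ is built from $b_{i|j}$, which vanishes when $\beta$ is $\alpha$-parallel; hence the nonlinear connection is affine and equal to the Levi--Civita connection of $\alpha$, so $F$ is Berwald and $\mathrm{Lan}_\mu=0$; and since $\mathrm{Ric}$ depends only on the nonlinear connection, it reduces to $\mathrm{Ric}^\alpha_{ij}y^iy^j$, which vanishes by the vacuum hypothesis on $\alpha$, killing all three terms of \eqref{e_PW}. Your caveat about the regularity domain is also well placed and matches the paper's own remark that such solutions may carry singular directions.
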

This result demonstrates that there exist infinitely many Finslerian generalizations of the Minkowski vacuum.

In dimension 4, the hypotheses lead naturally either to Lorentz norms or to the following {\em Finsler pp-waves}.

\begin{thm} \cite[Th. 8.8.2]{Heefer}
Let $(u,v,x^1,x^2)$ be local coordinates. Any $(\alpha,\beta)$-metric defined by
\begin{equation}
    \alpha = \sqrt{ | 2\dot{u}(\dot{v} + H(u,x^1,x^2)\dot{u}) - (\dot{x}^1)^2 - (\dot{x}^2)^2 | }, \qquad \beta = \dot{u},
\end{equation}
where $H$ satisfies the harmonic condition $(\partial_{x^1}^2 + \partial_{x^2}^2)H = 0$, is a solution of the PW vacuum equation.
\end{thm}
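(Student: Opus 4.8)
The plan is to obtain this as a direct corollary of Theorem~8.8.1: it suffices to verify that the pseudo-Riemannian metric underlying $\alpha$ solves the vacuum Einstein equations and that the $1$-form $\beta$ is parallel with respect to its Levi-Civita connection. Up to an overall sign (which accounts for the absolute value in the definition of $\alpha$ and affects neither the connection nor the Ricci tensor), that metric is $2\,du\,dv + 2H(u,x^1,x^2)\,du^2 - (dx^1)^2 - (dx^2)^2$, the classical Brinkmann form of a pp-wave, while $\beta=\dot u$ is the $1$-form $du$.

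First I would carry out the standard pp-wave Christoffel computation. With $g_{uv}$ constant and $g_{uu}=2H$ the only coefficient depending nontrivially on the coordinate $u$, one checks that every symbol $\Gamma^{u}{}_{ab}$ vanishes (the surviving symbols, of type $\Gamma^{v}{}_{ab}$ and $\Gamma^{i}{}_{ab}$, are built from first derivatives of $H$). This gives at once $\nabla_{a}(du)_{b} = -\,\Gamma^{c}{}_{ab}\,(du)_{c} = -\,\Gamma^{u}{}_{ab} = 0$, so $\beta$ is parallel — equivalently, $\partial_v$ is a parallel null vector field, the hallmark of pp-waves. Pushing the computation to the curvature, the only nonzero Ricci component is $\mathrm{Ric}_{uu}$, proportional to $(\partial_{x^1}^2+\partial_{x^2}^2)H$; hence the harmonic condition imposed in the statement is precisely the vacuum Einstein equation for $\alpha$.

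Both hypotheses of Theorem~8.8.1 being met, that theorem applies verbatim: the nonlinear connection of $F=\alpha\,\phi(\beta/\alpha)$ coincides with the Levi-Civita connection of $\alpha$, so $F$ is a Berwald metric whose anisotropic Ricci scalar equals the (vanishing) one of $\alpha$, and whose mean Landsberg tensor $\mathrm{Lan}_{\mu}$ vanishes because Berwald metrics are Landsberg (recall \S\ref{ss_Anisoropicconnections}). Every term on the left-hand side of the PW equation \eqref{e_PW} therefore vanishes identically, and $F$ is a vacuum solution.

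The only point requiring genuine care — rather than a real obstacle — is the admissibility of $\phi$: I would restrict to those smooth $\phi$ for which $F$ is an honest Lorentz-Finsler metric, i.e. for which the fundamental tensor is non-degenerate of Lorentzian signature away from the singular directions. As in the general $(\alpha,\beta)$-theory this holds on a suitable cone domain, and singular directions — which here may well lie inside the causal cone, as already acknowledged in the surrounding text — are expected rather than excluded. The rest is routine bookkeeping of the pp-wave connection coefficients.
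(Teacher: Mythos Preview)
Your approach is correct and is precisely the one implicit in the paper's presentation: the sentence preceding the theorem (``In dimension~4, the hypotheses lead naturally either to Lorentz norms or to the following \emph{Finsler pp-waves}'') signals that Theorem~8.8.2 is meant as an instance of Theorem~8.8.1, and your verification of the two hypotheses --- $\Gamma^{u}{}_{ab}=0$ so that $du$ is parallel, and $\mathrm{Ric}_{uu}=(\partial_{x^1}^2+\partial_{x^2}^2)H$ as the only nonvanishing Ricci component --- is the standard pp-wave computation carried out correctly.
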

 Finsler pp-wave represents plane-fronted gravitational waves with parallel rays, and they are a relevant class of Finsler spacetimes \cite{Fuster_pp_wave}.
 
 Another type of  $(\alpha,\beta)$- metrics are the Kropina ones (which has appeared sometimes above) and correspond to the case  $m=1$ of the $m$-Kropina metrics, defined by $\phi(s)=1/s^m$, that is,
 \begin{equation}\label{e_kropina}
	F = \frac{\alpha^{1+m}}{\beta^{m}}, \qquad m\in \R.
\end{equation}	
These metrics also include  Bogoslovsky's in \eqref{e_Bogos}, and we refer to 
 \cite{PHF, Heefer} for a systematic study.

\subsubsection{Unicorns and Finslerian Cosmologies}  \label{ss_unicorn}

Building upon Elgendi's unicorns in Finsler metrics with singularities, Heefer et al. \cite{HPRF} 
arrive at  semi-Finsler unicorns with singularities and, remarkably, some of them  are solutions of PW equations. More precisely (see \cite{Heefer}, formulas (10.8), (10.30) ---and below--- and Prop. 10.2.2):

\begin{thm} 
Let $\Phi(\hat{y}) = c_{ij} y^i y^j$, with $\hat{y} = (y^1, y^2, y^3)\in \R^3$, be a non-degenerate quadratic form, and  
$f>0$ a smooth function. A semi-Finsler metric $F$  (with singularites) is  Landsberg non-Berwald, if it is written  locally 
\begin{equation}
    F(x,y) = f(x^0) 
    \left( |y^0| + \mathrm{sgn}(\Phi)\sqrt{|\Phi|} \right)
    \exp\!\left( \frac{|y^0|}{|y^0| + \mathrm{sgn}(\Phi)\sqrt{|\Phi|}} \right),
\end{equation}
where the last expresion is defined as 0 when so is the denominator of $\exp$.

Moreover, $F$ is then a solution of the PW vacuum equation if and only if there are constants $a_0,a_1$ with $a_0>0$ such that $f(x^0)= a_0 e^{a_1 x^0}$, for all $x^0$.  
\end{thm}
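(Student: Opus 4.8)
\emph{Strategy and preliminary reduction.}
The decisive structural feature is that the metric factorises: since the $c_{ij}$ are constants, $F(x,y)=f(x^0)\,\psi(y)$, where
$$\psi(y):=\Big(|y^0|+\mathrm{sgn}(\Phi)\sqrt{|\Phi|}\Big)\exp\!\Big(\tfrac{|y^0|}{|y^0|+\mathrm{sgn}(\Phi)\sqrt{|\Phi|}}\Big)$$
depends on the fibre variable $y$ only, so that $\psi^2$ is a (singular) \emph{Minkowski} Lagrangian with $y$-independent fundamental tensor $\bar g_{\mu\nu}(y)$, while all base dependence sits in the conformal factor: $L=F^2=e^{2\varphi(x^0)}\psi^2$ with $\varphi:=\log f$. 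The plan is to reduce every Finsler invariant to an explicit expression in $\psi$ and $\varphi$. Since a Minkowski Lagrangian has vanishing geodesic spray, the spray of the conformally rescaled $L$ is
$$G^\mu=\varphi'(x^0)\,\Theta^\mu(y),\qquad \Theta^\mu(y)=y^0y^\mu-\tfrac12\,\bar g^{\mu 0}(y)\,\psi(y)^2 ,$$
and then the Berwald, Landsberg, Chern-curvature, Ricci-scalar and mean-Landsberg objects all follow from $G^\mu$ by vertical differentiation, horizontal differentiation only ever producing $\partial_{x^0}$ acting on $\varphi'$.

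\emph{Landsberg, but not Berwald.}
From $G^\mu=\varphi'\Theta^\mu$ the Berwald tensor is $B^\mu{}_{\nu\rho\sigma}=\varphi'\,\partial^3\Theta^\mu/\partial y^\nu\partial y^\rho\partial y^\sigma$. The quadratic piece $y^0y^\mu$ drops out, while $\bar g^{\mu0}(y)\psi(y)^2$ is not a polynomial in $y$ (its fundamental tensor is genuinely $y$-dependent, $\psi^2$ not being Riemannian); hence $B\not\equiv 0$ whenever $f$ is non-constant, which gives non-Berwaldness. For the Landsberg tensor one has $\mathrm{Lan}_{\mu\nu\rho}=\tfrac12 y_\alpha B^\alpha{}_{\mu\nu\rho}=\tfrac12 e^{2\varphi}\varphi'\,\bar y_\alpha\,\partial^3_{\mu\nu\rho}\Theta^\alpha$, with $\bar y_\alpha\Theta^\alpha=\tfrac12 y^0\psi^2$, and I would verify that this contraction vanishes identically: this is exactly where the ``unicorn'' form of $\psi$ is used, and it is essentially the verification carried out in \cite{HPRF,Heefer} for this same $\psi$ --- the conformal factor $f(x^0)$ contributes to the spray only the term $\varphi'\Theta^\mu$, which is engineered not to break the Landsberg identity. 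Thus $F$ is Landsberg, hence in particular weakly Landsberg: $\mathrm{Lan}_\mu\equiv 0$.

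\emph{The PW equation reduces to $\varphi''=0$.}
Because $\mathrm{Lan}_{\mu\nu\rho}\equiv 0$, all the ``terms involving $\mathrm{Lan}_\mu$'' in \eqref{e_PW} vanish, and in vacuum the PW equation becomes the scalar identity on $A$
$$\mathrm{Ric}=\tfrac16\,g^{\alpha\beta}\,\mathrm{Ric}_{\cdot\alpha\cdot\beta}\,L .$$
Computing $\mathrm{Ric}$ from $G^\mu=\varphi'\Theta^\mu$ via the standard expression of the Ricci scalar in terms of the spray coefficients, and noting that the only base derivative occurring is $\partial_{x^0}$, one gets $\mathrm{Ric}(x,y)=\varphi''(x^0)\,A(y)+\big(\varphi'(x^0)\big)^2\,B(y)$ for explicit $2$-homogeneous $A,B$ built out of $\Theta$ (hence out of $\bar g$ and $\psi^2$) --- with no residual $f^2$-factor, since $G^\mu\propto\varphi'$ rather than $f$. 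Using $g_{\alpha\beta}=e^{2\varphi}\bar g_{\alpha\beta}$ and $L=e^{2\varphi}\psi^2$, the right-hand side equals $\tfrac16\big(\varphi''\,\tilde A(y)+(\varphi')^2\,\tilde B(y)\big)\psi^2$ with $0$-homogeneous $\tilde A=\bar g^{\alpha\beta}\partial^2_{\alpha\beta}A$, $\tilde B=\bar g^{\alpha\beta}\partial^2_{\alpha\beta}B$. Hence PW is equivalent to $\varphi''\,P(y)+(\varphi')^2\,Q(y)=0$ holding for all $x^0$ and all $y$, where $P:=A-\tfrac16\tilde A\psi^2$ and $Q:=B-\tfrac16\tilde B\psi^2$. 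The computation must reveal $Q\equiv 0$ and $P\not\equiv 0$ (this is the only pattern compatible with the stated conclusion): then $\varphi''\,P(y)=0$ forces $\varphi''\equiv 0$, i.e. $\log f=a_1x^0+\mathrm{const}$, i.e. $f(x^0)=a_0e^{a_1x^0}$ with $a_0=e^{\mathrm{const}}>0$; conversely, for such $f$ one has $\varphi''=0$ and the surviving equation is exactly $Q\equiv 0$, so $F$ solves PW. All identities are checked on the open dense set where $F$ is smooth, the singular locus $\{y^0=0\}\cup\{\Phi=0\}$ being handled by continuity, as in \cite{Heefer}.

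\emph{Main obstacle.}
The hard part will be entirely computational: producing usable closed forms for $\Theta^\mu$, $B^\mu{}_{\nu\rho\sigma}$, $\mathrm{Lan}_{\mu\nu\rho}$ and, above all, $\mathrm{Ric}$ and $g^{\alpha\beta}\mathrm{Ric}_{\cdot\alpha\cdot\beta}$ for a function made of absolute values, a square root and an exponential of a ratio, and then separating the $x^0$-dependence cleanly enough to isolate the two fibre identities $Q\equiv 0$ and $P\not\equiv 0$ --- it is the validity of the former and the failure of the latter that pin $f$ down, and a sign error there would change the answer (for instance from an exponential to a power law). A secondary point is to check that the bracketed $\mathrm{Lan}_\mu$-terms in \eqref{e_PW} are genuinely proportional to the mean Landsberg tensor or its covariant derivatives, so that the Landsberg property really annihilates them, together with the minor observation that the fibre variables $\hat y$ enter only through the single scalar $\Phi$, so the $x^1,x^2,x^3$ directions are flat and translational and drop out of the curvature. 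In practice this is best organised through the $(\alpha,\beta)$-metric formalism and, realistically, assisted by symbolic computation as in \cite{Heefer}.
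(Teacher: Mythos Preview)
The paper is a survey and does not supply its own proof of this theorem; it simply cites formulas (10.8), (10.30) and Proposition~10.2.2 of Heefer's thesis \cite{Heefer} (together with \cite{HPRF}). Your outline is the natural strategy and almost certainly the one carried out there: exploit the product structure $F=f(x^0)\psi(y)$ to write the spray as $G^\mu=\varphi'\Theta^\mu(y)$, read off the Berwald and Landsberg tensors by vertical differentiation, and reduce PW (with the $\mathrm{Lan}_\mu$ terms killed) to an equation of the form $\varphi''P(y)+(\varphi')^2Q(y)=0$, whence $Q\equiv 0$, $P\not\equiv 0$ forces $\varphi''=0$.

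Two cautionary remarks. First, your argument for $\mathrm{Lan}_{\mu\nu\rho}=0$ is too quick: you compute $\bar y_\alpha\Theta^\alpha=\tfrac12 y^0\psi^2$, but the Landsberg tensor involves $\bar y_\alpha\,\partial^3_{\mu\nu\rho}\Theta^\alpha$, and the contraction with $\bar y_\alpha$ does \emph{not} commute with the triple vertical derivative, since $\bar y_\alpha=\bar g_{\alpha\beta}(y)y^\beta$ itself depends on $y$ through the fundamental tensor. The genuine vanishing is a non-trivial algebraic identity specific to this $\psi$ --- that is precisely what makes it a unicorn --- and while you rightly defer to \cite{HPRF,Heefer} for it, do not present it as following from the simpler contraction. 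Second, the non-Berwald claim requires $f$ non-constant (equivalently $\varphi'\not\equiv 0$); for constant $f$ the spray vanishes and the metric is Minkowski, hence Berwald. The survey's statement is slightly imprecise on this point, and you correctly flagged it. With those caveats, the skeleton is sound; the two fibre identities $Q\equiv 0$ and $P\not\equiv 0$ are, as you say, the computational heart and realistically need symbolic assistance.
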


The light cones for $F$ are equal to those of Lorentz-Minkowski if  the signature of $\Phi$ is either Lorentzian $(+,+,-)$ or negative $(-,-,-)$ . 
However, the signature of the fundamental tensor $g$ is not Lorentzian inside the cone in these cases, \cite[Prop. 10.1.4]{Heefer}.
Indeed, for Lorentzian $\Phi$, one can consider the sets
 $\mathcal{S}(\Phi), \mathcal{T}(\Phi)$ containing, resp., $\Phi$-spacelike and $\Phi$-timelike directions. Then,  $g$ has Lorentz signature on  $\R\times \mathcal{S}(\Phi)$ but neutral $(+,+,-,-)$ on $\R\times \mathcal{T}(\Phi)$. For negative definite $\Phi$ the signature of $g$ is also negative definite.
 
 These properties may  obscure the relativistic interpretation of such  unicorns. However, they present further interesting properties to be explored in both the cosmological and purely geometric setting. Indeed, in the case of negative definite $\Phi$, the unicorns  correspond to a linearly
expanding (or contracting) universe with a cosmological
symmetry which make them reminiscent of FLRW cosmology.  They are not only conformally flat as FLRW's but also flat for the curvature of the non-linear connection,  that is $\RN_{ij}^k=\delta_j N_i^k-\delta_i N_j^k \equiv 0$ (recall \eqref{e_Berwald_et_al}). This is precisely a consequence of the non-Berwaldian character of the metric and, thus, a  potentially measurable Finsler effect, beyond the scope of standard General Relativity.
 It is worth pointing out that  the (unique) unicorn generalization of Friedmann-Lemaitre-Robertson-Walker geometry has been recently obtained in  \cite{FPVPH}.

\subsection{Finsler Einstein-Hilbert-Palatini approach} \label{s6_4}
The huge freedom introduced by the Finslerian viewpoint permits to consider a good number of functionals, depending on the focused  geometric level of the structure (non-linear, anisotropic or Finsler connection), see the discussion in \cite{SV}.  Anyway, the so-called Palatini\footnote{It is kwown that this aproach was introduced by Einstein himself, but we use it following \cite{JSV22}  because it is widely spread (an example of  
Stigler's Law of Eponymy) and distinguishes   from other related formalisms.} formulation is a natural extension of the Hilbert variational approach which fits especially well to the Finsler viewpoint, developed in \cite{JSV22} (our main reference below). 
Indeed, Palatini's promote the nonlinear connection (which determines the freely falling observers) to a geometric field  independent of the metric (which provides numbers from observations), as they come from a priori independent phenomena. This theory   has an independent variational interest in semi-Finsler Geometry and formally recovers Rutz's \eqref{e_Rutz}, a desirable property  obtained by PW only when the mean Landsberg tensor vanishes. What is more,  it diverges significantly of PW  when  $\mathrm{Lan}_\mu \neq 0$, as the geodesics of the connection  will differ then (in particular,  predicting  experimentaly testable differences).

\subsubsection{Palatini Formalism in Semi-Riemannian Geometry}

Recall that the Einstein-Hilbert action in the Palatini formulation is
\begin{equation}
    \mathscr{I}_{\mathrm{EHP}}[g_{\mu\nu}, \Gamma^\mu_{\nu\rho}] 
    = \int g^{\alpha\beta}\,\mathrm{ric}_{\alpha\beta}\, d\mathrm{Vol},
\end{equation}
where the metric $g_{\mu\nu}$ and the affine connection $\Gamma^\mu_{\nu\rho}$ are treated as independent variables,  $\mathrm{ric}_{\alpha\beta}$ is the Ricci tensor of the connection, and the metric provides the contraction and volume. A pair $(g_{\mu\nu}, \Gamma^\mu_{\nu\rho})$ is a critical point of $\mathscr{I}_{\mathrm{EHP}}$ if and only if $g_{\mu\nu}$ satisfies the Einstein field equations and the affine  $\Gamma^\mu_{\nu\rho}$ satisfies, in terms of  Christoffel symbols $(\Gamma^g)_{ij}^k$ 
of the Levi-Cita connection of $g_{\mu\nu}$ 
(see, for example \cite{Bernal_et_al}), 
\begin{equation}\label{e_Palat}
	\Gamma_{ij}^k= (\Gamma^g)_{ij}^k + \mathcal{A}_i \otimes \delta_j^k  
\end{equation}
where $\mathcal{A}$ is a 1-form (with no restriction) and $\delta_j^k$ is Kronecker's delta.
The unique symmetric connection in \eqref{e_Palat} is the Levi-Civita one, then providing a remarkable support to standard General Relativity.

\subsubsection{Palatini Approach to Finslerian Einstein Equations}

Following \cite{JSV22}, the starting point is a generalization of Pfeifer and Wohlfarth's action,
\begin{equation}
    \bar{\mathscr{S}}[L, N^\mu_\nu] := 
    \int_{D\subset \mathbb{P}^+A} g^{\alpha\beta}\, \mathrm{Ric}_{\cdot\alpha\cdot\beta}\, d\Sigma^+,
\end{equation}
where $L$ and $N^\mu_\nu$ are treated as independent variables, and $\mathrm{Ric}$ denotes the Ricci scalar constructed from $N^\mu_\nu$.

Variation of $\bar{\mathscr{S}}$ with respect to $L$ and $N^\mu_\nu$  yields, after removing torsion contributions,  the coupled system:
\begin{align}
    \mathrm{Ric} - \frac{1}{n+2} g^{\alpha\beta}\mathrm{Ric}_{\cdot\alpha\cdot\beta}\, L &= 0,
    \label{eq:M}\\[2mm]
    \mathcal{Z}^\alpha_{\cdot\alpha\cdot\mu} 
    + \left( C_\alpha - 3\frac{y_\alpha}{L} \right)\mathcal{Z}^\alpha_{\cdot\mu}
    &= \mathrm{Lan}_\mu,
    \label{eq:A}
\end{align}
where $
    \mathcal{Z}^\mu := \tfrac{1}{2} N^\mu_\alpha y^\alpha - \mathring{G}^\mu.
$

Equation~\eqref{eq:M} resembles the PW equation \eqref{e_PW} but notably contains no explicit Landsberg terms. The Ricci scalar $\mathrm{Ric}$, however, is computed from $N^\mu_\nu$ rather than from the canonical connection $\mathring{G}^\mu_{\cdot\nu}$, motivating the need to understand $\mathrm{Ric}$ in terms of $\mathcal{Z}$ for direct comparison.

\subsubsection{Some results on the Finslerian Palatini Equations} The following sampler of results in \cite{JSV22} is illustrative  (see more in  \cite[Th. D]{JSV22}). The first one stresses an aforementioned fundamental difference with PW.  

\begin{thm} For solutions to \eqref{eq:M} and \eqref{eq:A}:
	
\begin{enumerate}
    \item If $\mathrm{Lan}_\mu = 0$, then the canonical connection $\mathring{G}^\mu_{\cdot\nu}$ satisfies equation~\eqref{eq:A}.
    \item If $\mathrm{Lan}_\mu \neq 0$, then no solution of~\eqref{eq:A} can share the same geodesic trajectories as $\mathring{G}^\mu_{\cdot\nu}$.
\end{enumerate}
\end{thm}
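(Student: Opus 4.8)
The plan is to use that, by Euler's theorem, $\tfrac12\,\mathring G^{\mu}_{\cdot\alpha}y^{\alpha}=\mathring G^{\mu}$, so that the ``defect'' $\mathcal{Z}^{\mu}=\tfrac12 N^{\mu}_{\alpha}y^{\alpha}-\mathring G^{\mu}$ of the canonical nonlinear connection vanishes identically, and to translate the hypothesis ``same geodesic trajectories as $\mathring G^{\mu}_{\cdot\nu}$'' into an explicit ansatz for $\mathcal{Z}^{\mu}$. Part (1) is then immediate: substituting $N^{\mu}_{\nu}=\mathring G^{\mu}_{\cdot\nu}$ into the definition of $\mathcal{Z}^{\mu}$ and using the two-homogeneity of $\mathring G^{\mu}$ gives $\mathcal{Z}^{\mu}\equiv 0$, hence all its vertical derivatives vanish, the left-hand side of \eqref{eq:A} is identically zero, and \eqref{eq:A} reduces to $\mathrm{Lan}_{\mu}=0$, which holds by assumption.

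For part (2) I would first recall that two positively $2$-homogeneous spray coefficients have the same unparametrized geodesics if and only if they differ by a term $P\,y^{\mu}$ with $P$ positively $1$-homogeneous in $y$; applied to $\tfrac12 N^{\mu}_{\alpha}y^{\alpha}$ and $\mathring G^{\mu}$ this says exactly that $\mathcal{Z}^{\mu}=P\,y^{\mu}$ for such a $P$. Inserting this ansatz into the left-hand side of \eqref{eq:A} and simplifying with only homogeneity identities ($P_{\cdot\alpha}y^{\alpha}=P$ by Euler on $P$, $\mathcal{Z}^{\alpha}_{\cdot\alpha}=(n+1)P$, $C_{\alpha}y^{\alpha}=0$ since the mean Cartan tensor is $0$-homogeneous in $y$, and $y_{\alpha}y^{\alpha}=L$), a short computation gives
\[
\mathcal{Z}^{\alpha}_{\cdot\alpha\cdot\mu}+\Big(C_{\alpha}-3\tfrac{y_{\alpha}}{L}\Big)\mathcal{Z}^{\alpha}_{\cdot\mu}=(n-2)\,P_{\cdot\mu}+P\,C_{\mu}-3\,P\,\tfrac{y_{\mu}}{L},
\]
so that, for such an $N$, equation \eqref{eq:A} becomes $(n-2)P_{\cdot\mu}+P\,C_{\mu}-3P\,y_{\mu}/L=\mathrm{Lan}_{\mu}$.

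The last step is to contract this identity with $y^{\mu}$. On the left, Euler's theorem together with $C_{\mu}y^{\mu}=0$ and $y_{\mu}y^{\mu}=L$ produces $(n-5)P$; on the right, the total symmetry of the Landsberg tensor and the homogeneity identity $B^{\alpha}{}_{\mu\nu\rho}y^{\rho}=0$ force $\mathrm{Lan}_{\mu}y^{\mu}=0$. Hence $(n-5)P=0$, so in the physical dimension $n=4$ (and indeed whenever $n\neq 5$) one gets $P\equiv 0$, i.e. $\mathcal{Z}^{\mu}\equiv 0$; but then the left-hand side of \eqref{eq:A} vanishes and \eqref{eq:A} would force $\mathrm{Lan}_{\mu}=0$, contradicting $\mathrm{Lan}_{\mu}\neq 0$. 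I expect the main obstacle to be twofold: justifying rigorously, in the anisotropic (Finsler) setting, the projective normal form $\mathcal{Z}^{\mu}=P\,y^{\mu}$ with the correct regularity and homogeneity of $P$ on the slit tangent bundle; and handling the borderline value $n=5$, where the scalar contraction with $y^{\mu}$ is vacuous and one must instead analyse directly the first-order linear equation $3P_{\cdot\mu}+P\,C_{\mu}-3P\,y_{\mu}/L=\mathrm{Lan}_{\mu}$ for $P$ (or bring in \eqref{eq:M}) in order to rule out nonzero solutions.
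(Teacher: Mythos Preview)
The paper is a survey and does not prove this theorem; it simply quotes it from \cite{JSV22} (referring there to Theorem~D). So there is no in-paper proof to compare against, and I can only assess your argument on its own merits.

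Part~(1) is entirely correct: Euler's theorem on the $2$-homogeneous $\mathring G^{\mu}$ gives $\mathcal Z^{\mu}\equiv 0$, and the claim follows. For part~(2), your projective ansatz $\mathcal Z^{\mu}=P\,y^{\mu}$ with $P$ positively $1$-homogeneous is the right translation of ``same geodesic trajectories'' (this is the classical characterisation of projectively related sprays, and it carries over verbatim to the Finsler setting since it only uses the ODE for geodesics; your first ``obstacle'' is therefore not a serious one). Your computation of the left-hand side of \eqref{eq:A} is correct, and the contraction with $y^{\mu}$ does yield $(n-5)P=0$. So for every dimension $n\neq 5$ --- in particular the physical case $n=4$ --- your argument is complete and clean.

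The genuine gap is exactly the one you flag: the case $n=5$. There your scalar contraction is vacuous and you are left with the first-order system $3P_{\cdot\mu}+P\,C_{\mu}-3P\,y_{\mu}/L=\mathrm{Lan}_{\mu}$, which you have not excluded. Since the theorem as stated in the survey places no dimensional restriction, your proof is incomplete in that one dimension; to close it you would indeed need a separate argument (e.g.\ exploiting that $\mathrm{Lan}_{\mu}$ is $0$-homogeneous while the left-hand side is $1$-homogeneous to deduce further constraints, or consulting the full Theorem~D in \cite{JSV22}, which the survey describes as ``more complete'' and which presumably handles this).
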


Equation \eqref{eq:M} resembles  Einstein's, which reduced to the vanishing of the Ricci tensor  just taking traces.  Palatini vacuum reduces to the vanishing of the  Ricci scalar (in the spirit of Rutz's \eqref{e_Rutz}) but the proof is not trivial because now a differential equation must be solved \cite[Lemma 5.13]{JSV22}. Surprisingly, 
the result is obtained only for Lorentzian signature, because the proof is then based on the Riemannian maximum principle on the indicatrix
\cite[Th. C]{JSV22} (in the case  that $L$ is Riemannian, an alternative proof could be given, based on the eigenvalues of the Laplacian on a round sphere, cf. \cite[Th. 5.17]{JSV22}).

\begin{thm}
	In Lorentzian signature, the solutions $(L,N)$  of the Palatini equations  \eqref{eq:M}, \eqref{eq:A} such that $N$ is defined on the whole causal cone of $L$ satisfy $
		\mathrm{Ric} = 0.$
	\end{thm}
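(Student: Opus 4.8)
The plan is to reduce \eqref{eq:M} to an elliptic equation on the indicatrix $S=L^{-1}(1)$ of $L$ at a fixed point $p\in M$ --- which in Lorentzian signature carries a positive-definite induced fundamental tensor and is thus a complete Riemannian manifold --- and then to observe that the restriction of $\mathrm{Ric}$ to $S$ solves a Schr\"odinger-type equation $\Delta_h\psi=(n-2)\psi+(\text{first order})$ whose zeroth-order coefficient has the sign that \emph{forbids} a nonzero solution vanishing at infinity, by the maximum principle. Here $n=\dim M$, and I take $n\geq 3$; the low-dimensional cases are easier and can be disposed of separately. Only equation \eqref{eq:M} and the hypothesis that $N$ (hence $\mathrm{Ric}$) is smooth up to the lightlike directions $\partial A$ are used; \eqref{eq:A} is not needed for this conclusion.

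First I would exploit homogeneity. Since $\mathrm{Ric}$ is $2$-homogeneous in $y$, \eqref{eq:M} reads $\mathrm{Ric}=\tfrac{L}{n+2}\,g^{\alpha\beta}\mathrm{Ric}_{\cdot\alpha\cdot\beta}$; evaluating on $\partial A$, where $L=0$ while $\mathrm{Ric}$ is still smooth, gives $\mathrm{Ric}|_{\partial A}=0$. Because $L$ is a proper Lorentz--Finsler metric, $\partial A=\{L=0\}$ is a smooth hypersurface with $dL\neq 0$ on it, so $L$ is a boundary defining function for the manifold-with-boundary $\bar A\setminus\mathbf 0$, and $\mathrm{Ric}=L\,u$ for a smooth $0$-homogeneous function $u:=\mathrm{Ric}/L$ on $\bar A\setminus\mathbf 0$. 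Substituting $\mathrm{Ric}=L\,u$ into \eqref{eq:M} and using $L_{\cdot\alpha\cdot\beta}=2g_{\alpha\beta}$, $L_{\cdot\alpha}=2y_\alpha$, $g^{\alpha\beta}g_{\alpha\beta}=n$ and Euler's identity $y^\alpha u_{\cdot\alpha}=0$, every cross term drops and one obtains the clean identity
\begin{equation*}
    L\, g^{\alpha\beta}u_{\cdot\alpha\cdot\beta}=(2-n)\,u \qquad \text{on } A .
\end{equation*}
Both sides extend continuously to $\partial A$ (here one uses that $g^{\alpha\beta}$ and $u$ extend smoothly up to $\partial A$, which is where the nondegeneracy of $g_v$ on $\partial A$ matters), and the left side vanishes there; hence $u|_{\partial A}=0$ as soon as $n\neq 2$.

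Next I would move to the indicatrix. Using the $g_v$-orthogonal splitting $T_v(T_pM)=\R v\oplus T_vS$, with $g_v(v,v)=L(v)=1$ and $g_v|_{T_vS}$ negative definite, the fibrewise operator $g^{\alpha\beta}(\cdot)_{\cdot\alpha\cdot\beta}$ restricted to $S$ equals $-\Delta_h$ plus a first-order term coming from the Cartan tensor, where $h:=-g_v|_{T_vS}$ is the (Riemannian) indicatrix metric: the radial contributions vanish by $0$-homogeneity, and the mean-curvature correction vanishes because, as recalled in Section~\ref{s2}, $g_v|_{T_vS}$ is the affine second fundamental form of $S$ with respect to $-v$, so the mean curvature vector is radial and is annihilated by $du$. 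Writing $\psi:=\mathrm{Ric}|_S=u|_S$ and inserting this into the restriction of \eqref{eq:M} (where $L|_S=1$) yields
\begin{equation*}
    \Delta_h\psi - b^i\partial_i\psi=(n-2)\,\psi \qquad\text{on } (S,h) ,
\end{equation*}
with $b$ smooth. For $n\geq 3$ the coefficient $n-2$ is positive, so $\psi$ cannot attain a positive maximum at an interior point $q$: there $\nabla\psi(q)=0$ and $\Delta_h\psi(q)\leq 0$, against $\Delta_h\psi(q)=(n-2)\psi(q)>0$. Since $\psi$ is bounded on $(S,h)$ and tends to $0$ along the ends of $S$ (which limit onto lightlike directions, where $u$ vanishes), its supremum is attained whenever positive; applying the argument to $\psi$ and to $-\psi$ forces $\psi\equiv 0$, i.e.\ $\mathrm{Ric}=L\,\psi\equiv 0$.

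The step I expect to be the main obstacle is the second one: establishing precisely how the vertical operator $g^{\alpha\beta}(\cdot)_{\cdot\alpha\cdot\beta}$ relates, on the indicatrix, to $\Delta_h$ --- in particular that it generates no zeroth-order term, that the Cartan contribution is purely first order (hence harmless for the maximum principle), and that the overall sign is as claimed --- because this is exactly what turns \eqref{eq:M} into the elliptic equation above. A secondary point is the global analysis on the noncompact $(S,h)$: one needs its completeness together with the boundary condition $u|_{\partial A}=0$ (equivalently, that $\psi$ extends smoothly by zero to the compactification of $S$), and it is here that the hypothesis ``$N$ defined on the whole cone'' --- and the exclusion $n\neq 2$ --- actually enter; the remaining case $n=2$ would be settled separately, e.g.\ with the help of \eqref{eq:A}.
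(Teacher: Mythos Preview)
Your proposal is correct and follows essentially the same route as the paper's proof (which is only sketched here, with details deferred to \cite[Lemma~5.13, Th.~C]{JSV22}): reduce \eqref{eq:M} to a second-order differential equation for $u=\mathrm{Ric}/L$, restrict it to the indicatrix $S=L^{-1}(1)$---which in Lorentzian signature carries a \emph{Riemannian} metric $h$---and kill $u$ by the maximum principle, the boundary condition at infinity coming precisely from the hypothesis that $N$ (hence $\mathrm{Ric}$) is smooth on the whole cone. Your observation that \eqref{eq:A} plays no role in this conclusion is also correct; the self-identified ``main obstacle'' (the precise identification of $g^{\alpha\beta}(\cdot)_{\cdot\alpha\cdot\beta}$ with $-\Delta_h$ modulo first-order Cartan terms and the asymptotic behaviour of $\psi$ along the ends of $S$) is exactly where the technical work in \cite{JSV22} lies.
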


\begin{rem}
 In the semi-Riemannian case, an equation type \eqref{eq:M} or Einstein vacuum, implies that the manifold is Einstein by Schur's theorem. This is a result for $\dim M\geq 3$ stating that, whenever $
		\mathrm{ric}^g_{\mu\nu}(x)=\lambda(x)\,g_{\mu\nu}(x)
		$
		for some function $\lambda$ on $M$   then $\lambda=\mathrm{constant}$.  In the Finslerian case, this is a major open question. Indeed, the weaker version of Schur's theorem where the hypotheses is the pointwise dependence of the  sectional\footnote{This was the original result by Schur, see the historic summary in \cite{V}. } or flag curvature, is known from Berwald's \cite{BerwaldSchur}. However, the  refined one for the scalar Ricci  was known only for Randers metrics \cite{Robles} and  other very particular cases. A recent result by F.F. Villase\~nor \cite{V} shows that  such a Finsler-type Schur result holds for a class of manifolds including weakly Landsberg ones. Noticeably, the proof is based on the conservation law associated to the invariance of Einstein-Hilbert functional by diffeomorphisms.   
\end{rem}

As in the case of semi-Riemannian Palatini \eqref{e_Palat}, given a pseudo-Finsler metric $L$, the solutions of
the affine equation \eqref{eq:A} have a fibered structure on the set of symmetric
solutions (now with fiber isomorphic to the space of anisotropic 
0-homogeneous 1-forms A), \cite[Th. A]{JSV22}. However, the uniqueness of the symmetric case is not trivial. Again, one can obtain  uniqueness, by using a fiberwise argument of analyticity (notice that  even non-smooth semi-Riemannian metrics are fiberwise analytic, as the metric at each point is a quadric, thus analytic). The highly original argument is valid for indefinite signature, as it will use crucially analiticity when $L$ vanishes \cite[Th. B]{JSV22}, and the solutions are assumed to be well defined therein, i.e. {\em proper}.  



\begin{thm}
Any analytic proper indefinite pseudo-Finsler metric
$L$ admits at most one analytic proper symmetric solution $N$
of the affine variational equation \eqref{eq:A}.
\end{thm}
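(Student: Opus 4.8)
The plan is to prove uniqueness directly: assume $N_{1}$ and $N_{2}$ are two analytic proper symmetric solutions of \eqref{eq:A} for the given $L$, and show $N_{1}=N_{2}$. Since the coefficients $C_{\alpha}$, $y_{\alpha}$ and $L$ occurring in \eqref{eq:A} depend only on the metric, and $\mathcal{Z}^{\mu}=\tfrac{1}{2}N^{\mu}_{\alpha}y^{\alpha}-\mathring{G}^{\mu}$ is affine in $N$ with the $\mathring{G}^{\mu}$-term independent of $N$, subtracting the two instances of \eqref{eq:A} cancels $\mathring{G}^{\mu}$ and $\mathrm{Lan}_{\mu}$ and leaves, for the $2$-homogeneous field $U^{\mu}:=\tfrac{1}{2}(N_{1}-N_{2})^{\mu}_{\alpha}y^{\alpha}$, a homogeneous linear equation
\[
U^{\alpha}_{\cdot\alpha\cdot\mu}+\Big(C_{\alpha}-\frac{3\,y_{\alpha}}{L}\Big)U^{\alpha}_{\cdot\mu}=0 .
\]
Because both connections are symmetric, i.e. $\torN^{k}_{ij}\equiv 0$, the difference tensor $D^{\mu}_{\nu}:=(N_{1}-N_{2})^{\mu}_{\nu}$ satisfies $\partial_{y^{\nu}}D^{\mu}_{\lambda}=\partial_{y^{\lambda}}D^{\mu}_{\nu}$, and Euler's theorem applied to the $1$-homogeneous $D$ then yields $D^{\mu}_{\nu}=\partial_{y^{\nu}}U^{\mu}$; hence $N_{1}=N_{2}$ is equivalent to $U\equiv 0$. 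As no horizontal derivatives appear, this is a fiberwise statement on each $T_{x}M$, with $x$ a mere analytic parameter: one must show that a $2$-homogeneous, fiberwise real-analytic solution (analytic also across the null cone) of the displayed equation vanishes identically.

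This is where the indefiniteness of $g$ is essential. On the fiber $T_{x}M$ the cone $\C_{x}=L_{x}^{-1}(0)$ is a genuine real-analytic hypersurface along which $dL_{x}\ne 0$ (non-degeneracy of $g$), so any analytic function on $T_{x}M\setminus\{0\}$ vanishing on $\C_{x}$ is divisible by $L_{x}$; properness of $L$, $N_{1}$ and $N_{2}$ is precisely what makes $U$ analytic near $\C_{x}$, so that this principle applies. Multiplying the equation by $L$ clears the pole and, restricting to $\C_{x}$, yields the constraint $y_{\alpha}U^{\alpha}_{\cdot\mu}=0$ there; contracting the equation with $y^{\mu}$ and using Euler's theorem gives $U^{\alpha}_{\cdot\alpha}=-2C_{\alpha}U^{\alpha}+6\,q/L$ with $q:=y_{\alpha}U^{\alpha}$, whose analyticity across $\C_{x}$ forces $q=L\,\tilde q$ with $\tilde q$ analytic. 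Combining these with the identities $\partial_{y^{\mu}}y_{\alpha}=g_{\mu\alpha}$ (a consequence of $C_{\alpha\beta\gamma}y^{\gamma}=0$) and $\partial_{y^{\mu}}L=2y_{\mu}$, one finds that $U$ is radial on $\C_{x}$, namely $U^{\mu}=2\tilde q\,y^{\mu}$ there, so that $U^{\mu}-2\tilde q\,y^{\mu}=L\,R^{\mu}$ for an analytic $R^{\mu}$.

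From here the plan is a bootstrap: insert $U^{\mu}=2\tilde q\,y^{\mu}+L\,R^{\mu}$ into the homogeneous equation, multiply by a suitable power of $L$ and restrict to $\C_{x}$ again; using only Euler's theorem, $\partial_{y^{\mu}}L=2y_{\mu}$, $C_{\alpha\beta\gamma}y^{\gamma}=0$ and divisibility by $L_{x}$, each round kills the current leading term and forces the next transverse Taylor coefficient of $U$ along $\C_{x}$ to vanish. Carried to all orders, this makes $U$ vanish to infinite order on $\C_{x}$; real-analyticity on the connected fiberwise domain together with $2$-homogeneity then propagates the vanishing to the whole fiber, yielding $U\equiv 0$ and hence $N_{1}=N_{2}$. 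The main obstacle is closing this induction: one must check that the recursion stays nondegenerate at every order, i.e. that no order-dependent factor obstructs solving for the next transverse jet of $U$ along $\C_{x}$; this is exactly where the specific shape of the pole $-3y_{\alpha}/L$ and the homogeneity/Cartan identities for $g$, $C$ and $L$ carry the argument, and it is also why the scheme is empty in definite signature, where $\C_{x}=\{0\}$ --- in agreement with the separate Laplacian--eigenvalue argument that the text invokes for the Riemannian sub-case.
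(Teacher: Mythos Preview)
Your strategy coincides with what the paper indicates: the text does not give a self-contained proof but refers to \cite{JSV22} (Th.~B), stressing only that the argument uses ``crucially analyticity when $L$ vanishes'' in indefinite signature --- precisely your divisibility-by-$L$ bootstrap along the null cone $\C_x$. The reduction to a homogeneous fiberwise linear equation for $U^\mu=\tfrac12(N_1-N_2)^\mu_\alpha y^\alpha$, the recovery of the full difference $D^\mu_\nu=U^\mu_{\cdot\nu}$ via torsion-freeness and Euler, and the first round (obtaining $q=L\tilde q$ and then $U^\mu-2\tilde q\,y^\mu=LR^\mu$ on $\C_x$) are all correct and in line with the intended method.

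The gap is the one you flag yourself: you have not closed the induction. Note first that your first step does \emph{not} yet give $U|_{\C_x}=0$ but only that $U$ is radial there, so ``$U$ vanishes to infinite order on $\C_x$'' is not what your decomposition $U^\mu=2\tilde q\,y^\mu+LR^\mu$ immediately feeds into; one must also kill $\tilde q$ (and its successive jets) at each stage. Substituting this ansatz back, the interaction of the pole $-3y_\alpha/L$ with the vertical derivatives of powers of $L$ produces, at each order $k$, an explicit $k$-dependent scalar coefficient in front of the leading transverse jet, and the whole argument hinges on computing it and checking it is never zero. Your closing paragraph asserts that the shape of the pole and the Cartan identities ``carry the argument'' but performs none of this computation. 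As written, then, the proposal is a correct and well-motivated outline of the paper's approach whose decisive inductive step --- which is the actual content of the theorem --- remains open.
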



\section*{Acknowledgments}
The author acknowledges warmly D. Azagra (UCM, Madrid) for his advice on smoothening in the proof of Th. \ref{t_splitting2},     F. Fern\'andez Villase\~nor (U.~Castilla la Mancha) for his comments on the manuscript, especially sections \ref{ss_nonlinearconncetions}  to \ref{ss_Finslerconnections} and~\ref{s6}, J.~Hedicke (U. Nijmegen)   and S. Nemirovski (U.~Wuppertal)  for their   comments on section \ref{s3_2},  M.A.~Javaloyes (U. Murcia)  and C. Pfeiffer (U. Bremen) for their reading of the manuscript, suggestions and references, and the referees, who found many misprints. The author is partially supported by the grant PID2024-156031NB-I00 funded by MICIU / AEI / 10.13039/501100011033 / ERDF / EU, as well as the framework IMAG/  Mar\'{\i}a de Maeztu,   CEX2020-001105-MCIN/ AEI/ 10.13039/501100011033.

\section*{Data Availability Statement}
Data sharing is not applicable to this article as no datasets were generated or analysed during the current study.

\end{document}